\newcommand\footnoteref[1]{\protected@xdef\@thefnmark{\ref{#1}}\@footnotemark}
\newcommand{\Expect}{\mathbf{E}}
\newcommand{\ignore}[1]{}
\newcommand{\cD}{\mu}
\newcommand{\cF}{\mathcal{F}}
\newcommand{\cI}{{\mathcal{I}}}
\newcommand{\cL}{{\mathcal{L}}}
\newcommand{\cM}{{\mathcal{M}}}
\newcommand{\cP}{\mathcal{P}}
\newcommand{\cS}{\mathcal{S}}
\newcommand{\eps}{\varepsilon}
\newcommand{\poly}{\mathrm{poly}}
\newcommand{\cO}{{\cal O}}
\newcommand{\calB}{{\cal B}}
\newcommand{\bo}{\mathsf{proj}_1}
\newcommand{\bB}{\mathbf{B}}
\newcommand{\bG}{\mathbf{G}}
\newcommand{\bI}{\boldsymbol{I}}
\newcommand{\bJ}{\boldsymbol{J}}
\newcommand{\bT}{\boldsymbol{T}}
\newcommand{\bZ}{\boldsymbol{Z}}
\newcommand{\NN}{\mathbb{N}}
\newcommand{\RR}{\mathbb{R}}
\newcommand{\ceil}[1]{\lceil#1\rceil}
\newcommand{\Exp}{\EX}
\newcommand{\EX}{\hbox{\bf E}}
\newcommand{\Sec}[1]{\hyperref[sec:#1]{\S\ref*{sec:#1}}} 
\newcommand{\Eqn}[1]{\hyperref[eq:#1]{(\ref*{eq:#1})}} 
\newcommand{\Fig}[1]{\hyperref[fig:#1]{Fig.\,\ref*{fig:#1}}} 
\newcommand{\Tab}[1]{\hyperref[tab:#1]{Tab.\,\ref*{tab:#1}}} 
\newcommand{\Thm}[1]{\hyperref[thm:#1]{Theorem\,\ref*{thm:#1}}} 
\newcommand{\Fact}[1]{\hyperref[fact:#1]{Fact\,\ref*{fact:#1}}} 
\newcommand{\Lem}[1]{\hyperref[lem:#1]{Lemma\,\ref*{lem:#1}}} 
\newcommand{\Prop}[1]{\hyperref[prop:#1]{Prop.~\ref*{prop:#1}}} 
\newcommand{\Cor}[1]{\hyperref[cor:#1]{Corollary~\ref*{cor:#1}}} 
\newcommand{\Conj}[1]{\hyperref[conj:#1]{Conjecture~\ref*{conj:#1}}} 
\newcommand{\Def}[1]{\hyperref[def:#1]{Definition~\ref*{def:#1}}} 
\newcommand{\Alg}[1]{\hyperref[alg:#1]{Alg.~\ref*{alg:#1}}} 
\newcommand{\Ex}[1]{\hyperref[ex:#1]{Example~\ref*{ex:#1}}} 
\newcommand{\Clm}[1]{\hyperref[clm:#1]{Claim~\ref*{clm:#1}}} 
\newcommand{\Inv}[1]{\hyperref[inv:#1]{Invariant~\ref*{inv:#1}}} 
\newcommand{\Rem}[1]{\hyperref[rem:#1]{Remark~\ref*{rem:#1}}} 
\newcommand{\Obs}[1]{\hyperref[obs:#1]{Observation~\ref*{obs:#1}}} 
\newcommand{\Step}[1]{\hyperref[step:#1]{Step~\ref*{step:#1}}} 
\newcommand{\AH}{\mathbf{A}_{n,d}}
\newcommand{\pathtester}{\texttt{grid-path-tester}}
\newcommand{\RN}{\mathbb{R}}
\newcommand{\var}{\mathsf{var}}
\newcommand{\dist}{\mathtt{dist}}
\newcommand{\hM}{\widehat{M}}
\newcommand{\nofan}{{\sf Centrist}}
\newtheorem{theorem}{Theorem}[section]
\newtheorem{example}{Example}
\newtheorem{definition}[theorem]{Definition}
\newtheorem{lemma}[theorem]{Lemma}
\newtheorem{remark}[theorem]{Remark}
\newtheorem{fact}[theorem]{Fact}
\newtheorem{claim}[theorem]{Claim}
\renewcommand{\tilde}[1]{\widetilde{#1}}
\renewcommand{\Pr}{\mathbf{Pr}}
\newcommand{\map}{\mathsf{box}}
\newcommand{\bxx}{\mathsf{box}}
\newcommand{\fd}{f^{\mathsf{disc}}}
\begin{document}


\title{Domain Reduction for Monotonicity Testing: \\ A $o(d)$ Tester for Boolean Functions in $d$-Dimensions}


\author{
    Hadley Black\thanks{Department of Computer Science, University of California, Los Angeles. Email: \href{mailto:hablack@cs.ucla.edu}{\nolinkurl{hablack@cs.ucla.edu}}. Part of this work was done while the author was at University of California, Santa Cruz.}
    \and
    Deeparnab Chakrabarty\thanks{Department of Computer Science, Dartmouth College. Email: \href{mailto:deeparnab@dartmouth.edu}{\nolinkurl{deeparnab@dartmouth.edu}}. Supported by NSF CCF-1813053.}
    \and 
    C. Seshadhri\thanks{Department of Computer Science, University of California, Santa Cruz. Email: \href{mailto:sesh@ucsc.edu}{\nolinkurl{sesh@ucsc.edu}}. Supported by NSF TRIPODS CCF-1740850, CCF-1813165, and ARO Award W911NF191029.}
}
\date{}
\maketitle
\begin{abstract} 

We describe a $\tilde{O}(d^{5/6})$-query monotonicity
tester for Boolean functions $f \colon [n]^d \to \{0,1\}$ on the $n$-hypergrid. This is the first
$o(d)$ monotonicity tester with query complexity independent of $n$. 
Motivated by this independence of $n$, we initiate the study of monotonicity testing of 
measurable Boolean functions $f\colon\RR^d \to \{0,1\}$ over the continuous domain, where the distance
is measured with respect to a product distribution over $\RR^d$.
We give a $\tilde{O}(d^{5/6})$-query monotonicity tester for such functions. \smallskip

Our main technical result is a \emph{domain reduction theorem}
for monotonicity. For any function $f\colon[n]^d \to \{0,1\}$, 
let $\eps_f$ be its distance to monotonicity. 
Consider the restriction $\hat{f}$ of the function on a random $[k]^d$ sub-hypergrid of the original domain.
We show that for $k = \poly(d/\eps_f)$, the expected distance of the restriction is $\Exp[\eps_{\hat{f}}] = \Omega(\eps_f)$. 
Previously, such a result was only known for $d=1$ (Berman-Raskhodnikova-Yaroslavtsev, STOC 2014).
Our result for testing Boolean functions over $[n]^d$ then follows by applying the $d^{5/6}\cdot \poly(1/\eps,\log n, \log d)$-query hypergrid tester of Black-Chakrabarty-Seshadhri (SODA 2018).

To obtain the result for testing Boolean functions over $\RR^d$, 
we use standard measure theoretic tools to
reduce monotonicity testing of a measurable function $f$
to monotonicity testing of a discretized version of $f$ over a hypergrid domain $[N]^d$ for large, but finite, $N$ (that may depend on $f$).
The independence of $N$ in the hypergrid tester is crucial 
to getting the final tester over $\RR^d$.

\end{abstract}

\thispagestyle{empty}
\clearpage
\setcounter{page}{1}
\newpage

\section{Introduction}

Monotonicity testing is a fundamental problem in property testing. 
Let $(D, \prec)$ be a partially ordered set (poset) and let $R$ be a total order. 
A function $f\colon D \to R$ is monotone if $f(x) \leq f(y)$ whenever $x\prec y$. 
The hypercube, $\{0,1\}^d$ and the hypergrid $[n]^d$ have been the most studied posets in monotonicity testing, where
$\prec$ denotes the coordinate-wise partial ordering.
The Hamming distance between two functions $f$ and $g$ is $\dist(f,g) := \Pr_{x \sim D}[f(x) \neq g(x)]$ where $x$ is drawn uniformly from the domain. 
The distance of $f$ to monotonicity, denoted $\eps_f$, is its distance to the nearest monotone function.
That is, $\eps_f := \min_{g \in \cM} \dist(f,g)$, 
where $\cM$ is the set of all monotone functions. 
A monotonicity tester is a randomized algorithm that makes queries to $f$ and accepts with probability $\geq 2/3$ if the function is monotone, and rejects with probability $\ge 2/3$ if $\eps_f \geq \eps$, where $\eps\in (0,1)$ is an input parameter. 
The challenge is to determine the minimum query complexity of a monotonicity tester.


One of the earliest results in property testing is the $O(d/\eps)$-query ``edge-tester'' 
due to Goldreich et al.~\cite{GGLRS00} (see also~\cite{Ras99})
for testing monotonicity of Boolean functions over the hypercube, that is, $f\colon \{0,1\}^d \to \{0,1\}$.
In the last few years, considerable work~\cite{ChSe13-j,ChenST14,ChenDST15,KMS15,BeBl16,Chen17} has improved our understanding of Boolean monotonicity testing on the hypercube domain. In particular, Khot, Minzer, and Safra~\cite{KMS15} give an $\tilde{O}(\sqrt{d}/\eps^2)$ query\footnote{Throughout the paper $\widetilde{O}$ hides $\log(d/\eps)$ factors.}, non-adaptive tester, and Chen, Waingarten, and Xie~\cite{Chen17} show that any tester (even adaptive) must make
$\tilde{\Omega}(d^{1/3})$ queries. 
In contrast, for real-valued functions over the hypercube $f\colon\{0,1\}^d \to \mathbb{R}$, 
the complexity is known to be $\Theta(d/\eps)$~\cite{DGLRRS99,BBM11,ChSe13,ChSe14}, that is, linear in $d$. 


The problem of monotonicity testing Boolean functions $f\colon [n]^d \to \{0,1\}$ over hypergrids is not as well understood.
Dodis et al.~\cite{DGLRRS99} (with improvements by Berman, Raskhodnikova, and Yaroslavtsev~\cite{BeRaYa14}, henceforth BRY) give an $\tilde{O}(d/\eps)$-query tester.
The important feature to note is the {\em independence} of $n$.
Contrast this, again, with the real-valued case; monotonicity testing of functions $f\colon [n] \to \mathbb{R}$ requires $\Omega(\log n)$ queries~\cite{EKK+00,E04}. 
Recently, the authors~\cite{BlackCS18} describe an $\tilde{O}(d^{5/6}\log^{4/3} n ~\eps^{-4/3})$-query tester. Although the dependence on $d$ is sublinear, there is a dependence on $n$. The following question has remained open:
%
%
\emph{Is there a monotonicity tester for functions $f\colon [n]^d \to \{0,1\}$, whose query complexity is independent of $n$ and sublinear in $d$?}
%
\noindent One of the main outcomes of this work is an affirmative answer to this question.

\begin{theorem}\label{thm:main-informal-grid}
	There is a randomized algorithm that, given a parameter $\eps \in (0,1)$ and 
    query access to any Boolean function $f\colon [n]^d\to\{0,1\}$ defined over the hypergrid, makes $\tilde{O}(d^{5/6}\eps^{-4/3})$ non-adaptive queries to $f$ and (a) always accepts if $f$ is monotone, and (b) rejects with probability $> 2/3$ if $\eps_{f} > \eps$.
\end{theorem}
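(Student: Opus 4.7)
The plan is a two-step reduction. First, I would use the domain reduction theorem to replace the unknown-sized hypergrid $[n]^d$ by a sub-hypergrid $[k]^d$ of controllable size $k = \poly(d/\eps)$. Second, I would run the BCS18 hypergrid tester, whose complexity is $\tilde{O}(d^{5/6}\log^{4/3}(k)\,\eps^{-4/3})$, on the restricted function. Since $\log k = O(\log(d/\eps))$, this factor is absorbed into the $\tilde{O}(\cdot)$ notation, yielding the claimed $\tilde{O}(d^{5/6}\eps^{-4/3})$ bound.

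Concretely, I would sample the sub-hypergrid by picking, independently for each coordinate $i \in [d]$, a uniformly random subset $S_i \subseteq [n]$ of size $k$, and let $\pi_i\colon[k]\to[n]$ be the order-preserving embedding of $S_i$. Define $\hat{f}\colon[k]^d \to \{0,1\}$ by $\hat{f}(x_1,\ldots,x_d) = f(\pi_1(x_1),\ldots,\pi_d(x_d))$; each query to $\hat{f}$ is simulated by a single query to $f$. Monotonicity is preserved under coordinate-wise restriction, so if $f$ is monotone then $\hat{f}$ is monotone and the BCS18 tester accepts with probability $1$. For soundness, the domain reduction theorem guarantees $\Exp[\eps_{\hat{f}}] = \Omega(\eps_f) \ge \Omega(\eps)$ for $k = \poly(d/\eps)$. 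Since $\eps_{\hat{f}} \le 1$, a reverse Markov argument yields $\Pr[\eps_{\hat{f}} \ge c\eps] \ge c\eps$ for a constant $c > 0$; by repeating the entire procedure (sample $\hat{f}$, run the BCS18 tester with distance parameter $c\eps$) $O(1/\eps)$ times and rejecting if any trial rejects, the overall rejection probability can be boosted above $2/3$. The resulting query cost is $\tilde{O}(d^{5/6}\eps^{-4/3})$, absorbing the $1/\eps$ repetition factor into the ambient $\poly(1/\eps)$ already present in BCS18's bound. (A more careful amplification, or a stronger high-probability version of the domain reduction theorem, can tighten the $\eps$-dependence, but the $d$-dependence is unaffected.)

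The main obstacle is the domain reduction theorem itself: showing that restricting $f\colon[n]^d\to\{0,1\}$ to a random $[k]^d$ sub-hypergrid preserves the distance to monotonicity in expectation, when $k = \poly(d/\eps_f)$ and crucially \emph{independent} of $n$ and $d^{?}$ in the exponent. The $d=1$ case (BRY) rests on combinatorial arguments about longest non-decreasing subsequences, which do not directly lift to general $d$ because distance to monotonicity in high dimensions is a global, Dilworth-type quantity that does not decompose along coordinates. A natural attempt is to apply the $1$-dimensional bound coordinate-by-coordinate; the danger is accumulating a factor of $d$ in the loss, which would be catastrophic. Avoiding this likely requires either a hybrid argument that tracks the distance through all $d$ coordinate restrictions simultaneously, or a direct structural argument relating monotone repairs of $f$ to monotone repairs of $\hat{f}$ via randomized rounding of the embeddings $\pi_i$. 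This is where the bulk of the technical work will lie, and it is the content promised by the abstract's ``main technical result.''
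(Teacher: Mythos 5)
Your overall plan is exactly the paper's: sample a random $[k]^d$ sub-hypergrid with $k = \poly(d/\eps)$, invoke the domain reduction theorem to conclude $\Expect[\eps_{f_{\bT}}] \geq \eps/2$, and run the $\tilde{O}(d^{5/6}\log^{4/3}(k)\,\eps^{-4/3})$ tester of \cite{BlackCS18} on the restriction, absorbing $\log k = O(\log(d/\eps))$ into the $\tilde{O}$. The completeness argument and the observation that the domain reduction theorem is the real technical content are both correct.

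However, there is a genuine gap in the soundness amplification: your reverse-Markov argument gives $\Pr[\eps_{f_{\bT}} \geq c\eps] \geq c\eps$, so you must repeat the (sample, test) procedure $O(1/\eps)$ times, and each trial runs the tester with distance parameter $\Theta(\eps)$, costing $\tilde{O}(d^{5/6}\eps^{-4/3})$ queries. The total is therefore $\tilde{O}(d^{5/6}\eps^{-7/3})$, \emph{not} the claimed $\tilde{O}(d^{5/6}\eps^{-4/3})$ — the extra $1/\eps$ factor cannot be ``absorbed,'' since in this paper $\tilde{O}$ hides only $\log(d/\eps)$ factors. The paper explicitly notes (in a remark) that your simpler scheme yields $\eps^{-7/3}$, and closes the gap using Levin's work investment strategy: for each scale $\ell = 1,\ldots,O(\log(1/\eps))$ it runs $Q_\ell = \Theta(\ell^2 2^{-\ell}/\eps)$ independent trials with distance parameter $\eps_\ell = 2^{-\ell}$, and shows via the identity $\int_0^1 \Pr[\eps_{f_{\bT}} \geq t]\,dt = \Expect[\eps_{f_{\bT}}] \geq \eps/2$ that some scale $\ell^\ast$ has $\Pr[\eps_{f_{\bT}} \geq 2^{-\ell^\ast}] \geq \Omega(2^{\ell^\ast}\eps/(\ell^\ast)^2)$; summing $Q_\ell \cdot \tilde{O}(d^{5/6} 2^{4\ell/3})$ over $\ell$ then gives $\tilde{O}(d^{5/6}\eps^{-4/3})$. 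Without this (or a variance/high-probability strengthening of the domain reduction theorem), your proof establishes a weaker query bound than the one stated. A minor secondary point: the domain reduction theorem is stated for multisets of $k$ i.i.d.\ samples per coordinate rather than uniformly random $k$-subsets, so you should either match that sampling model or justify the translation.
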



\noindent \textbf{Continuous Domains.} To the best of our knowledge,  monotonicity testing has so far been restricted to discrete domains. 
What can one say about monotonicity testing when the domain is $\RR^d$? Indeed, for functions whose range is $\RR$, the aforementioned lower bound of $\Omega(\log n)$ 
precludes any such tester (with finite query complexity) even in one dimension. 
On the other hand, the independence of $n$ in \Thm{main-informal-grid} (and indeed the results of Dodis et al.~\cite{DGLRRS99} and BRY~\cite{BeRaYa14}) suggests the possibility of a monotonicity tester
for Boolean functions $f\colon \RR^d\to\{0,1\}$. 
In this work, we spell out the natural definitions for monotonicity testing over $\RR^d$, and show that $o(d)$-testers
do exist when the distance is with respect to any product measure.




\begin{theorem} [\emph{Informal}, Formal version: \Thm{main-cont}]
	\label{thm:main-informal-cont} 
	There is a one-sided, non-adaptive $\tilde{O}(d^{5/6}\eps^{-4/3})$-query monotonicity tester 
	for {\em measurable} Boolean functions $f\colon\RN^d \to \{0,1\}$ with respect to arbitrary product measures\footnote{Each $\cD_i$ is described by a non-negative Lebesgue
    integrable function over $\RR$, whose integral over $\RR$ is $1$.}  $\cD = \prod_i \cD_i$.
\end{theorem}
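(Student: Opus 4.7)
The plan is to reduce \Thm{main-informal-cont} to \Thm{main-informal-grid} by discretizing $f$ at a resolution $N$ that is allowed to depend on $f$; the enabling feature is that the query complexity in \Thm{main-informal-grid} is independent of $N$. Using the quantile function of each marginal $\cD_i$, partition $\RR$ into $N$ intervals $I_{i,1}<\cdots<I_{i,N}$ of $\cD_i$-measure $1/N$ (splitting atoms by tie-breaking). This induces a partition of $\RR^d$ into $N^d$ equal-$\cD$-mass boxes $\{B_\bj\}_{\bj\in[N]^d}$. For each pair $(i,j)$ sample an independent representative $r_{i,j}\sim\cD_i|_{I_{i,j}}$; set $\map(\bj):=(r_{1,j_1},\dots,r_{d,j_d})$ and $\fd(\bj):=f(\map(\bj))$. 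The key structural property is that representatives are shared coordinatewise: $\bj\preceq\bj'$ forces $r_{i,j_i}\le r_{i,j'_i}$ in every coordinate, so $\map(\bj)\preceq\map(\bj')$ in $\RR^d$. Hence if $f$ is pointwise monotone---an assumption we may impose WLOG for any a.e.-monotone Boolean $f$ by replacing $\{f=1\}$ with its upward-closed measure-equivalent representative---then $\fd$ is monotone for \emph{every} realization of the representatives.

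I next show $\Exp_\map[\eps_\fd]\to\eps_f$ as $N\to\infty$. For any monotone $h\colon[N]^d\to\{0,1\}$, the lift $g_h(\bx):=h(\bj(\bx))$ is monotone and measurable on $\RR^d$; since $\map(\bj)\sim\cD$ when $\bj$ is drawn uniformly on $[N]^d$, a Fubini computation gives $\Exp_\map[\dist(\fd,h)]=\dist(f,g_h)\ge\eps_f$. Since $\eps_\fd\le\dist(\fd,h)$ for every monotone $h$, taking $\Exp_\map$ and then infimizing over $h$ yields $\Exp_\map[\eps_\fd]\le\min_h\dist(f,g_h)$, and the right-hand side $\to\eps_f$ as $N\to\infty$ by a standard step-function approximation of monotone measurable functions. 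For the reverse direction, fix $\map$ and let $h^*=h^*(\map)$ attain $\eps_\fd$; comparing $\dist(\fd,h^*)$ with $\dist(f,g_{h^*})\ge\eps_f$ termwise over boxes, the discrepancy is bounded by $\Exp_\bj\bigl|\mathbb{I}[f(\map(\bj))\neq h^*(\bj)]-\Pr_{\bx\sim\cD|_{B_\bj}}[f(\bx)\neq h^*(\bj)]\bigr|$, whose $\map$-expectation equals $\Exp_\bj[2p_\bj(1-p_\bj)]$ \emph{independently} of the adversarial choice of $h^*$, where $p_\bj:=\cD(\{f=1\}\cap B_\bj)/\cD(B_\bj)$. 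By the Lebesgue density theorem, $p_{\bj(\bx)}\to\mathbb{I}[f(\bx)=1]$ for $\cD$-a.e.\ $\bx$, so $\Exp_\bj[p_\bj(1-p_\bj)]\to 0$ by dominated convergence. Concentration of $\eps_\fd$ around its mean then follows from McDiarmid on the $Nd$ independent representatives (altering one $r_{i,j}$ changes $\fd$ on $\le N^{d-1}$ boxes, shifting $\eps_\fd$ by $\le 1/N$), giving a sub-Gaussian tail of width $O(\sqrt{d/N})$. Choosing $N$ sufficiently large (depending on $f$) yields $\eps_\fd\ge\eps_f-\eps/4$ with probability $\ge 11/12$ over $\map$.

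Finally, run the tester of \Thm{main-informal-grid} on $\fd$ with parameter $\eps/2$, simulating each query at $\bj$ by a single query to $f$ at $\map(\bj)\in\RR^d$. The total query complexity is $\tilde{O}(d^{5/6}\eps^{-4/3})$, independent of $N$. For monotone $f$, $\fd$ is monotone and the one-sided tester always accepts; for $\eps_f\ge\eps$ we have $\eps_\fd\ge\eps/2$ with probability $\ge 11/12$, so the tester rejects with probability $\ge (11/12)\cdot(2/3)>3/5$, which can be boosted to $2/3$ by $O(1)$ independent repetitions. The main obstacle is that neither the convergence $\Exp_\bj[p_\bj(1-p_\bj)]\to 0$ nor the McDiarmid concentration admits a rate uniform over all measurable $f$---both depend on the local structure of $\{f=1\}$. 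This is precisely why $N$ must be allowed to depend on $f$, and the $N$-independence of \Thm{main-informal-grid} is indispensable for absorbing this discretization into the final query bound.
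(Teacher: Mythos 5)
Your high-level plan --- discretize at an $f$-dependent resolution $N$ and lean on the $N$-independence of \Thm{main-informal-grid} --- matches the paper's motivation, and your distance-preservation analysis (quantile boxes, the density argument showing $\Expect[p(1-p)]\to 0$, McDiarmid over the $Nd$ independent representatives) is a legitimate alternative to the paper's route, which instead proves a deterministic discretization lemma (\Lem{good-approx}, via approximation of measurable sets by finite unions of grid boxes) and then a continuous domain reduction theorem (\Thm{cont_domain_reduction}). However, there is a genuine gap in turning your reduction into an algorithm: your tester must actually \emph{construct} the grid $[N]^d$ --- it needs $N$ to define the quantile intervals $I_{i,j}$, to draw the conditional representatives $r_{i,j}\sim \cD_i|_{I_{i,j}}$, and to determine which points of $[N]^d$ the black-box grid tester queries. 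As you yourself observe, no rate uniform over all measurable $f$ exists for either the density convergence or the concentration, so the correct $N$ depends on the unknown function $f$; ``choosing $N$ sufficiently large (depending on $f$)'' is not an executable step for an algorithm with only query access to $f$. A second, related issue is the access model: \Thm{main-cont} assumes only the ability to draw independent samples from each $\cD_i$, whereas your construction requires knowing the quantile function of each $\cD_i$ and sampling from conditional distributions on quantile intervals.

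The paper resolves exactly this by routing all $f$-dependence through the analysis rather than the algorithm: the tester samples $k=\poly(d/\eps)$ i.i.d.\ points from each $\cD_i$ --- a quantity independent of $f$ --- restricts $f$ to the resulting $[k]^d$ grid, and runs the discrete tester there; the $f$-dependent $N$ appears only inside the proof of \Thm{cont_domain_reduction}, where $\fd$ on $[N]^d$ is compared to $f$ and the $N$-independent discrete domain reduction (\Thm{dir_domain_reduction}) is applied. Your argument can be repaired by noting that composing your randomized quantile discretization with the grid tester's own first step (sampling $k$ uniform coordinates from $[N]$ per dimension) collapses distributionally to ``$k$ i.i.d.\ samples from $\cD_i$ per dimension,'' at which point $N$ disappears from the algorithm --- but carrying this out is essentially reconstructing the paper's proof. (A minor further point: the ``WLOG upward-closed representative'' step is unnecessary, since the paper's notion of monotone is pointwise and your $\map$ is order-preserving, so completeness already holds for every realization of the representatives.)
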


To gain perspective, the reader may restrict attention to functions defined over the continuous cube $[0,1]^d$, and assume the uniform measure $\cD$ on this cube. This is the natural generalization of property testing on the domains $\{0,1\}^d$ and $[n]^d$ as described above.
The only restriction on the function we are testing is that the set of points where the function takes value $1$ (or $0$) must be (Lebesgue)-measurable. 
The distance between two functions $\dist(f,g) := \Pr_{x \sim \cD}[f(x)\neq g(x)]$ is the measure of the points at which they differ.
The distance to monotonicity of a function $f$ is $\inf_{g \in \cM} \dist(f,g)$ where $\cM$ is the set of all monotone functions. (In general, we use any measure to define distance.
For instance, we can test monotonicity of functions $f\colon\RR^d \to \{0,1\}$ over the Gaussian measure.) 

Note that the result of \Thm{main-informal-cont} holds
for all measurable functions, with no dependence on surface area or "complexity'' of $f$.
This can be contrasted with the recent result of De, Mossel, and Neeman~\cite{DeMN19}, who showed that Junta testing of Boolean functions $f\colon\RR^d \to \{0,1\}$ over the Gaussian measure requires some dependence on the surface area of $f$. 

Given the proof techniques for \Thm{main-informal-grid}, the proof of \Thm{main-informal-cont} follows
from standard measure theoretic methods. Nonetheless, we believe that there is a useful conceptual
message in \Thm{main-informal-cont}. It gives the natural ``limit" of monotonicity testing
for hypergrids $[n]^d$, as $n \rightarrow \infty$. This result also underscores the significance of getting testers independent of $n$ (for hypergrids), since it leads to testers for all measurable functions.

\subsection{Domain Reduction}

\paragraph{Discrete Hypergrid $[n]^d$.}
A natural approach to tackle Boolean monotonicity testing over the hypergrid is to try reducing it to Boolean monotonicity testing over the hypercube. 
For a function $f$ over $[n]^d$, consider 
the restriction $\hat{f}$ to a random hypercube in this hypergrid.
More precisely, for each dimension $i\in [d]$, sample two independent u.a.r. values $a_i < b_i$ in $[n]$ and 
let $\hat{f}$ be the restriction of $f$ on the hypercube formed by the Cartesian product $\prod_{i=1}^d \{a_i, b_i\}$.
If the expectation of $\eps_{\hat{f}}$ is $\Omega(\eps_f)$, then we obtain a hypergrid tester
by first reducing our domain to a random hypercube and then simply applying the best known monotonicity tester on the hypercube. 
However, we show that this does not work. In \Sec{lower}, we describe a function $f\colon [n]^d \to \{0,1\}$ 
such that $\eps_f = \Omega(1)$, but the restriction of $f$ on a random hypercube is monotone with probability $1 - \Theta(1/d)$ (see \Thm{lower_bound}). 

Nonetheless, one can consider the question of reducing the domain to a $[k]^d$ hypergrid, for
some parameter $k \ll n$, by sampling $k$ i.i.d. uniform elements of $[n]$ across each dimension. 
For $k$ {\em independent} of $n$, can we lower bound 
the expected distance of the function restricted to a random $[k]^d$ hypergrid?
BRY studied this question for the $d=1$ case (the line domain), and
prove that this is indeed possible~\cite{BeRaYa14}.
Our main technical result is a domain reduction theorem for all $d$,
by setting $k = \poly(d/\eps_f)$. 
That is, we show that \emph{if $k = \Theta((d/\eps_f)^7)$, then the expected distance to monotonicity of $f$ restricted to a random $[k]^d$ hypergrid is $\Omega(\eps_f)$.}

For a precise statement, let us fix a function $f\colon [n]^d \to \{0,1\}$.
Construct $d$ random (multi-) sets $T_1,\ldots,T_d \subseteq [n]$, each formed by taking $k$ i.i.d. uniform samples from $[n]$. Define $\bT := T_1 \times \cdots \times T_d$ and let $f_{\bT}$ denote $f$ restricted to $\bT$.
(We treat duplicate elements of a multi-set as being distinct copies of that element, which are then treated as immediate neighbors in the total order.) 

\begin{theorem} [Domain Reduction Theorem for Hypergrids] \label{thm:dir_domain_reduction} Let $f\colon [n]^d \to \{0,1\}$ be any function 
	and let $k \in \mathbb{Z}^+$ be a positive integer. If $\bT = T_1 \times \cdots \times T_d$ is a randomly chosen sub-grid, where for each $i \in [d]$, $T_i$ is a (multi)-set formed by taking $k$ i.i.d. samples from the uniform distribution on $[n]$, then
	
	\[\Expect_{\bT}\left[\eps_{f_{\bT}}\right] \geq \eps_f - \frac{C \cdot d}{k^{1/7}}\]
	\noindent
	where $C > 0$ is a universal constant. In particular, if $k \geq \left(\frac{2Cd}{\eps_f}\right)^7$, then $\Expect_{\bT}\left[\eps_{f_{\bT}}\right] \geq \eps_f/2$.
\end{theorem}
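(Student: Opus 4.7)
The plan is a hybrid (telescoping) argument over the $d$ coordinates, reducing the problem to a single-coordinate version of itself. For $i = 0, 1, \ldots, d$, let $\bT^{(i)} := T_1 \times \cdots \times T_i \times [n]^{d-i}$ and $f^{(i)} := f|_{\bT^{(i)}}$, so that $f^{(0)} = f$ and $f^{(d)} = f_\bT$. By linearity of expectation,
\[
\eps_f - \Expect_\bT\bigl[\eps_{f_\bT}\bigr] \;=\; \sum_{i=0}^{d-1} \Expect\bigl[\eps_{f^{(i)}} - \eps_{f^{(i+1)}}\bigr].
\]
It thus suffices to prove a \emph{single-coordinate reduction lemma}: for any finite poset $Y$ and any Boolean function $g\colon Y\times [n]\to\{0,1\}$, a random multiset $T\subseteq [n]$ of $k$ i.i.d.\ uniform samples satisfies $\Expect_T[\eps_{g|_{Y\times T}}] \geq \eps_g - C/k^{1/7}$. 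The case $|Y|=1$ is exactly the theorem of BRY for the line. Applying this lemma $d$ times, once per coordinate (with $Y$ equal to the already-restricted product poset at each stage), and summing yields the theorem.

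The main technical content is the single-coordinate lemma. The subtlety is that $\eps_g$ is a global quantity on $Y\times [n]$ and is \emph{not} equal to the average fiber distance $\Expect_{y\in Y}[\eps_{g(y,\cdot)}]$: the average controls only the axis-aligned component of the distance to monotonicity and can be much smaller than $\eps_g$. Thus a naive line-by-line application of BRY does not suffice. The plan is instead to use the matching characterization of distance to monotonicity valid for Boolean functions (K\"onig's theorem applied to the bipartite violating-pairs graph): $\eps_g\cdot |Y|\cdot n$ equals the maximum size of a matching $M$ of pairs $(u,v)=((y_u,\alpha),(y_v,\beta))$ with $u\prec v$ and $g(u)=1>0=g(v)$. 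From $M$ I will construct, with high probability over $T$, a matching $M'$ in $g_T$ of size at least $(\eps_g - C/k^{1/7})\cdot |Y|\cdot k$ by projecting each pair: replace $(u,v)\in M$ by $((y_u,\alpha^+),(y_v,\beta^-))$, where $\alpha^+$ is the smallest element of $T$ with $\alpha^+\geq \alpha$ and $\beta^-$ is the largest with $\beta^-\leq \beta$. The projected pair lies in $Y\times T$, is order-comparable whenever $\alpha^+\leq\beta^-$, and is violating whenever $g(y_u,\alpha^+)=1$ and $g(y_v,\beta^-)=0$. The one-dimensional BRY argument, applied on each fiber $g(y,\cdot)$, bounds the expected fraction of projected pairs that fail the last two conditions.

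The main anticipated obstacle is ensuring that the projected pairs form a genuine matching (are vertex-disjoint in $g_T$) while simultaneously controlling the BRY line-wise rounding loss. My plan is to thin $M$ in advance to a sub-matching whose $[n]$-coordinates are well-separated at some scale $\Delta$, losing a $\mathrm{poly}(\Delta/n)$ factor; at this scale, distinct matching endpoints generically round to distinct elements of $T$. Optimizing $\Delta$ against the BRY rounding-failure loss and applying concentration/union-bound arguments uniformly over the (potentially many) fibers of $Y$ should yield the $C/k^{1/7}$ rate, with the exponent $1/7$ arising as the solution of this trade-off. The most delicate bookkeeping is managing all three sources of loss (rounding failures on fibers, disjointness of projected pairs after rounding, and uniformity of the BRY bound across fibers) simultaneously so that the $1/k^{1/7}$ rate holds with no hidden dependence on $|Y|$ or on $\eps_g$.
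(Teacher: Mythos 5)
Your top-level reduction is the same as the paper's: telescope over the $d$ coordinates and prove a single-coordinate lemma asserting $\Expect_T[\eps_{g|_{Y\times T}}] \geq \eps_g - C/k^{1/7}$, starting from the matching characterization of $\eps_g$. The gap is in the mechanism you propose for the single-coordinate lemma. Your plan is to \emph{round} each matched pair $((y_u,\alpha),(y_v,\beta))$ to $((y_u,\alpha^+),(y_v,\beta^-))$ with $\alpha^+,\beta^-\in T$, and to claim that ``the one-dimensional BRY argument, applied on each fiber, bounds the expected fraction of projected pairs'' for which $g(y_u,\alpha^+)\neq 1$ or $g(y_v,\beta^-)\neq 0$. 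Nothing bounds this. BRY's theorem says nothing about preservation of function values under rounding to nearby sampled points: on a fiber where the $1$-points of the matching are interleaved with $0$-points, $g(y_u,\alpha^+)$ is $0$ a constant fraction of the time, so rounding loses a constant \emph{multiplicative} fraction of $M$ per coordinate, which after $d$ coordinates gives $\eps_f/2^{\Theta(d)}$ rather than $\eps_f - Cd/k^{1/7}$. The correct mechanism (both in BRY and in the paper) is different: keep only pairs both of whose sampled coordinates actually land in $T$, and then \emph{re-match} among the surviving endpoints using the fact that for a Boolean function every retained lower endpoint below a retained upper endpoint is a violation; the loss is then an additive discrepancy term of order $\sqrt{k\log k}$ per line.

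Even granting the re-matching mechanism, your proposal does not engage with the obstruction that is the actual technical heart of the theorem. Decomposing $M$ by lines $\ell$ (fibers of $Y$ containing the lower endpoints), the upper endpoints of $M^{(\ell)}$ need not have distinct sampled coordinates --- they can all lie in a single slice $S_a$, in which case the whole line's contribution survives only with probability $k/n$, giving $\Expect_T[|M^{(\ell)}_T|]\approx (k/n)^2|M^{(\ell)}|$ instead of $(k/n)|M^{(\ell)}|$, and the per-coordinate additive bound fails. The paper shows such ``stacks'' of size $\geq 2$ are unavoidable for some functions, and its main lemma (the Stack Bound) shows that the maximum matching whose sorted vector of stack sizes is lexicographically largest has at most $5|D|/\sqrt{\lambda}$ pairs in stacks of size $\geq\lambda$, proved by a rewiring/exchange argument; combining this (with $\lambda=\Theta(k^{2/7})$) with a stack-size-aware generalization of BRY's line lemma is what produces the $k^{1/7}$ rate. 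Your thinning-to-separated-coordinates idea addresses only disjointness of rounded pairs, not stack concentration --- indeed thinning a line whose upper endpoints share one slice would discard essentially the entire line. Without an analogue of the stack bound, and with a fixed arbitrary maximum matching (which can have stacks of size $\Theta(n)$, as in the paper's anti-majority example), the single-coordinate lemma as you've outlined it cannot be pushed through.
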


%
%

\noindent
The construction in \Sec{lower} shows that such a theorem is impossible for $k = o(\sqrt{d})$,
and thus, domain reduction requires $k$ and $d$ to be polynomially related. We leave figuring out the best dependence on $k$ and $d$ as an open question. For the $d=1$ case,
BRY give a much better lower bound of $\eps_f - 5\sqrt{\eps_f/k}$ (Theorem 3.1 of~\cite{BeRaYa14}). 


Given \Thm{dir_domain_reduction}, one can 
sample a random $[k]^d$ hypergrid denoted $\bT$ and apply the tester in~\cite{BlackCS18}
on $f_{\bT}$. The final query complexity is $\tilde{O}(d^{5/6})\cdot \poly\log k$. 
Setting $k = \poly(d/\eps)$, one gets a purely sublinear-in-$d$ tester (see \Sec{tester} for a formal proof). An obvious question is whether the dependence on $d$ can be brought down to $\sqrt{d}$ as in the hypercube case. If one could design a $\sqrt{d} \cdot \poly \log n$ query monotonicity tester for the domain $[n]^d$, 
then \Thm{dir_domain_reduction} can be used as a black box to achieve an $\tilde{O}(\sqrt{d})$ monotonicity tester. Note that because the dependence of~\cite{BlackCS18} is $\poly\log k$, and in light of the fact that $k = \poly(d)$ is needed for domain reduction to hold (\Thm{lower_bound}), any
improvement to \Thm{dir_domain_reduction} would only give a constant factor improvement to the query complexity of the overall tester. \\



\noindent \textbf{Continuous Domains.} The independence of $n$ in \Thm{dir_domain_reduction} suggests the possibility of a domain reduction result for Boolean functions defined over $\RN^d$.
We show that this is indeed true if  $f\colon \RN^d \to \{0,1\}$ is measurable (formal definitions in~\Sec{discrete}) and defined with respect to a (Lebesgue integrable) product distribution.

\begin{theorem} [Domain Reduction Theorem for $\RN^d$] \label{thm:cont_domain_reduction} Let $f\colon\RN^d \to \{0,1\}$ be any measurable function and let $k\in \mathbb{Z}^+$ be a positive integer. Let $\cD = \prod_{i=1}^d \cD_i$ be a (Lebesgue integrable) product distribution such that the distance to monotonicity of $f$ w.r.t. $\cD$ is $\eps_f$. If $\bT = T_1 \times \cdots \times T_d$ is a randomly chosen hypergrid, where for each $i \in [d]$, $T_i \subset \RN$ is formed by taking $k$ i.i.d. samples from $\cD_i$, then
$\Expect_{\bT}\left[\eps_{f_{\bT}}\right] \geq \eps_f - \frac{C \cdot d}{k^{1/7}}$,
where $C > 0$ is a universal constant. In particular, if $k \geq \left(\frac{2Cd}{\eps_f}\right)^7$, then $\Expect_{\bT}\left[\eps_{f_{\bT}}\right] \geq \eps_f/2$.
\end{theorem}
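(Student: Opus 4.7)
My plan is to reduce \Thm{cont_domain_reduction} to the discrete \Thm{dir_domain_reduction} via a measure-theoretic discretization. The key enabling fact is that the slack $Cd/k^{1/7}$ in \Thm{dir_domain_reduction} is \emph{independent of the grid size $n$}, so one can afford to discretize $\RN^d$ into an arbitrarily fine hypergrid $[N]^d$ (with $N$ allowed to depend on $f$) and pass to the limit at the end.

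First, I reduce to the uniform measure on $[0,1]^d$. Since each marginal $\cD_i$ has a Lebesgue-integrable density, its CDF $F_i$ is continuous and non-decreasing, with generalized inverse $Q_i:(0,1)\to\RN$ satisfying $Q_i(U)\sim\cD_i$ whenever $U\sim\mathrm{Unif}(0,1)$. Set $g = f\circ Q$ where $Q(u)=(Q_1(u_1),\ldots,Q_d(u_d))$. Each $Q_i$ is non-decreasing, so $g$ is monotone iff $f$ is. A standard coupling argument (using $F\circ Q = \mathrm{id}$ a.s.\ on the support of $\cD$) shows that the distance from $g$ to monotonicity under the uniform measure on $[0,1]^d$ equals $\eps_f$, and that $f_\bT$ and $g_\bU$ have the same distribution, where $\bU$ consists of $k$ i.i.d.\ uniform samples per dimension. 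It therefore suffices to prove the theorem for $g$ on $[0,1]^d$ with the uniform measure.

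Fix $\delta>0$. Partition $[0,1]^d$ into $N^d$ axis-aligned boxes $\{B_v\}_{v\in[N]^d}$ of equal volume. Since Boolean step functions on this partition are dense in $L^1([0,1]^d)$, for $N=N(g,\delta)$ sufficiently large there is a step function $g_N$ (constant on each $B_v$) with $\Pr[g\neq g_N]<\delta$. Define $\fd:[N]^d\to\{0,1\}$ by $\fd(v)=g_N|_{B_v}$. Every monotone function on $[N]^d$ lifts to an axis-aligned monotone step function on $[0,1]^d$, which is in particular monotone on $[0,1]^d$, so $\eps_{\fd}\geq\eps_{g_N}$. Combining with $\eps_{g_N}\geq\eps_g-\Pr[g\neq g_N]\geq\eps_f-\delta$ (the triangle inequality for Hamming distance) gives $\eps_{\fd}\geq\eps_f-\delta$.

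Finally, couple $\bU$ with its bin-index $\bT'\in[N]^d$. Taking $N\gg k^2$ rules out coincidences in $\bT'$ with probability $>1-\delta$, so $\bU$ and $\bT'$ are canonically isomorphic as posets and $\bT'$ is a uniformly random hypergrid sample of $[N]^d$. Because $g_N$ is constant on each $B_v$, $(g_N)_\bU$ and $(\fd)_{\bT'}$ agree as Boolean functions on isomorphic posets, so \Thm{dir_domain_reduction} yields $\Expect[\eps_{(g_N)_\bU}]=\Expect[\eps_{(\fd)_{\bT'}}]\geq\eps_{\fd}-Cd/k^{1/7}\geq\eps_f-\delta-Cd/k^{1/7}$. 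The triangle inequality for distance to monotonicity gives $|\eps_{g_\bU}-\eps_{(g_N)_\bU}|\leq\dist(g_\bU,(g_N)_\bU)$, whose expectation equals $\Pr[g\neq g_N]<\delta$ because each point of $\bU$ is marginally uniform on $[0,1]^d$. Combining, $\Expect[\eps_{f_\bT}]=\Expect[\eps_{g_\bU}]\geq\eps_f-2\delta-Cd/k^{1/7}$, and letting $\delta\to 0$ proves the theorem. The chief technical obstacle is the discretization step and the interplay between distances under different measures; the one-sided inequality $\eps_{\fd}\geq\eps_{g_N}$ (rather than equality, which would require approximating arbitrary monotone functions on $[0,1]^d$ by axis-aligned step monotone ones) is precisely what makes the argument go through cleanly.
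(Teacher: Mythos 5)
Your proposal is correct and takes essentially the same route as the paper: discretize $f$ to $\fd$ on a huge grid $[N]^d$ of equal-measure cells (your appeal to $L^1$-density of grid step functions is exactly what the paper's discretization lemma, \Lem{good-approx}, supplies, and your quantile-transform reduction to the uniform measure is the paper's construction of equal-measure intervals in disguise), invoke \Thm{dir_domain_reduction} --- whose independence of $n$ is indeed the crucial enabler --- and transfer the bound back via the one-sided observation that monotone functions on $[N]^d$ lift to monotone step functions on the continuous domain, giving $\eps_{\fd} \geq \eps_f - \delta$, precisely as in the paper. The only substantive difference is bookkeeping: to relate $\eps_{f_{\bT}}$ to the distance of the discretized restriction, you use linearity of expectation together with the $1$-Lipschitzness of distance-to-monotonicity under Hamming distance, whereas the paper chooses $\delta \leq k^{-d}\cdot C d/k^{1/7}$ and union-bounds over all $k^d$ sample points; both are valid since $N$ may depend arbitrarily on $f$ and $\delta$, and your version is marginally cleaner.
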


The above theorem essentially reduces the continuous domain to a discrete hypergrid $[k]^d$ where $k$ is at most some polynomial of the dimension $d$.
At this point, our result from~\cite{BlackCS18} implies \Thm{main-informal-cont}; a formal proof is given in \Sec{tester}.

The main ingredient in the proof of \Thm{cont_domain_reduction} is a discretization lemma (\Lem{good-approx}). Using standard measure theory, one can show that 
for any measurable Boolean function over $\RN^d$ and any $\delta > 0$, there exists a large enough natural number $N = N(f,\delta)$ 
with the following property. The domain $\RR^d$ can be divided into an $N^d$ sized
$d$-dimensional grid, such that in at least a $(1-\delta)$-fraction of grid boxes,
the function $f$ has the same value. (In some sense, this is what it means for $f$ to be measurable.)
Ignoring the $\delta$-fraction of "mixed" boxes, the function $f$ can be thought
of as a discrete function on $[N]^d$.

%
%
%

\emph{The only guarantee on $N$ is that it is finite}; as it depends on $f$, $N$ could be extremely large compared to $d$. This is where \Thm{dir_domain_reduction} shows its power.
The sampling parameter $k$ is independent of $N$, and this establishes \Thm{cont_domain_reduction}. We give a detailed proof in \Sec{proof_cont_domain_reduction}.

We remark here that given the discretization lemma (\Lem{good-approx}), one can also apply the techniques of Dodis et al.~\cite{DGLRRS99} and BRY~\cite{BeRaYa14} to get an $\tilde{O}(d/\eps)$-query tester. 
However, as we mentioned before, we are unaware of an explicit study of monotonicity testing over the continuous domain.

\subsection{Related Work} \label{sec:prev}

Monotonicity testing has been extensively studied in the past two decades~\cite{EKK+00,GGLRS00,DGLRRS99,LR01,FLNRRS02,HK03,AC04,HK04,ACCL04,E04,SS08,Bha08,BCG+10,FR,BBM11,RRSW11,BGJ+12,ChSe13,ChSe13-j,ChenST14,BeRaYa14,BlRY14,ChenDST15,ChDi+15,KMS15,BeBl16,Chen17,BlackCS18}.

We give a short summary of Boolean monotonicity testing over the hypercube. The problem was introduced
by Goldreich et al.~\cite{GGLRS00} (also refer to Raskhodnikova's thesis~\cite{Ras99}), who describe an $O(d/\eps)$-query tester.
The first improvement over that bound was the $\tilde{O}(d^{7/8})$ tester due to Chakrabarty and Seshadhri~\cite{ChSe13-j}, achieved via
a directed analogue of Margulis' isoperimetric theorem. Chen-Servedio-Tan~\cite{ChenST14} improved the analysis to get an $\tilde{O}(d^{5/6})$ bound.
A breakthrough result of Khot-Minzer-Safra~\cite{KMS15} gives an $\tilde{O}(\sqrt{d})$ tester. All of these testers are non-adaptive and one-sided.
Fischer et al.~\cite{FLNRRS02} prove a (nearly) matching lower bound of $\Omega(\sqrt{d})$ for this case.
The first polynomial two-sided lower bound was given in Chen-Servedio-Tan~\cite{ChenST14} and was subsequently improved to $\Omega(d^{1/2-\delta})$
in Chen et al.~\cite{ChenDST15}. The first polynomial lower bound of $\tilde{\Omega}(d^{1/4})$ for adaptive testers was given
in Belovs-Blais~\cite{BeBl16} and has since been improved to $\tilde{\Omega}(d^{1/3})$ by Chen-Waingarten-Xie~\cite{Chen17}.

For Boolean monotonicity testing over general hypergrids, Dodis et al.~\cite{DGLRRS99} give a non-adaptive, one-sided $O((d/\eps)\log^2(d/\eps))$-query 
tester. This was improved to $O((d/\eps)\log(d/\eps))$ by Berman, Raskhodnikova and Yaroslavtsev~\cite{BeRaYa14}.
This paper also proves an $\Omega(\log(1/\eps))$ separation between adaptive and non-adaptive monotonicity testers for $f\colon [n]^2 \to \{0,1\}$ by demonstrating an $O(1/\eps)$ adaptive tester (for any constant $d$), and an $\Omega(\log(1/\eps)/\eps)$ lower bound
for non-adaptive monotonicity testers. Previous work by the authors~\cite{BlackCS18} gives a monotonicity tester
with query complexity $\tilde{O}(d^{5/6}\log^{4/3} n)$ via directed isoperimetric inequalities for augmented hypergrids.

\subsection{Further Remarks}

\noindent \textbf{Implication for Other Notions of Distance:} 
Berman, Raskhodnikova, and Yaroslavtsev~\cite{BeRaYa14} introduce
the notion of $L_p$-testing, where $f\colon [n]^d \to [0,1]$ and the distance between functions is measured
in terms of $L_p$-norms~\cite{BeRaYa14}. They prove (Lemma 2.2 + Fact 1.1,~\cite{BeRaYa14}) that $L_p$-monotonicity testing can be reduced
to (non-adaptive, one-sided) Boolean monotonicity testing. Thus, \Thm{main-informal-grid} implies an $L_p$-monotonicity tester for functions $f\colon [n]^d \to [0,1]$ which makes $o(d)$ queries
. This improves upon Theorem 1.3 of~\cite{BeRaYa14}. 

We also believe our main theorem~\Thm{main-informal-grid} can be used to {\em estimate} the distance-to-monotonicity for functions $f\colon [n]^d \to \{0,1\}$ in time independent of $n$.
The works of~\cite{BeRaYa14,PRR04} also relate distance estimation for Boolean functions and {\em tolerant} testing over $L_p$-distances, 
and our results should have implications for this. Finally, generalizing $L_p$-testing to the continuous domain should be possible.
We leave all these interesting directions as future work.
\smallskip

\noindent \textbf{Domain Reduction for Variance:} Recent works~\cite{ChSe13-j,KMS15,BlackCS18} have shown that 
certain isoperimetric theorems for the undirected hypercube have \emph{directed analogues} 
where the variance is replaced by the distance to monotonicity. 
Interestingly, for the case of domain reduction, the variance and distance to monotonicity behave differently. 
While domain reduction for the distance to monotonicity requires $k \geq \Omega(\sqrt{d})$ (\Thm{lower_bound}), 
we show that the expected variance of a restriction of $f$ to a random hypercube ($k=2$) is at least half
the variance of $f$ 
(see \Thm{undir_domain_reduction}). This statement may be of independent interest.
We were unable to find a reference to such a statement and provide a proof in \Sec{undirected}.

\section{Proving the Domain Reduction \Thm{dir_domain_reduction}: Overview} \label{sec:overview}
The theorem is a direct corollary of
the following lemma, applied to each dimension.

\begin{lemma} [Domain Reduction Lemma] \label{lem:dom_red_main} Let $f\colon [n] \times \left(\prod_{i=2}^d [n_i]\right) \to \{0,1\}$ be any function over a rectangular hypergrid for some $n, n_2,\ldots,n_d \in \mathbb{Z}^+$ and let $k \in \mathbb{Z}^+$. Choose $T$ to be a (multi-) set formed by taking $k$ i.i.d. samples from the uniform distribution on $[n]$ and let $f_T$ denote $f$ restricted to $T \times \left(\prod_{i=2}^d [n_i]\right)$. Then $\Expect_T\left[\eps_f - \eps_{f_T}\right] \leq \frac{C}{k^{1/7}}$ where $C > 0$ is a universal constant. 
\end{lemma}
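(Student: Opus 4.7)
The strategy is to lift an optimal monotone approximation of $f_T$ back to the full domain and bound the resulting gap in distance by one-dimensional uniform concentration. Exploiting the Boolean range, every monotone function on $[n] \times \prod_{i=2}^d [n_i]$ has the form $g(x,\by) = \mathbf{1}[x \geq \theta(\by)]$ for a unique anti-monotone threshold $\theta \colon \prod_{i=2}^d [n_i] \to [n+1]$; the same parametrization works on the restricted domain. Take an optimal $g_T^*$ for $f_T$ with threshold $\theta_T^*$ and define the lift $\tilde{\theta} \colon \prod_{i=2}^d [n_i] \to [n+1]$ by the inclusion $T \hookrightarrow [n]$. The lift remains anti-monotone, so $\tilde{g}(x,\by) := \mathbf{1}[x \geq \tilde{\theta}(\by)]$ is a valid monotone function on the full domain, and $\eps_f \leq \dist(f, \tilde{g})$ deterministically for every realization of $T$. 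It therefore suffices to upper bound $\Expect_T[\dist(f, \tilde{g}) - \eps_{f_T}]$.

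\textbf{Slice-wise uniform convergence.} For each $\by$, define $\phi_\by(t) := \frac{1}{n}\,|\{x \in [n] : f(x,\by) \neq \mathbf{1}[x \geq t]\}|$ and its empirical version $\hat{\phi}_\by^T(t) := \frac{1}{k}\,|\{x \in T : f(x,\by) \neq \mathbf{1}[x \geq t]\}|$. Unpacking the definitions gives $\dist(f, \tilde{g}) = \Expect_\by[\phi_\by(\tilde{\theta}(\by))]$ and $\eps_{f_T} = \Expect_\by[\hat{\phi}_\by^T(\tilde{\theta}(\by))]$, so the excess distance is governed slice-by-slice by the pointwise deviation $\phi_\by(\tilde{\theta}(\by)) - \hat{\phi}_\by^T(\tilde{\theta}(\by))$. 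For any fixed $\by$, the class $\{x \mapsto \mathbf{1}[f(x,\by) \neq \mathbf{1}[x \geq t]] : t \in [n+1]\}$ is the XOR of the fixed function $f(\cdot,\by)$ with the one-dimensional threshold class, so it has VC dimension at most two. The DKW inequality (applied separately to the $0$- and $1$-level sets of $f(\cdot,\by)$) then yields $\Expect_T[\sup_t |\phi_\by(t) - \hat{\phi}_\by^T(t)|] = O(1/\sqrt{k})$. Since this uniform bound absorbs the data-dependent choice $t = \tilde{\theta}(\by)$, averaging over $\by$ gives $\Expect_T[\eps_f - \eps_{f_T}] = O(1/\sqrt{k})$, which is substantially stronger than the claimed $C/k^{1/7}$.

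\textbf{Main obstacle.} The main technical subtlety is the multi-set convention: if $T$ contains repeated samples, the optimal $g_T^*$ may place its threshold strictly inside a block of duplicate copies (separating them by sorted-multiset index), whereas a threshold on $[n]$ must take a single value at each point of the ground set. I would handle this by rounding $\tilde{\theta}(\by)$ to the nearest distinct-value boundary of $T$, paying an extra loss of at most the largest duplicate-block size divided by $k$, which is $O((\log k)/k) = o(1/\sqrt{k})$ in expectation by standard balls-and-bins bounds and therefore does not affect the final rate. An alternative route, closer to what the paper likely does and naturally matching the stated $C/k^{1/7}$ rate, is to forgo uniform convergence entirely and instead extend $g_T^*$ by the BRY-style staircase in the first coordinate, $\tilde{g}(x,\by) := g_T^*(t(x), \by)$ with $t(x) := \max\{t \in T : t \leq x\}$, and bound the excess combinatorially.
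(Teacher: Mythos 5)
Your proposal is correct in its essential structure, but it takes a genuinely different route from the paper, and (if the details are filled in) it actually yields a stronger bound. The paper never uses the threshold characterization of monotone Boolean functions on a chain-times-poset product; instead it works with the violation-graph/matching characterization of Fischer et al., fixes a maximum violation matching $M$ of $f$, and tries to show that a large sub-matching survives the sampling of the first coordinate. The whole difficulty there — and the paper's main technical contribution — is that the upper endpoints of $M^{(\ell)}$ can pile up in a single slice ("stacks"), which is handled by choosing $M$ to lexicographically maximize the sorted stack-size vector (\Lem{stack_bound}) and then running a BRY-style discrepancy argument per line (\Lem{line_sampling}); the composition of these two steps is what produces the $k^{-1/7}$ rate. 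Your argument sidesteps matchings entirely: writing every monotone $g$ as $\mathbf{1}[x \geq \theta(\by)]$ with $\theta$ anti-monotone turns $\eps_f$ into $\min_\theta \Expect_{\by}[\phi_{\by}(\theta(\by))]$, the lift of the restricted optimizer is feasible for the full problem, and the excess risk is controlled slice-by-slice by a one-dimensional uniform deviation bound over thresholds, exactly as in an ERM generalization argument. The cross-slice anti-monotonicity constraint is only needed for feasibility of the lift, not for the concentration, since the per-slice supremum over $t$ absorbs the data-dependent choice $\tilde\theta(\by)$. This gives $O(1/\sqrt{k})$ per dimension rather than $O(k^{-1/7})$, which would reduce the required sampling parameter from $k = \Theta((d/\eps)^7)$ to $k = O((d/\eps)^2)$ and correspondingly shave logarithmic factors off the final tester. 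The trade-off is that your argument is tied to the specific structure "totally ordered sampled coordinate $\times$ untouched remainder" with Boolean range, whereas the matching machinery is the paper's vehicle for other structural statements; but for this lemma your route is both simpler and quantitatively better.

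Two caveats. First, the duplicate-handling step is the one place where your write-up is not yet a proof for all $n$: always rounding $\tilde\theta(\by)$ \emph{up} to the boundary of its duplicate block preserves anti-monotonicity and costs at most $(\text{max block size})/k$ per slice, but the balls-and-bins bound $O((\log k)/k)$ you invoke requires $n \gtrsim k$; for $n \ll \sqrt{k}$ the max block is $\Theta(k/n)$ and the rounding loss is $\Theta(1/n)$, so you need a separate (easy) treatment of small $n$, e.g., handling each duplicate block by averaging over the positions within it. Second, your guess in the last paragraph about "what the paper likely does" is off: the paper does not extend $g_T^*$ by a staircase in the first coordinate; it goes in the opposite direction, restricting a carefully chosen violation matching of $f$ down to $\bT$, which is precisely why it must confront the stack phenomenon that your approach never encounters.
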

This lemma is the heart of our results, and in this section we give an overview of its proof.
Let us start with the simple case of $d=1$ (the line).
Monotonicity testers for the line immediately imply domain reduction for $d=1$~\cite{DGLRRS99,BeRaYa14}.
A u.a.r. sample of $\tilde{O}(1/\eps_f)$ points in $[n]$ contains a monotonicity violation with large probability ($>9/10$, say),
and thus the restriction of $f$ to this sample has distance $\widetilde{\Omega}(\eps_f)$. 
However, $\Omega(\eps_f)$ is weak for what we need since, 
%
even if one could generalize this argument to the setting of \Lem{dom_red_main},
we would need to apply it $d$ times to get the full domain reduction (\Thm{dir_domain_reduction}). This would imply a final lower bound of $\eps_f/C^d$, for some constant $C$, which has little value towards proving a sublinear-in-$d$ query tester.   

Fortunately, quantitatively stronger domain reduction exists
for the line. BRY (\cite{BeRaYa14}, Theorem 3.1) proves that if one samples $\Theta(s^2/\eps_f)$ points, then the expected distance of the restricted function
is at least $\eps_f(1 - 1/s)$.
Numerically speaking, this is encouraging news, since we could try to set $s = \Theta(d)$ and
iterate this argument $d$ times (over each dimension). Of course, this result for the line alone is not enough to deal with the structure of general hypergrids, but forms a good sanity check.

Consider the general case of \Lem{dom_red_main}. For brevity, we let $D := [n] \times \left(\prod_{i=2}^d [n_i]\right)$ and $D_T := T \times \left(\prod_{i=2}^d [n_i]\right)$ denote the original and reduced domains, respectively. Note that $|D_T| = \frac{k}{n}|D|$. 

The standard handle on the distance to monotonicity is the violation graph of $f$, arguably first formalized by Fischer et al.~\cite{FLNRRS02}.
The graph has vertex set $D$ and an edge $(x,y)$ iff $x\prec y$ and $f(x) = 1, f(y) = 0$.
A theorem of~\cite{FLNRRS02} states that any maximum cardinality matching $M$ in the violation graph satisfies $|M| = \eps_f|D|$. Fix such a matching $M$.
For a fixed sample $T$, we let $M_T$ denote a maximum cardinality matching in the violation graph of $f_T$. To argue about $\eps_{f_T}$, we want 
to give a lower bound on the expected size $|M_T|$. To do so, we give a lower bound the expected number of endpoints of $M$ that can still be matched (simultaneously) in the violation graph of $f_T$. 
\smallskip


We use the following standard notions of lines and slices in $D$, with respect to the first dimension. Refer to \Fig{stack-example} and \Fig{2d} for visual examples in two dimensions. In these examples the rows represent the lines while the columns represent the slices.
Below, for $x\in D$, the vector $x_{-1}$ is used to denote $(x_2,x_3,\ldots, x_d)$.
\begin{itemize} [noitemsep]
    \item (Lines in $D$) $\cL := \left\{\ell_z : z \in \prod_{i=2}^{d}[n_i]\right\}$ where $\ell_z := \left\{x \in D : x_{-1} = z\right\}$.
    \item (Slices in $D$) $\cS := \left\{S_i : i \in [n]\right\}$ where $S_i := \left\{x \in D : x_1 = i\right\}$.
\end{itemize}




We partition $M$ into a collection of "local" matchings for each line:

\begin{itemize}
    \item (Line Decomposition of $M$) For each $\ell \in \cL$: $M^{(\ell)} := \{(x,y) \in M : x \in \ell\}$.
\end{itemize}

We find a large matching in the violation graph of $f_T$ by doing a line-by-line analysis. In particular, for each line $\ell \in \cL$, we define the following matching $M^{(\ell)}_T$ in the violation graph of $f_T$.

\begin{itemize}
\item (The matching $M^{(\ell)}_T$) For each $\ell\in \cL$, consider the collection of all maximum cardinality violation matchings w.r.t. $f_T$ on the set of vertices that (a) are matched by $M^{(\ell)}$, and (b) lie in some slice $S_i$ where $i \in T$. We let $M^{(\ell)}_T$ denote any such fixed matching.
\end{itemize}


We stress that $M^{(\ell)}_T$ is {\em not} a subset of $M^{(\ell)}$, but the \emph{endpoints} of the pairs in $M^{(\ell)}_T$ are a subset of the \emph{endpoints} of the pairs in $M^{(\ell)}$.
Thus, by the above definition, the union $M_T := \cup_{\ell \in \cL} M_{T}^{(\ell)}$ is a valid matching in the violation graph of $f_T$ since $M^{(\ell)}$ and $M^{(\ell')}$ have disjoint endpoints for all $\ell \neq \ell' \in \cL$. We will lower bound the size of this matching, $|M_T|$, by giving a lower bound on $|M_T^{(\ell)}|$ for each line $\ell$.

Fix some $\ell \in \cL$. By definition, the lower-endpoints of $M^{(\ell)}$ all lie on $\ell$, and thus are all comparable. Let $M^{(\ell)} = \{(x_1,y_1),\ldots,(x_m,y_m)\}$ where $x_1 \prec \cdots \prec x_m$ and observe that, for any $j \in [m]$, $x_1,\ldots,x_j \prec y_j,\ldots,y_m$. 
Since the function is Boolean, every $x \in \{x_1,\ldots,x_j\}$ forms a violation to monotonicity with every $y \in \{y_j,\ldots,y_m\}$, 
and therefore these vertices can be matched in $M^{(\ell)}_T$, if their $1$-coordinates are sampled by $T$. 

Since all the $x_i$'s lie on the same line $\ell$, their $1$-coordinates are distinct. Suppose that the $1$-coordinates of all the $y_i$'s were also distinct and distinct from those of the $x_i$'s too. 
Under this assumption we can proceed with our analysis as if all the $x_i$'s \emph{and} $y_i$'s lie on $\ell$, and the analysis becomes identical to the one-dimensional case.
We could thus apply Theorem 3.1 of \cite{BeRaYa14} to each $\ell \in \cL$ to prove \Lem{dom_red_main}. 
However, the assumption that the $y_i$'s have distinct $1$-coordinates is far from the truth. 
As we explain below, there are examples where all the $y_i$'s have the same $1$-coordinate,
thereby lying in the same slice $S_a$ (for some $a \in [n]$). In this case, with probability $(1-k/n)$ we would have the size of $M^{(\ell)}_T$ be $0$ (if $a\notin T$), 
implying that $\Expect_T\left[|M^{(\ell)}_T|\right]$ could be as small as $(k/n)^2\cdot |M^{(\ell)}|$. 
Thus, if there existed a function $f$ such that a ``collision of $y$'s $1$-coordinates'' could not be avoided for a large number of lines, then this would preclude such a line-by-line approach to proving \Lem{dom_red_main}.
Unfortunately, there are examples of violation matchings where this happens. Consider \Ex{2d}, and the left part
of \Fig{2d}, shown at the end of this section. For the lowest line, all the corresponding $y$'s in $M^{(\ell)}$ have the
same $1$-coordinate. 

Our main insight is that for any $f$, there {\em always} exists a violation matching $M$ where the problem
above does not arise too often. This motivates the key definition of {\em stacks}; the stacks are what determine the ``shape'' of a matching.
Formally, for any $\ell \in \cL$ and $S \in \cS$, the $(\ell,S)$-stack is the set of pairs $(x,y)\in M$, where $x \in \ell$ and $y \in S$.

\begin{itemize} [noitemsep]
    \item (Stacks) $M^{(\ell,S)} := \{(x,y) \in M^{(\ell)} : y \in S\} = \{(x,y) \in M : x \in \ell, y \in S\}$.
\end{itemize}


We call $|M^{(\ell,S)}|$ the ``{\em size} of the stack $(\ell,S)$''. To summarize the above discussion, small stacks are good news while big stacks are bad news. This is formalized in \Lem{line_sampling}.


%
%
%
%
%

If there is a maximum cardinality matching $M$ in the violation graph of $f$ such that all stacks 
have size at most $1$, then the one-dimensional domain reduction can be directly applied. 
Unfortunately, this is not possible. We give an example in \Fig{stack-example} of a function where stacks of size at least $2$ are unavoidable\footnote{Interestingly, we don't know of a function where stacks of size strictly larger than $2$ can't be avoided. In fact, we can prove that for the grid (the $d=2$ case) one can always find a maximum cardinality violation matching $M$ where $|M^{(\ell,S)}| \leq 3$ for all $(\ell,S)$. The proof is cumbersome and so we exclude it since it is not relevant to our main result.}.
One reason for this difficulty may be that there can be various maximum cardinality matchings in the violation graph 
that have vastly different stack sizes (shapes); 
again consider~\Ex{2d}. Nevertheless, we prove that 
there is a matching $M$ such that for every positive integer $\lambda$, the 
total number of pairs belonging to
stacks of size at least $\lambda$ is at most $|D|/\poly(\lambda)$. 

\begin{figure}[h]
    \begin{center}
    \includegraphics[clip, scale=0.55]{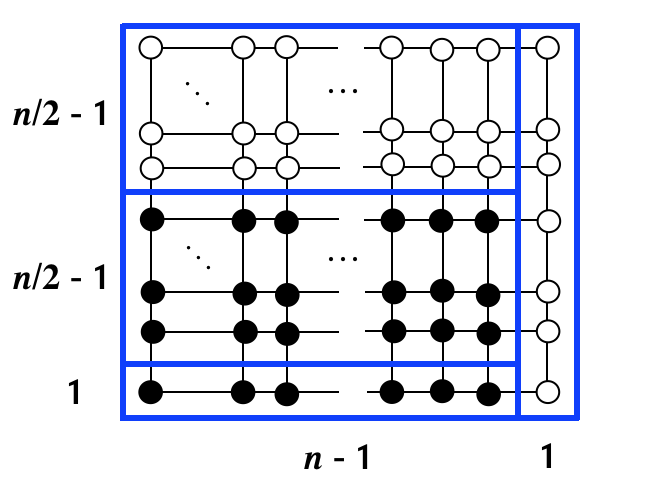}
    \end{center}
    \caption{An example of a function $f \colon [n] \times [n-1] \to \{0,1\}$ where stacks of size $\geq 2$ are unavoidable. Black (white, resp.) circles represent vertices where $f=1$ ($f=0$, resp.). First observe that there exists a perfect violation matching as follows: perfectly match the two blocks of size $(n-1)(n/2-1)$ and then perfectly match the bottom line of $1$'s to the right-most slice of $0$'s. Thus, any maximum cardinality violation matching, $M$, will match all of the $(n-1)$ $0$'s in the right-most slice. There are only $n/2$ lines containing $1$'s and so by the pigeonhole principle $M$ contains at least $n/2-1$ pairs belonging to stacks of size $\geq 2$.}
    \label{fig:stack-example}
\end{figure}

\begin{lemma} [Stack Bound] \label{lem:stack_bound} There exists a maximum cardinality matching $M$ in the violation graph of $f$ such that for every $\lambda \in \mathbb{Z}^+$, $M$ satisfies $\sum_{(\ell,S) : |M^{(\ell,S)}| \geq \lambda} |M^{(\ell,S)}| \leq \frac{5}{\sqrt{\lambda}} \cdot |D|$. \end{lemma}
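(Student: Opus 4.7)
My plan is to construct a specific maximum cardinality matching $M$ in the violation graph via an exchange argument, and then show that its stack structure automatically satisfies the claimed tail bound through a convexity / charging argument.

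\textbf{Step 1: Line-level normalization.} I would first reduce to a normalized class of maximum matchings. Starting from any maximum matching, for each line $\ell_z$, if the matched 1-endpoints on $\ell_z$ are $p_1 \prec p_2 \prec \cdots \prec p_m$ with 0-partners $y_1, \ldots, y_m$, one can iteratively 2-swap whenever $(y_i)_1 > (y_j)_1$ for $i < j$ to ensure $(y_1)_1 \le (y_2)_1 \le \cdots \le (y_m)_1$. As already shown in the overview discussion preceding the lemma, these swaps preserve both the cardinality and the violation-comparability of the matching. Under this normalization, for each fixed line, the stacks $\{M^{(\ell_z, S_a)}\}_{a}$ partition the matched 1's on $\ell_z$ into contiguous runs.

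\textbf{Step 2: Global optimization.} Among all normalized maximum matchings, I would then pick $M$ to minimize the convex potential $\Phi(M) := \sum_{(\ell,S)} |M^{(\ell,S)}|^{3/2}$ (or, equivalently, lexicographically minimize the sorted multiset of stack sizes). The point of this choice is that in a $\Phi$-minimizer, no local swap of 0-partners between two stacks can reduce the potential; by convexity, this blocks many ways of having two unnecessarily large stacks side by side.

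\textbf{Step 3: Charging witnesses per large stack.} The key structural claim is: in the $\Phi$-minimizing $M$, every stack $(\ell_z, S_a)$ of size $s$ yields a witness set $W_{(\ell_z,S_a)} \subseteq D$ of size $\Omega(s^{3/2})$ that can be charged uniquely to this stack. The stack already provides $s$ 1-coordinates $c_1 < \cdots < c_s \le a$ on $\ell_z$ and $s$ tuples $w_1,\ldots,w_s \succeq z$ in the other coordinates, defining a combinatorial $s \times s$ rectangle $\{(c_i, w_j)\} \subseteq D$. Using the fact that any swap trading an element of this stack for a 0-endpoint outside $S_a$ (or a 1-endpoint outside $\ell_z$) fails to strictly decrease $\Phi$, I would show that a $1/\sqrt{s}$ fraction of the points in this rectangle must be either matched by $M$ in a way "locked" to this stack or force the required structure — producing $\Omega(s^{3/2})$ witnesses that can be allocated disjointly across stacks.

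\textbf{Step 4: Deriving the tail bound.} A global double-count then yields $\sum_{(\ell,S)} |M^{(\ell,S)}|^{3/2} \le 5\,|D|$, from which the lemma follows via the elementary inequality
\[
\sum_{(\ell,S)\,:\,|M^{(\ell,S)}| \ge \lambda} |M^{(\ell,S)}| \;\le\; \frac{1}{\sqrt{\lambda}}\sum_{(\ell,S)} |M^{(\ell,S)}|^{3/2} \;\le\; \frac{5\,|D|}{\sqrt{\lambda}}.
\]

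\textbf{Main obstacle.} Step~3 is where the real work lies. The rectangle argument suggests a natural $s^2$ witness set, but these rectangles associated to distinct stacks can overlap severely in $D$, so uniqueness of the charge is nontrivial. The $1/\sqrt{\lambda}$ (rather than $1/\lambda$) rate in the statement hints that one can only afford to charge $\Omega(s^{3/2})$ witnesses per stack of size $s$, not the full $s^2$; arranging these charges to be globally disjoint will require a careful augmenting-path analysis that exploits the global $\Phi$-minimality of $M$, not just local swap-freeness. Making this allocation rigorous — in particular, identifying exactly which sub-collection of the $s\times s$ rectangle to dedicate to the stack so that distinct stacks dedicate distinct points — is the crux of the proof.
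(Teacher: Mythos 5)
There is a genuine gap, and it sits exactly where you placed it: Step~3 is the entire proof, and it is missing. Worse, the intermediate target you are aiming for, $\sum_{(\ell,S)} |M^{(\ell,S)}|^{3/2} \leq 5|D|$, is strictly \emph{stronger} than the lemma, and nothing in the paper (or in your sketch) establishes that any maximum matching achieves it. If the lemma's tail bound were tight at every scale for the best matching --- i.e., mass $\Theta(|D|\lambda^{-3/2})$ in stacks of size exactly $\lambda$ --- then $\sum_{(\ell,S)}|M^{(\ell,S)}|^{3/2}$ would be $\Omega(|D|\log(\cdot))$, so your Step~4 reduction, while arithmetically correct, may be reducing to a false statement. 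The paper proves the tail bound directly, for each fixed $\lambda$, by a completely different mechanism: it picks the maximum matching whose sorted-non-decreasing stack-size vector is lexicographically \emph{largest} (leximin-optimal), shows via a single 2-swap that for any pair $(x,y)$ in a high stack $(\ell,S)$ and any $0$-point $z$ in the interval $\cI_y$ between $x$ and $y$ on $\ell_y$, the stack $(\ell,S_z)$ cannot be low (\Clm{blue_interval_prop}), and then runs an averaging argument: if a $\delta$ fraction of $D$ sits in high stacks, some single line $\ell$ carries $\Omega(\delta n)$ rich red points whose intervals force $\Omega(\delta^2 n)$ distinct non-low stacks on $\ell$; since one line can meet at most $n/(\lambda-1)$ non-low stacks (\Clm{trivial_bound}), $\delta \leq O(1/\sqrt{\lambda})$. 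The $1/\sqrt{\lambda}$ comes from this quadratic density argument on a single line, not from a per-stack witness count of $s^{3/2}$.

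A second, more subtle problem is your choice of potential. The paper's exchange deletes $(x,y),(w,z)$ and adds $(x,z),(w,y)$, changing four stack sizes: two high ones ($\geq\lambda$) drop by $1$, one low one ($\leq\lambda-2$) rises by $1$, and a fourth, $(\ell_w,S)$, of \emph{uncontrolled} size also rises by $1$. The leximin criterion tolerates this because the smallest changed coordinate strictly increases and stays below the others, so the fourth stack can simply be ignored. A convex potential $\Phi=\sum_{(\ell,S)}|M^{(\ell,S)}|^{3/2}$ cannot ignore it: the marginal cost $(s+1)^{3/2}-s^{3/2}\approx\tfrac32\sqrt{s}$ of growing the fourth stack can exceed the savings from shrinking the two high stacks whenever $(\ell_w,S)$ is much larger than they are, so a $\Phi$-minimizer need not admit the rewiring contradiction at all. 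Your line-level normalization (Step~1) is locally sound but is neither needed by nor sufficient for the argument. To repair the proof you would either have to carry out the disjoint-witness allocation you flag as the crux (with no evidence it is possible), or switch to the paper's leximin extremal choice and its line-counting/averaging scheme.
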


The main creativity to prove this lemma lies in the choice of $M$. Given a matching, we define the vector $\Lambda(M)$ that enumerates all the stack sizes in non-decreasing order.
We show that the maximum cardinality matching $M$ with the lexicographically largest $\Lambda(M)$
serves our purpose. That is, we choose $M$ that maximizes the minimum stack size, and then subject to this maximizes the second minimum, and so on. 
It may seem counter-intuitive that we want a matching with small stack sizes, and yet our potential function maximize the minimum. 
The intuitive explanation is that the sum of the stack sizes is $|M|$, which is fixed, and so in a sense maximizing the minimum also balances out the $\Lambda(M)$ vector.
The proof uses a matching rewiring argument to show that
any large stack must be "adjacent" to many moderate size stacks. If two stacks are appropriately
"aligned", one could change the matching to move points from one stack to the other. 
Large stacks cannot be aligned with small stacks, since one could rewire the matching to increase
the potential.
But since the function is Boolean one can show that there are many opportunities for rewiring the violation matching. 
Thus, there isn't enough "room" for many large stacks.
We then apply some technical charging arguments to bound the total number of points in large stacks.
The full proof is given in \Sec{stack_bound}.



With the stack bound in hand, we need to generalize the one-dimensional argument of BRY (Theorem 3.1 \cite{BeRaYa14}) to account for bounded
stack sizes. Then, we bound $|M^{(\ell)}_T|$
for all $\ell$, and get the final lower bound on the distance $\eps_{f_T}$.

\begin{lemma} [Line Sampling] \label{lem:line_sampling} Suppose that $M$ is a matching in the violation graph of $f$,  such that for some $\lambda \in \mathbb{Z}^+$, $|M^{(\ell,S)}| \leq \lambda$ for all $\ell \in \cL$ and $S \in \cS$. Then, for any $\ell \in \cL$,
    
\[\Expect_{T}\left[|M^{(\ell)}_T|\right] \geq \frac{k}{n} \cdot |M^{(\ell)}| - 3 \lambda \sqrt{k \ln k} \text{.}\]
\end{lemma}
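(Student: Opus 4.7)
Order the pairs in the line matching as $M^{(\ell)} = \{(x_i,y_i)\}_{i=1}^m$ with $x_1 \prec \cdots \prec x_m$, and write $b_i := (x_i)_1$ and $a_i := (y_i)_1$ for their first coordinates. Since all $x_i$'s lie on $\ell$, the $b_i$'s are distinct values in $[n]$, while by the stack-bound hypothesis each $c \in [n]$ appears among the $a_i$'s at most $\lambda$ times. The overview already records that $x_i \prec x_j \prec y_j$ whenever $i \le j$, so $(x_i,y_j)$ is a violation for \emph{every} $i \le j$, not just $i=j$; this is what gives the extra matching flexibility in $D_T$. Let $s_1,\dots,s_k$ be the i.i.d.\ uniform samples forming $T$ and set $N_c := |\{j : s_j = c\}|$ (so $\Expect[N_c] = k/n$); then $x_i$ contributes $N_{b_i}$ copies and $y_i$ contributes $N_{a_i}$ copies to $D_T$, and the comparability $x_i \prec y_j$ is inherited by every pair of copies.

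Consequently $|M_T^{(\ell)}|$ equals the maximum matching in an interval bipartite graph whose vertices are copies at integer positions $1,\dots,m$, with an edge from every copy at position $i$ to every copy at position $j$ with $i \le j$. The standard Hall-type characterization for such graphs gives
\[
|M_T^{(\ell)}| \;=\; \sum_{i=1}^m N_{b_i} \;-\; \max_{t \in [m+1]} \Phi(t)^+, \qquad \Phi(t) \;:=\; \sum_{i \ge t} N_{b_i} \;-\; \sum_{i \ge t} N_{a_i}.
\]
Taking expectations yields $\Expect\bigl[\sum_i N_{b_i}\bigr] = (k/n)\,m = (k/n)|M^{(\ell)}|$, so the lemma reduces to showing $\Expect\bigl[\max_t \Phi(t)^+\bigr] \le 3\lambda\sqrt{k\ln k}$.

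To bound this maximum I decompose $\Phi(t) = \sum_{j=1}^k W_j^{(t)}$ where $W_j^{(t)} := |\{i\ge t : b_i = s_j\}| - |\{i\ge t : a_i = s_j\}|$. For each fixed $t$, the $W_j^{(t)}$ are i.i.d.\ in $j$, have mean zero (both tails of length $m-t+1$ are visited with equal probability $1/n$ per element), and are bounded in magnitude by $\lambda$ by the distinctness of the $b_i$'s and the stack bound on the $a_i$'s. Hoeffding's inequality then gives the pointwise tail bound $\Pr[\Phi(t) \ge u] \le \exp(-u^2/(2k\lambda^2))$. To upgrade this into a uniform bound over $t$, I use that $\Phi(t)$ is a step function in $t$ whose jumps occur only at the at most $(\lambda+1)k$ indices $t$ with $b_t \in T$ or $a_t \in T$; a union bound over these candidate transition points plus a standard tail integration gives $\Expect[\max_t \Phi(t)^+] = O(\lambda\sqrt{k\ln(\lambda k)})$. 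The complementary regime $\lambda \ge \sqrt{k/\ln k}$ is absorbed by the trivial deterministic bound $\Phi(t) \le \sum_i N_{b_i} \le k$, since then $3\lambda\sqrt{k\ln k} \ge k$. Combining the two regimes and tuning constants produces the claimed $3\lambda\sqrt{k\ln k}$ bound. The main obstacle is precisely to avoid a $\ln m$ factor (with $m$ potentially as large as $n$) in favor of the $\ln k$ factor stated in the lemma; the transition-point observation together with the bifurcation into small- and large-$\lambda$ regimes is what makes this possible.
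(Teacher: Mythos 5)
Your proposal is correct and follows essentially the same route as the paper: express $|M^{(\ell)}_T|$ via a Hall-type formula for the staircase bipartite graph on sampled endpoints (one side having multiplicity $\leq 1$, the other $\leq \lambda$), observe the deficit is a maximum of centered sums of i.i.d.\ variables bounded by $\lambda$, and apply Hoeffding plus a union bound. The one place you work harder than necessary is the union bound: the paper's Hall computation restricts the deficit maximum to the $k$ sampled values $t \in T$ from the outset, whereas your maximum ranges over all $t \in [m+1]$, forcing the transition-point count of $O(\lambda k)$ and the large-$\lambda$/small-$\lambda$ regime split to recover $\ln k$ in place of $\ln(\lambda k)$ --- in fact your own observation can be sharpened, since the maximum of $\Phi$ is attained at a $t$ with $N_{b_t}>0$, which already leaves only $k$ candidates.
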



The proof is a fairly straightforward generalization of the arguments in~\cite{BeRaYa14} for the $\lambda=1$ case.
The idea is to control the size of the maximum cardinality matching $M^{(\ell)}_T$ by analyzing the discrepancy of a random subsequence of a sequence of $1$s and $0$s. 
For the sake of simplicity, 
we give a proof that achieves a weaker dependence on $\eps_f$ than in~\cite{BeRaYa14}. 
Our proof of \Lem{line_sampling} is given in \Sec{discrepancy}. 
We note that BRY give a stronger lower bound (without the $\sqrt{\ln k}$) and also bound the variance for the $\lambda = 1$ case. A more careful generalization of BRY which removes the $\sqrt{\ln k}$ would yield an improved loss of $C/k^{1/6}$ instead of $C/k^{1/7}$ in \Lem{dom_red_main}, but we prefer to give the simpler $C/k^{1/7}$ exposition for the purpose of ease of reading.
\begin{example}[A Two Dimensional Example]\label{ex:2d}
{\em
    Consider the anti-majority function on two dimensions. More precisely, let $f\colon [n]^2 \to \{0,1\}$ be defined as $f(x,y) = 1$ if $x + y \leq n$, and $f(x,y) = 0$ otherwise.
We describe two maximum cardinality matchings with vastly different stack sizes. The first matching $R$ matches a point $(x,y)$ with $x+y\leq n$ to the point $(n-y+1,n-x+1)$.
For an illustration, see the left matching in \Fig{2d} for the case $n=5$. Observe that whenever $x+y\leq n$, we have $(n-y+1) + (n-x+1) > n$.
The second matching $B$ matches a point $(x,y)$ with $x+y \leq n$ to the point $(x+y, n-x+1)$. Again, observe that $(x+y) + (n-x+1) > n$.
For an illustration, see the right blue matching in \Fig{2d} for the case $n=5$. Note that the stack sizes for the matching $R$ are large; in particular, they are $n-1,n-2,\ldots,2,1$ for $n-1$ stacks and $0$ for the rest.
On the other hand, any stack in $B$ is of size $\leq 1$. 

\begin{figure}[h]
    \begin{center}
    \includegraphics[trim= 100 280 0 80, clip, scale=0.7]{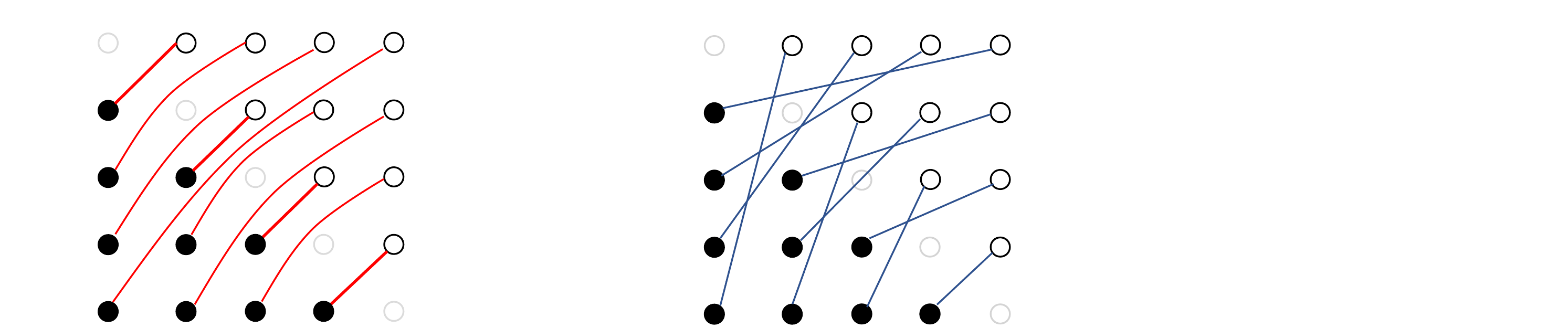}
    \end{center}
    \caption{Accompanying illustration for \Ex{2d} showing two different maximum cardinality violation matchings for the anti-majority function $f\colon [5]^2 \to \{0,1\}$ which have very different stack sizes. Black (white, resp.) circles represent vertices where $f = 1$ ($f=0$, resp.) and connecting lines represent pairs of the matching. Observe that for the left matching, the bottom line and the right-most slice form a stack of size $4$ while the right matching has stack sizes all $\leq 1$. }
    \label{fig:2d}
\end{figure}

%
%
%
%
%
%
%
%
}
\end{example}

\section{Domain Reduction: Proof of \Lem{dom_red_main}} \label{sec:dom_red_main}

In this section, we use \Lem{stack_bound} and \Lem{line_sampling} to prove \Lem{dom_red_main}.
Recall that $D := [n] \times \left(\prod_{i=2}^d [n_i]\right)$ and $D_T := T \times \left(\prod_{i=2}^d [n_i]\right)$ denote the original and reduced domains, respectively. Note that $|D_T| = \frac{k}{n} |D|$. Let $M$ be the matching given by \Lem{stack_bound} and consider $\lambda = \left\lceil25k^{2/7}\right\rceil$. Clearly, $\lambda \in [25k^{2/7}, 26k^{2/7}]$. 

\noindent
Thus, by \Lem{stack_bound}, we have 
%
    $\left|\bigcup_{(\ell,S) : |M^{(\ell,S)}| \geq 26k^{2/7}} M^{(\ell,S)}\right| \leq \frac{5}{\sqrt{25k^{2/7}}} \cdot |D| = \frac{|D|}{k^{1/7}}$.
Let 
$$ \hM := M \setminus \left( \bigcup_{(\ell,S): |M^{(\ell,S)}| \geq 26k^{2/7}} M^{(\ell,S)} \right)$$
denote the set of pairs in $M$ which do not belong to stacks larger than $26k^{2/7}$; we therefore have


\begin{align} \label{eq:stack_loss}
    \sum_{\ell \in \cL} |\hM^{(\ell)}| = |\hM| \geq |M| - \frac{|D|}{k^{1/7}} \text{.}
\end{align}

In this proof, our goal is to \emph{construct} a matching $M_T$ in the violation graph of $f_T$ whose cardinality is sufficiently large. We measure $\Expect_{T}\left[|M_T|\right]$ by summing over all lines in $\cL$ and applying \Lem{line_sampling} to each. Notice that $\hM$ is a matching in the violation graph of $f$ which satisfies $|\hM^{(\ell,S)}| \leq 26 k^{2/7}$ for all $\ell \in \cL$ and $S \in \cS$. Thus by \Lem{line_sampling}, for every $\ell \in \cL$,

\begin{align} \label{eq:line_size}
    \Expect_{T}\left[|M^{(\ell)}_T|\right] \geq \frac{k}{n} \cdot |\hM^{(\ell)}| - 3 \cdot (26 k^{2/7}) \cdot \sqrt{k \ln k} \geq \frac{k}{n} \cdot |\hM^{(\ell)}| - 78 k^{5/6}
\end{align}

\noindent where we have used $\sqrt{\ln k} < k^{1/3-2/7}$. Now, using \Eqn{stack_loss} and \Eqn{line_size}, we can calculate $\Expect_{T}\left[|M_T|\right]$. We use the fact that $\{\hM^{(\ell)}\}_{\ell \in \cL}$ is a partition of $\hM$, apply linearity of expectation and use \Lem{line_sampling} to measure $\Expect_{T}[|M^{(\ell)}_T|]$ for each $\ell$. Also note that the number of lines is $|\cL| = |D|/n$.

\begin{align} \label{eq:crossing-size}
    \Expect_{T}\left[\left|M_T\right|\right] &= \Expect_{T}\left[\sum_{\ell \in \cL} |M^{(\ell)}_T|\right] = \sum_{\ell \in \cL} \Expect_{T}\left[|M^{(\ell)}_T|\right] \geq \sum_{\ell \in \cL} \left(\frac{k}{n} \cdot |\hM^{(\ell)}| - 78 k^{5/6}\right) \text{ (by \Eqn{line_size})}\nonumber \\
    &= \left(\frac{k}{n} \cdot \sum_{\ell \in \cL} |\hM^{(\ell)}|\right) - \left(78 k^{5/6} \cdot \frac{|D|}{n}\right) \geq \frac{k}{n} \cdot \left(|M| - \frac{|D|}{k^{1/7}}\right) - \left(78 k^{5/6} \cdot \frac{|D|}{n}\right) \text{ (by \Eqn{stack_loss})}\nonumber \\
    &= \frac{k}{n} \cdot \left(|M| - \frac{|D|}{k^{1/7}} - \frac{78 |D|}{k^{1/6}} \right) \geq \frac{k}{n}\cdot\left(|M| - \frac{C \cdot |D|}{k^{1/7}}\right)
\end{align}



\noindent for a constant $C > 0$, since $\frac{1}{k^{1/7}}$ dominates $\frac{1}{k^{1/6}}$. \Eqn{crossing-size} gives the expected cardinality of our matching after sampling. To recover the distance to monotonicity we simply normalize by the size of the domain. Dividing by $|D_T| = \frac{k}{n} |D|$, we get $\Expect_{T}\left[\eps_{f_T}\right] \geq \frac{|M|}{|D|} - \frac{C}{k^{1/7}} = \eps_{f} - \frac{C}{k^{1/7}}$. 
This completes the proof of \Lem{dom_red_main}. \pushQED{\qed} \qedhere \popQED

\section{Stack Bound: Proof of \Lem{stack_bound}} \label{sec:stack_bound}

We are given a Boolean function $f\colon D \to \{0,1\}$ where $D = [n] \times \left(\prod_{i=2}^d [n_i]\right)$ is a rectangular hypergrid for some $n,n_2,\ldots,n_d \in \mathbb{Z}^+$. \Lem{stack_bound} asserts there is a maximum cardinality matching $M$ such that $\sum_{(\ell,S) : |M^{(\ell,S)}| \geq \lambda} |M^{(\ell,S)}| \leq \frac{5}{\sqrt{\lambda}}\cdot |D|$ for all $\lambda \in \mathbb{Z}^+$. \smallskip

Given a matching $M$, we consider the vector (or technically, the list) $\Lambda(M)$ indexed by stacks $(\ell,S)$ with $\Lambda_{\ell,S} := |M^{(\ell,S)}|$, and 
list these in {\em non-decreasing} order.
Consider the maximum cardinality matching $M$ in the violation graph of $f$ which has the lexicographically largest $\Lambda(M)$. That is, the minimum entry of $\Lambda(M)$ is maximized, and subject to that the second-minimum is maximized and so on. We fix this matching $M$ and claim that it  
satisfies $\sum_{(\ell,S) : |M^{(\ell,S)}| \geq \lambda} |M^{(\ell,S)}| \leq \frac{5}{\sqrt{\lambda}}\cdot |D|$ for all $\lambda \in \mathbb{Z}^+$. Note that the inequality is trivial for $\lambda \leq 100$, since $M$ itself is of size at most $\eps_f|D| \leq \frac{1}{2}|D|$. Thus, in what follows we prove that the inequality is true for an arbitrary, fixed $\lambda > 100$. We first introduce the following notation.

\begin{itemize} [noitemsep]
    \item (Low Stacks) $L := \{(\ell,S) \in \cL \times \cS: |M^{(\ell,S)}| \leq \lambda - 2\}$.
    \item (High Stacks) $H := \{(\ell,S) \in \cL \times \cS: |M^{(\ell,S)}| \geq \lambda\}$.
\end{itemize}

Let $V(H)$ denote the set of vertices matched by $\bigcup_{(\ell,S) \in H} M^{(\ell,S)}$. 
Let $B$ (for blue) be the set of points in $V(H)$ with function value $0$, and $R$ (for red) be the set of points in $V(H)$ with function value $1$.
$M$ induces a perfect matching between $B$ and $R$, and we wish to prove $|B| = |R| \leq  \frac{5}{\sqrt{\lambda}} \cdot |D|$. 
Indeed, define $\delta$ to be such that $|B| = \delta|D|$. In the remainder of the proof, we will show that $\delta \leq \frac{5}{\sqrt{\lambda}}$. \smallskip


We make a simple observation that for any fixed line $\ell$, there cannot be too many non-low stacks $(\ell, S)$.

\begin{claim} \label{clm:trivial_bound} For any line $\ell$, the number of non-low stacks $\ell$ participates in is at most $\frac{n}{\lambda - 1}$. \end{claim}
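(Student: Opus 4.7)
My plan is to prove this by a direct counting argument, using only the definition of a stack and the fact that $M$ is a matching. The key observation is that every pair counted in a stack $(\ell, S)$ has its lower endpoint on the line $\ell$, and these endpoints must be distinct across all stacks that share this line.

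First I would note that for a fixed line $\ell$, the collection of stacks $\{M^{(\ell, S)}\}_{S \in \cS}$ forms a partition of $M^{(\ell)}$, since every pair $(x,y) \in M^{(\ell)}$ lies in exactly one slice (namely the slice $S$ with $y_1 = x_1$ replaced by the appropriate coordinate of $y$). Consequently, the lower endpoints of all pairs appearing in stacks of the form $(\ell, S)$ together form a subset of $\ell$, and are pairwise distinct because $M$ is a matching. Since $|\ell| = n$, this yields $\sum_{S \in \cS} |M^{(\ell, S)}| \leq n$.

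Now by the definition in the preceding paragraph, a non-low stack is one with $|M^{(\ell, S)}| \geq \lambda - 1$ (the complement of the low condition $\leq \lambda - 2$). So if $t$ denotes the number of non-low stacks in which $\ell$ participates, then the pairs contained in just these stacks already number at least $t(\lambda - 1)$, and combining with the bound from the previous paragraph gives $t(\lambda - 1) \leq n$, i.e., $t \leq \frac{n}{\lambda - 1}$, as required. I do not anticipate any technical obstacle here: the claim is a pure pigeonhole argument on the line, and no use of the Boolean structure of $f$ or the extremal choice of $M$ is needed. The claim is presumably named "trivial" precisely because it only uses that the lower endpoints are distinct points on a single line of length $n$.
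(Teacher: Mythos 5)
Your proof is correct and is essentially identical to the paper's argument: both count the lower endpoints (equivalently, their $1$-coordinates) of pairs in non-low stacks, observe that these are distinct points on $\ell$ and hence number at most $n$, and note that each non-low stack contributes at least $\lambda-1$ of them. The minor garbled parenthetical about which slice a pair belongs to (the slice is simply the one containing $y$, i.e., determined by $y_1$) does not affect the argument.
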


\begin{proof} Fix any line $\ell$ and consider the set $\bigcup_{S : (\ell,S) \notin L} \left\{x_1 : \exists(x,y) \in M^{(\ell,S)}\right\}$.
That is, the set of 1-coordinates that are used by 
some non-low stack involving $\ell$. 
The size of this set can't be bigger than the length of $\ell$, which is $n$.
	Furthermore, each non-low stack contributes at least $\lambda - 1$ {\em unique} entries to this set. The uniqueness follows since the union $\bigcup_{S : (\ell,S) \notin L} M^{(\ell,S)}$ is a matching.
%
\end{proof}

We show that if the number of blue points $|B|$ is large ($> 5|D|/\sqrt{\lambda}$), then we will find a line participating in more than $n/(\lambda - 1)$ non-low stacks. To do so, we need to ``find'' these non-low stacks. 
%
We need some more notation to proceed. For a vertex $z$, we let $\ell_z$ ($S_z$, resp.) denote the unique line (slice, resp.) containing $z$. For each blue point $y \in B$, we define the following interval 
	\[
		\cI_y := \left\{z \in \ell_y : z_1 \in [x_1,y_1]\right\} \subseteq \ell_y ~~\textrm{where}~~ (x,y)\in M \text{.}
	\] 
Note that $\cI_y$ is the interval of $\ell_y$ whose endpoints are given by the projection of $(x,y)$ onto $\ell_y$. Armed with this notation, we can find our non-low stacks. Our next claim, which is the heart of the proof and uses the potential function, 
shows that for every high stack $(\ell,S)$, we get a bunch of other ``non-low'' stacks participating with the line $\ell$. Refer to \Fig{claim4-2} for an accompanying illustration of the proof.

\begin{claim} \label{clm:blue_interval_prop} 
		Given $y \in B$, let $x := M^{-1}(y)$ and suppose $(\ell,S) \in H$ is such that $(x,y) \in M^{(\ell,S)}$ (note that this stack, $(\ell,S)$, exists by definition of $B$).
		Then, for any $z \in \cI_y \cap B$, $(\ell,S_{z}) \notin L$. 
\end{claim}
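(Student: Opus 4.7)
The plan is to derive a contradiction with the lex-maximal choice of $M$ by exhibiting a matching swap that would strictly improve $\Lambda(M)$. Assume for contradiction that $|M^{(\ell, S_z)}| \leq \lambda - 2$.

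First, I will set up notation and dispatch two trivial cases. Let $w := M^{-1}(z)$ be the red partner of $z$; since $z \in B$ is matched in a high stack, $|M^{(\ell_w, S_z)}| \geq \lambda$. If $z = y$, then $S_z = S$, so $(\ell, S_z) = (\ell, S) \in H$ and we are done. If $\ell_w = \ell$, then $(\ell, S_z) = (\ell_w, S_z) \in H$, again done. So I may assume $z \neq y$ and $\ell_w \neq \ell$, in which case the four stacks $(\ell, S)$, $(\ell_w, S_z)$, $(\ell, S_z)$, $(\ell_w, S)$ are pairwise distinct (using also that $z \neq y$ forces $S_z \neq S$).

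Next I will define the rewired matching $M' := (M \setminus \{(x, y), (w, z)\}) \cup \{(x, z), (w, y)\}$ and verify it is again a maximum cardinality violation matching. The key check is that $(x, z)$ and $(w, y)$ are comparable violating pairs. Since $z \in \cI_y$ gives $z_{-1} = y_{-1}$ and $z_1 \in [x_1, y_1]$, we get $x_1 \leq z_1$ and, combining with $x \prec y$, also $x_i \leq y_i = z_i$ for every $i \geq 2$; hence $x \prec z$, and with $f(x) = 1, f(z) = 0$ the pair $(x, z)$ is a violation. A symmetric argument using $w \prec z$ and $z \preceq y$ (which holds because $z_1 \leq y_1$ and $z_{-1} = y_{-1}$) yields the violation $(w, y)$. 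Since $|M'| = |M|$, $M'$ is also a maximum cardinality violation matching.

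The remaining task---and the main technical obstacle---is to show $\Lambda(M') \succ_{\mathrm{lex}} \Lambda(M)$, contradicting the lex-maximal choice of $M$. The swap modifies exactly four stack sizes: $|M^{(\ell, S)}|$ and $|M^{(\ell_w, S_z)}|$ (both $\geq \lambda$) each decrease by one, while $a := |M^{(\ell, S_z)}| \leq \lambda - 2$ and $t := |M^{(\ell_w, S)}| \geq 0$ each increase by one. The delicate point is that $t$ is a priori uncontrolled, so I will compare the cumulative count functions $N_M(v) := |\{(\ell', S') : |M^{(\ell', S')}| \leq v\}|$ and $N_{M'}(v)$ directly. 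A direct tabulation shows that the swap contributes $+1$ to $N_M - N_{M'}$ at $v \in \{a, t\}$ and $-1$ at $v \in \{a+1, t+1, b_1 - 1, b_2 - 1\}$ where $b_1, b_2 \geq \lambda$; in particular, since $a \leq \lambda - 2 < \lambda - 1 \leq b_i - 1$, all the $-1$ contributions occur at values strictly larger than $\min(a, t)$. Consequently, $N_M(v) = N_{M'}(v)$ for every $v < \min(a, t)$ while $N_M(\min(a, t)) \geq N_{M'}(\min(a, t)) + 1$, which forces the first coordinate where the sorted vectors $\Lambda(M)$ and $\Lambda(M')$ differ to have $\Lambda(M')$ strictly larger, giving the desired contradiction.
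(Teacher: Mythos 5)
Your proof is correct and follows the paper's argument essentially verbatim: the same two trivial cases ($z=y$ and $\ell_w=\ell$), the same rewiring of $(x,y),(w,z)$ into $(x,z),(w,y)$, and the same contradiction with the lexicographic maximality of $\Lambda(M)$. Your cumulative-count comparison via $N_M(v)$ is simply a more careful rendering of the paper's (somewhat informal) observation that the two decreased stacks remain strictly larger than the increased low stack, so the first discrepancy between the sorted vectors favors the rewired matching.
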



\begin{proof}
	The claim is obviously true if $z = y$, since this implies $S_{z} = S$ (since $y \in S$) and $(\ell,S) \in H$ by assumption. Therefore, we may assume $z \neq y$, and we also assume, for contradiction's sake, $(\ell,S_z)\in L$. 
	Note that $x \in \ell$ and by definition of $\cI_y$, we get $x \prec z \prec y$.

%
	Since $z\in B$, it is matched to some $w\in R$. Note $w\prec z \prec y$. Furthermore, the stack $(\ell_w,S_z) \in H$ (by definition of $B$). Thus, note that if $\ell_w = \ell$ (i.e., $w \in \ell$), then we're done and so in what follows we assume $\ell_w \neq \ell$.
	By assumption of the claim, $(\ell,S)\in H$. In particular, $x,w,z,y \in V(H)$. Now consider the new matching $N$ which deletes $(x,y)$ and $(w,z)$
	and adds $(x,z)$ and $(w,y)$. Note that the cardinality remains the same, i.e. $|N| = |M|$. 
	
	We now show that $\Lambda(N)$ is lexicographically bigger than $\Lambda(M)$. To see this, consider the stacks whose sizes have changed from $M$ to $N$. 
	There are four of them (since we swap two pairs), namely the stacks $(\ell,S), (\ell_w,S_z), (\ell,S_z)$, and $(\ell_w,S)$. For brevity's sake, let us denote their sizes in $M$ as $\lambda_1, \lambda_2,\lambda_3$, and $\lambda_4$, respectively. In $N$, their sizes are $\lambda_1 - 1, \lambda_2 - 1, \lambda_3 + 1$, and $\lambda_4 + 1$. Note that $\lambda_3 \leq \lambda - 2$ and both $\lambda_1$ and $\lambda_2$ are $\ge \lambda$. In particular, the ``new'' size of stack $(\ell,S_z)$ is still smaller than the ``new'' sizes of stacks $(\ell,S)$ and $(\ell_w, S_z)$. That is, 
	the vector $\Lambda(N)$, even without the increase in $\lambda_4$, is lexicographically larger than $\Lambda(M)$. Since increasing
    the smallest coordinate (among some coordinates) increases the lexicographic order, we get a contradiction to the lexicographic maximality of $\Lambda(M)$. 
\end{proof}

\begin{figure}[h]
    \begin{center}
    \includegraphics[clip, scale=0.6]{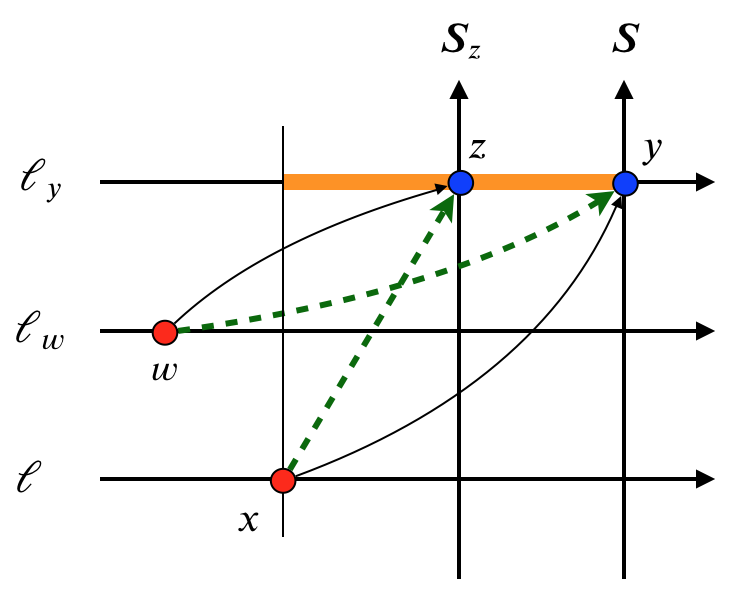}
    \end{center}
    \caption{Accompanying illustration for the proof of \Clm{blue_interval_prop}. The black connecting arrows represent the matching, $M$, while the dashed green arrows represent the new matching, $N$. The bold orange segment of $\ell_y$ is the interval $\cI_y$.}
    \label{fig:claim4-2}
\end{figure}

The rest of the proof is a (slightly technical) 
averaging argument to prove that $|B|$ is small. We introduce some more notation to carry this through.
For a blue point $y \in B$, let $\beta_y := \frac{|\cI_y \cap B|}{|\cI_y|}$ denote the fraction of blue points in $\cI_y$. For $\alpha \in (0,1)$, we say that $y \in B$ is $\alpha$-rich if $\beta_y \geq \alpha$. A point $x\in R$ is $\alpha$-rich if its blue partner $y\in B$ (i.e. $(x,y)\in M$) is $\alpha$-rich. We also call the pair $(x,y)$ an $\alpha$-rich pair. For what follows, recall that $\delta \in (0,1)$ is defined such that $|B| = \delta|D|$.

\begin{claim} \label{clm:rich_blue} At least $\delta |D|/2$ of the points in $B$ are $\delta/4$-rich.
%
\end{claim}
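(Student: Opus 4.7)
My plan is to prove this claim by a double-counting/averaging argument. Let $P := \{y \in B : \beta_y < \delta/4\}$ denote the set of "poor" blue points; the goal is to show $|P| \leq \delta|D|/2 = |B|/2$.

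First I would record two immediate consequences of poorness. Since $y \in \cI_y \cap B$ automatically, one has $\beta_y \geq 1/|\cI_y|$, so $y \in P$ forces $|\cI_y| > 4/\delta$. Equivalently, $|\cI_y \setminus B| > (1 - \delta/4)|\cI_y|$, so poor intervals are long and dominantly non-blue. Thus the poor points are exactly those sitting in long, blue-sparse left-extending intervals on their lines.

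Next I would decompose line-by-line. For each line $\ell \in \cL$, write $B_\ell := B \cap \ell$ and $P_\ell := P \cap \ell$. Classifying lines into "sparse" ($|B_\ell| < n\delta/2$) and "dense", the sparse lines contribute a total of at most $(|D|/n)\cdot n\delta/2 = |B|/2$ blue points. The plan is then to show that on dense lines the number of poor blue points is only a vanishing fraction of the blue points there, so that combining with sparse-line contributions keeps $|P|$ bounded by $|B|/2$ (and more carefully, to tune the sparse/dense threshold so that the two contributions balance).

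For the dense-line bound I would exploit the structural fact that each $\cI_y$ is a leftward-extending interval on $\ell_y$ whose right endpoint is the blue point $y$ itself. Sorting $B_\ell = \{y^{(1)}, \ldots, y^{(m)}\}$ by $1$-coordinate, one has $|\cI_{y^{(j)}} \cap B_\ell| = |\{i \leq j : y^{(i)}_1 \geq x^{(j)}_1\}|$, and I would attempt a sweep or peeling argument along the line: since blue points on a dense line already occur with density $\Omega(\delta)$, the expected blue mass inside a left-extending interval should be $\Omega(\delta|\cI_y|)$, so most $y$ exceed the threshold $\delta/4$.

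The hard part will be that a naive Markov-type bound on $\beta_y$ is too weak: one can only extract $\Expect_{y \in B}[\beta_y] = \Omega(\delta)$ in the worst case, whereas a reverse-Markov bound at the threshold $\delta/4$ would require an expectation close to $1/2$. So the proof cannot be a blackbox moment bound but must leverage the explicit left-extending interval geometry. I expect the cleanest route is a careful double-count of pairs $(y,z) \in B \times B$ with $z \in \cI_y$, organized per line and per dyadic bin of $|\cI_y|$, and possibly also using the maximality property of the matching $M$ chosen in \Lem{stack_bound}.
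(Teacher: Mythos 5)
Your proposal correctly isolates the target ($|P| \le \delta|D|/2$, where $P$ is the set of poor blue points) and correctly senses that counting pairs $(y,z)$ with $z \in \cI_y \cap B$ is the relevant object, but it stops at a plan: the step you yourself flag as "the hard part" is exactly the step the proof needs, and the route you sketch for it does not work. Two concrete problems. First, the sparse/dense line decomposition is dead on arrival: with threshold $|B \cap \ell| < n\delta/2$, the sparse lines alone can already contain $(|D|/n)\cdot(n\delta/2) = \delta|D|/2$ blue points, i.e.\ the entire budget, so the dense lines would need to contribute \emph{zero} poor points, which is false; and no retuning fixes this, because the global density $\delta$ says nothing about how blue points distribute across individual lines. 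Second, the dense-line heuristic ("blue density $\Omega(\delta)$ on the line, hence blue mass $\Omega(\delta|\cI_y|)$ in each left-extending interval") is precisely the reverse-Markov-style inference you correctly identify as unavailable: a particular interval $\cI_y$ can avoid the blue points of its line entirely, and nothing in the setup prevents this.

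The missing idea is that no lower bound on richness is needed at all; the claim follows from an upper bound that is almost self-referential. Every poor point $y$ is itself a blue point lying in its own poor interval, so $P \subseteq \bigcup_{y \in P} (\cI_y \cap B)$. Pass to a minimal subfamily of the intervals $\{\cI_y\}_{y \in P}$ covering the same union; minimality of an interval cover forces every point to lie in at most two chosen intervals, hence $\sum_y |\cI_y| \le 2\left|\bigcup_y \cI_y\right| \le 2|D|$. Since each poor interval has blue fraction below $\delta/4$ by definition, $|P| \le \sum_y |\cI_y \cap B| < (\delta/4)\sum_y |\cI_y| \le \delta|D|/2$. This is the paper's argument; it uses only the definition of poorness plus the overlap-two property of minimal interval covers, and needs neither a per-line density case analysis nor the lexicographic maximality of $M$ (which enters only in \Clm{blue_interval_prop}, not here).
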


%

\begin{proof} 
	\def\Bpoor{B^{(\mathsf{poor})}}
	\def\Cpoor{B^{(\mathsf{poor})}_{\mathsf{min}}}
	Let $\Bpoor \subseteq B$ be the points with $\beta_y < \delta/4$. We show $|\Bpoor| \leq \delta |D|/2$ which proves the claim.	
	To see this, first observe $\Bpoor \subseteq \bigcup_{y \in \Bpoor} \left(\cI_y \cap B\right)$. 
	Now consider the {\em minimal} subset $\Cpoor \subseteq \Bpoor$ such that $\bigcup_{y \in \Cpoor} \cI_y  = \bigcup_{y \in \Bpoor} \cI_y $. That is, given a collection of intervals, we are picking the minimal subset covering the same points.
	Since these are intervals, we get that no point is contained in more than two intervals $\cI_y$ among $y\in \Cpoor$. In particular, this implies
	\begin{equation}\label{eq:trivial}
	\sum_{y\in \Cpoor} |\cI_y | \leq 2\cdot \left|\bigcup_{y \in \Cpoor} \cI_y  \right| \text{.}
	\end{equation}
	Therefore,
%
%

\begin{align*}
    \left|\Bpoor\right| &\leq \left|\bigcup_{y \in \Bpoor} \left(\cI_y \cap B\right) \right| = \left|\bigcup_{y \in \Cpoor} \left(\cI_y \cap B\right) \right|\leq \sum_{y \in \Cpoor} |\cI_y \cap B| \\ 
    &<  \frac{\delta}{4}\sum_{y \in \Cpoor} |\cI_y| ~\leq~ \frac{\delta}{2}\cdot\left|\bigcup_{y \in \Cpoor} \cI_y\right| ~\leq~ \frac{\delta}{2} \cdot |D| \text{.}
\end{align*}
The first equality follows from the definition of $\Cpoor$ (taking intersection with $B$), and the third (strict) inequality follows from the fact that none of these points are $\delta/4$-rich. The fourth inequality is \eqref{eq:trivial}. This completes the proof. 
\end{proof}

A corollary of \Clm{rich_blue} is that there are at least $\delta |D|/2$ red points which are $\delta/4$-rich. In particular, 
there must exist some line $\ell$ that contains $\geq \delta n/2$ red points in it which are $\delta/4$-rich. Let this line be $\ell$ and let
$R_\ell \subseteq \ell$ be the set of rich red points. Let $B_\ell$ be their partners in $M$.
%
%
Let $\cS^\ell = \left\{S \in \cS : \exists z \in S \cap \left(\cup_{y \in B^\ell} \cI_y \cap B\right)\right\}$ denote the set of slices containing blue points from the collection of rich intervals, $\{\cI_y : y \in B^\ell\}$. By \Clm{blue_interval_prop}, we know that all these stacks are non-low, that is, $(\ell,S) \notin L$ for all $S \in \cS^\ell$. We now {\em lower bound} the cardinality of this set.

Consider the set of blue points in our union of rich intervals from $B^\ell$, $\bigcup_{y \in B^\ell} \cI_y \cap B$. There are precisely $n$ slices in total, and for a vertex $z \in D$, $S_{z}$ is the slice indexed by the $1$-coordinate of $z$. Thus, we have $|\cS^\ell| = |\{z_1 : z \in \bigcup_{y \in B^\ell} \cI_y \cap B\}|$. That is, $|\cS^\ell|$ is exactly the number of unique $1$-coordinates among vertices in $\bigcup_{y \in B^\ell} \cI_y \cap B$. 

Since we care about the number of unique $1$-coordinates, we consider the "projections" of our sets of interest onto dimension $1$. 
For a set $X \subseteq D$, let $\bo(X) := \{x_1 : x \in X\}$ be the set of $1$-coordinates used by points in $X$. In particular, note that for $y \in B$, $\bo(\cI_y) := [x_1,y_1] \subset [n]$, where $x := M^{-1}(y)$ and observe that $|\cS^\ell| = \left|\bigcup_{y \in B^\ell} \bo(\cI_y \cap B)\right|$. Now, given that each interval from $\{\cI_y\}_{y \in B^\ell}$ is a $\frac{\delta}{4}$-fraction blue, the following claim says that at least a $\frac{\delta}{8}$-fraction of the union of intervals consists of blue points with unique $1$-coordinates. 

\begin{claim} \label{clm:blue_union} $\left|\bigcup_{y \in B^\ell} \bo(\cI_y \cap B)\right| \geq \frac{\delta}{8} \left|\bigcup_{y \in B^\ell} \bo(\cI_y)\right|$. \end{claim}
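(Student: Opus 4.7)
} The plan is to use a minimal subcover argument, similar in spirit to the one in the proof of \Clm{rich_blue}, but applied in both directions (sum-to-union and union-to-sum). Let $U := \bigcup_{y \in B^\ell} \bo(\cI_y)$ and $V := \bigcup_{y \in B^\ell} \bo(\cI_y \cap B)$, so that $V \subseteq U$ and the goal is $|V| \geq (\delta/8) |U|$.

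First I would choose a minimal subcollection $C \subseteq B^\ell$ such that $\bigcup_{y \in C} \bo(\cI_y) = U$. Since each $\bo(\cI_y) = [x_1, y_1]$ is an interval in $[n]$ (where $x = M^{-1}(y)$), minimality forces each coordinate in $U$ to be contained in at most two of the intervals $\{\bo(\cI_y)\}_{y \in C}$; otherwise the middle interval could be dropped. Consequently $\sum_{y \in C} |\bo(\cI_y \cap B)| \leq 2 |V|$, which is the key union-to-sum inequality we need in order to lower bound $|V|$.

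Next I would exploit the fact that $\cI_y$ lies entirely on the line $\ell_y$, along which only the first coordinate varies. Thus the projection $\bo$ is a bijection on $\cI_y$, giving $|\bo(\cI_y \cap B)| = |\cI_y \cap B|$ and $|\bo(\cI_y)| = |\cI_y|$. By $\delta/4$-richness of each $y \in B^\ell$, this yields $|\bo(\cI_y \cap B)| \geq (\delta/4) |\bo(\cI_y)|$. Summing over $C$ and using $\sum_{y \in C} |\bo(\cI_y)| \geq |U|$ (trivially, since the intervals cover $U$) we obtain $\sum_{y \in C} |\bo(\cI_y \cap B)| \geq (\delta/4) |U|$.

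Combining the two bounds gives $2|V| \geq (\delta/4) |U|$, i.e.\ $|V| \geq (\delta/8) |U|$, as desired. The only subtle point — and what I expect to be the main conceptual check rather than a real obstacle — is the argument that a minimal subcollection of intervals on a line has depth at most $2$ everywhere; this standard interval-covering fact is what allows the two-sided sum-union conversion to work cleanly.
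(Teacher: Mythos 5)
Your proposal is correct and matches the paper's own proof essentially step for step: both pass to a minimal subcover of intervals on the line, use the fact that minimality bounds the covering depth by $2$, apply the $\delta/4$-richness of each $\cI_y$ together with the injectivity of $\mathsf{proj}_1$ on a single line, and combine the factor $\tfrac{1}{2}$ with $\tfrac{\delta}{4}$ to obtain $\tfrac{\delta}{8}$. No gaps; the interval-depth-two fact you flag is exactly the ingredient the paper also relies on.
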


\noindent \emph{Proof.} 
As in the proof of \Clm{blue_interval_prop}, let $B_{\mathsf{min}}^\ell \subseteq B^\ell$ be a minimal cardinality subset of $B^\ell$ such that $\bigcup_{y \in B^\ell_{\mathsf{min}}} \bo(\cI_y) = \bigcup_{y \in B^\ell} \bo(\cI_y)$. For any $y \in B$, $y$ belongs to \emph{at most} two intervals from $B_{\mathsf{min}}^\ell$.

\begin{align*}
    \left|\bigcup_{y \in B^\ell} \bo(\cI_y \cap B)\right| &= \left|\bigcup_{y \in B^\ell_{\mathsf{min}}} \bo(\cI_y \cap B)\right| \geq \frac{1}{2} \sum_{y \in B^\ell_{\mathsf{min}}} \left|\bo(\cI_y \cap B)\right| \\
    &\geq \frac{\delta}{8} \sum_{y\in B^\ell_{\mathsf{min}}} \left|\bo(\cI_y)\right| \geq \frac{\delta}{8} \left|\bigcup_{y \in B^\ell_{\mathsf{min}}} \bo(\cI_y)\right| = \frac{\delta}{8} \left|\bigcup_{y \in B^\ell} \bo(\cI_y)\right| \text{.} \pushQED{\qed} \qedhere \popQED
\end{align*}

Now importantly, $|\bo(R^\ell)| = |R^\ell| \geq \frac{\delta}{2} \cdot n$ since the $1$-coordinates of elements of $R^\ell$ are distinct (since $R^\ell$ is contained on a single line). Moreover, by definition of $\cI_y$, $\bo(R^\ell) \subseteq \bigcup_{y \in B^\ell} \bo(\cI_y)$ and so $\left|\bigcup_{y \in B^\ell} \bo(\cI_y)\right| \geq |\bo(R^\ell)| \geq \frac{\delta}{2} \cdot n$. Finally, combining this with \Clm{blue_union}, we get

\begin{align*}
    |\cS^\ell| = \left|\bigcup_{y \in B^\ell} \bo(\cI_y \cap B)\right| \geq \frac{\delta}{8} \left|\bigcup_{y \in B^\ell} \bo(\cI_y)\right| \geq \frac{\delta^2}{16} \cdot n \text{.}
\end{align*}

Therefore, $\ell$ participates in at least $\frac{\delta^2}{16} \cdot n$ non-low stacks. Thus, by \Clm{trivial_bound}, $\frac{\delta^2}{16} \cdot n \leq \frac{n}{\lambda - 1}$ and so $\delta \leq \frac{4}{\sqrt{\lambda-1}}$.
Since $\lambda > 100$, 
we conclude that $\delta \leq \frac{5}{\sqrt{\lambda}}$. This concludes the proof of \Lem{stack_bound}. \pushQED{\qed} \qedhere \popQED

\section{Line Sampling: Proof of \Lem{line_sampling}} \label{sec:discrepancy}

We recall the lemma for ease of reading. Given a line $\ell \in \cL$, we have defined $M^{(\ell)} := \{(x,y) \in M : x \in \ell\}$.
Given a stack $S$, we have defined $M^{(\ell,S)} := \{(x,y)\in M^{(\ell)}: y\in S\}$.
Given a multi-set $T \subseteq [n]$, recall $M^{(\ell)}_T$ is a maximum cardinality matching of violations $(x,y)$ such that 
(a) $x$ and $y$ are both matched by $M^{(\ell)}$, and (b) $x_1$ and $y_1$ both lie in $T$. Given $\lambda \in \mathbb{Z}^+$ such that $|M^{(\ell,S)}| \leq \lambda$ for all $\ell \in \cL$ and $S \in \cS$, the line sampling lemma (\Lem{line_sampling}) states

\begin{align}
	\Expect_{T}\left[|M^{(\ell)}_T|\right] \geq \frac{k}{n} \cdot |M^{(\ell)}| - 3 \lambda \sqrt{k \ln k} \text{.}
\end{align}

We note that BRY (Theorem 3.1,~\cite{BeRaYa14}) prove a stronger theorem for the $\lambda = 1$ case
(that gets an additive error of $\Theta(\sqrt{k})$). Our proof follows a similar approach.


	

Consider an arbitrary, fixed line $\ell \in \cL$. We use the matching $M^{(\ell)}$ to induce weights $w^+(i),w^-(i)$ on $[n]$ as follows.
Initially $w^+(i),w^-(i) = 0$ for all $i \in [n]$.
For each $(x,y)\in M^{(\ell)}$ if $x\in S_i$ then we increase $w^+(i)$ by $1$, and if $y\in S_j$ then we increase $w^-(j)$ by $1$. 

\begin{claim}\label{clm:easy}
We make a few observations.
	\begin{enumerate}[noitemsep]
\item	For any $i\in [n]$, $w^+(i) \le 1$.
\item 	For any $i\in [n]$, $w^-(i) \leq \lambda$.
\item   For any $t\in [n]$, $\sum_{s \leq t} (w^-(s) - w^+(s)) \leq 0$.
	\end{enumerate}
\end{claim}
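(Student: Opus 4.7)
All three parts should follow directly from the definitions, so the plan is simply to unpack each assertion and point to the structural feature of $M^{(\ell)}$ that makes it true. The main thing to verify is that the weights have been set up so that each part reduces to a clean combinatorial observation; I do not anticipate any real obstacle.

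For part (1), the plan is to recall that the line $\ell = \ell_z$ consists of all points of $D$ whose coordinates $2,\ldots,d$ are fixed to $z$, so the first coordinate is a bijection between $\ell$ and $[n]$. Every lower endpoint $x$ of a pair in $M^{(\ell)}$ lies in $\ell$, and since $M^{(\ell)} \subseteq M$ is a matching, each point of $\ell$ participates in at most one pair. Therefore the number of lower endpoints with $x_1 = i$ is at most one, which is exactly $w^+(i) \le 1$.

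For part (2), I would note that the set of pairs in $M^{(\ell)}$ whose upper endpoint lies in the slice $S_i$ is, by the very definition of a stack, the stack $M^{(\ell,S_i)}$. The weight $w^-(i)$ is the cardinality of this set, and the hypothesis of the lemma is that $|M^{(\ell,S)}| \le \lambda$ for all $\ell$ and $S$, so $w^-(i)\le \lambda$ is immediate.

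For part (3), I would use the fact that every pair $(x,y)\in M^{(\ell)}$ is a monotonicity violation and in particular satisfies $x \prec y$, so $x_1 \le y_1$. Consequently, if the upper endpoint $y$ lies in $S_1 \cup \cdots \cup S_t$, then so does the lower endpoint $x$. Summing over pairs, the total number of upper endpoints in the first $t$ slices is at most the total number of lower endpoints there, i.e.
\[
\sum_{s \le t} w^-(s) \;\le\; \sum_{s \le t} w^+(s),
\]
which rearranges to the desired inequality. (Lower endpoints in the first $t$ slices whose upper endpoints lie in $S_{t+1}\cup\cdots\cup S_n$ account for the possible strict inequality.) This completes the plan for all three items of the claim.
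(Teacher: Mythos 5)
Your proposal is correct and follows essentially the same route as the paper: part (1) from the distinctness of the $1$-coordinates of lower endpoints on a line, part (2) from the identification $w^-(i)=|M^{(\ell,S_i)}|$ and the stack-size hypothesis, and part (3) from the observation that each increment of $w^-(j)$ is accompanied by an increment of $w^+(i)$ with $i\le j$. No gaps.
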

\begin{proof}
The first observation follows since the lower endpoints of $M^{(\ell)}$ all lie on $\ell$, and thus have distinct $1$-coordinates.
The second observation follows from the assumption that $|M^{(\ell,S)}| \leq \lambda$ for all $(\ell,S) \in \cL \times \cS$.
The third observation follows by noting that whenever $w^-(j)$ is increased for some $j$, we also increase $w^+(i)$ for some $i<j$.
\end{proof}
\noindent
Define $V^+ := \{i: w^+(i) > 0\}$ and $V^- := \{j: w^-(j) > 0\}$.
Given a multiset $T\subseteq [n]$, denote $V^+_{T} := V^+ \cap T$ and $V^{-}_{T} := V^- \cap T$.
Also, define the bipartite graph $G_T := (V^+_{T}, V^-_{T}, E_T)$ where $(i,j) \in E_T$ iff $i\leq j$.
A $w$-matching $A$ in $G_T$ is a subset of edges of $E_T$ such that every vertex $i \in V^+_{T}$ has at most $w^+(i)$ edges of $A$ incident on it,
and every vertex $j\in V^-_{T}$ has at most $w^-(j)$ edges of $A$ incident on it.
Let $\nu(G_T)$ denote the size of the largest $w$-matching in $G_T$.
\begin{lemma}\label{lem:apple}
For any multiset $T\subseteq [n]$ and any $w$-matching $A\subseteq E_T$ in $G_T$, we have $|M^{(\ell)}_{T}| \geq |A|$. 
In particular, $\Expect_{T}\left[|M^{(\ell)}_T|\right] \geq \Expect_{T}\left[\nu(G_T)\right]$.
\end{lemma}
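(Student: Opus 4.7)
\emph{Plan.} The goal is to exhibit, for every fixed multiset $T$ and every $w$-matching $A$ in $G_T$, an explicit injection from the edges of $A$ into the edges of some violation matching satisfying the conditions defining $M^{(\ell)}_T$. This will give the pointwise bound $|M^{(\ell)}_T| \geq |A|$; applying it to a maximum $w$-matching $A$ yields $|M^{(\ell)}_T| \geq \nu(G_T)$, and taking expectation over $T$ proves the second inequality.

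The lift is forced by the definitions. By the first part of \Clm{easy}, each $i \in V^+$ is the first coordinate of a unique lower endpoint of $M^{(\ell)}$, which lies on $\ell$; call it $x^i$ and note $f(x^i)=1$. For each $j\in V^-$ let $Y_j$ denote the set of upper endpoints of pairs in $M^{(\ell)}$ whose first coordinate equals $j$; then $|Y_j|=w^-(j)$ and $f(y)=0$ for every $y\in Y_j$. Given an edge $(i,j)\in A$ (so in particular $i,j\in T$ and $i\le j$), my plan is to map it to a pair $(x^i,y)$ with $y\in Y_j$. The $w$-matching constraints do exactly the bookkeeping I need: the vertex $i$ has $A$-degree at most $w^+(i)=1$, so each $x^i$ is used at most once; and the vertex $j$ has $A$-degree at most $w^-(j)=|Y_j|$, so I can greedily assign a fresh $y\in Y_j$ to each $A$-edge incident to $j$. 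This produces a set of $|A|$ vertex-disjoint pairs.

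It remains to check each constructed pair $(x^i,y)$ is a valid element of $M^{(\ell)}_T$. Comparability is the only nontrivial point: $x^i$ lies on $\ell$, so $x^i_{-1}=\ell_{-1}$; and $y$ is the upper endpoint of some pair in $M^{(\ell)}$ whose lower endpoint sits on $\ell$, which forces $y_{-1}\succeq \ell_{-1}=x^i_{-1}$. Combined with $x^i_1=i\le j=y_1$, this gives $x^i\prec y$. The $f$-values disagree by construction, both first coordinates lie in $T$ so both points lie in $D_T$, and both vertices are matched by $M^{(\ell)}$ (as lower and upper endpoint, respectively). Hence the constructed collection is a valid matching competing for $M^{(\ell)}_T$, and $|M^{(\ell)}_T|\ge |A|$.

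I do not anticipate a serious obstacle: once one sets up the correspondence $i\mapsto x^i$ and $j\mapsto Y_j$, everything reduces to checking that the capacity constraints defining a $w$-matching in $G_T$ are precisely the capacities available on the violation side. The only place that requires a moment of care is verifying $y_{-1}\succeq \ell_{-1}$ for every $y\in Y_j$, which uses the fact that $y$ is paired in $M^{(\ell)}$ with a vertex of $\ell$. After establishing the pointwise inequality, passing to a maximum $w$-matching and then to expectations is immediate.
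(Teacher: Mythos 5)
Your proposal is correct and follows essentially the same charging argument as the paper: both map each edge $(i,j)\in A$ injectively to a violation pair $(x,y)$ with $x\in\ell$, $x_1=i$, $y\in S_j$, both endpoints matched by $M^{(\ell)}$ and both first coordinates in $T$, with the comparability check resting on the same observation that $y$'s partner in $M^{(\ell)}$ lies on $\ell$. The only cosmetic difference is that you invoke $w^+(i)\le 1$ to name a unique $x^i$, whereas the paper's charging works verbatim for general $w^+(i)$.
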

\begin{proof}
Consider any $w$-matching $A\subseteq E_T$. For any vertex $i\in V^+_{T}$, there are at most $w^+(i)$ edges in $A$ incident on it.
Each increase of $w^+(i)$ is due to an edge $(x,y)\in M^{(\ell)}$ where $x_1 = i$. Thus, we can charge each of these edges of $A$ (arbitrarily, but uniquely) to 
$w^+(i)$ different $x\in \ell$. 
Similarly, for any vertex $j\in V^-_{T}$, there are at most $w^-(j)$ edges in $A$ incident on it.
Each increase of $w^-(j)$ is due to an edge $(x,y)\in M^{(\ell)}$ with $y_1 = j$. Thus, we can charge each of these edges of $A$ (arbitrarily, but uniquely) to $w^-(j)$ different $y\in S_j$, the $j$th slice. Furthermore, any $z \in \ell$ with $z_1 \leq j$ satisfies $z \prec y$.
To summarize, each $(i,j) \in A$ can be uniquely charged to an $x\in \ell$ with $x_1 = i$ and $y\in S_j$ such that (a) $(x,y)$ forms a violation, (b) $x,y$ were matched in $M^{(\ell)}$, and (c) $x_1,y_1\in T$. Therefore, $|M^{(\ell)}_T| \ge |A|$ since the LHS is the maximum cardinality matching.
\end{proof}

\begin{lemma}~\label{lem:match}
For any $T\subseteq [n]$, we have 
\[
\nu(G_T) = \sum_{j\in T} w^-(j) - \max_{t \in T} \sum_{s \in T : s\leq t} \left(w^-(s) - w^+(s) \right)
 \text{.} \]
\end{lemma}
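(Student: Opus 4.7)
The plan is to interpret $\nu(G_T)$ as a maximum flow value and compute it via the min-cut characterization. Build the usual network: a source connected by arcs of capacity $w^+(i)$ to each $i \in V^+_T$, a sink receiving arcs of capacity $w^-(j)$ from each $j \in V^-_T$, and an infinite-capacity arc $i \to j$ for every $(i,j) \in E_T$ (i.e.\ whenever $i \le j$). Integrality of the bipartite $b$-matching polytope gives maximum flow equal to $\nu(G_T)$, so by max-flow-min-cut it suffices to minimize over cuts.

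The key observation is that any finite-cost cut must spare every infinite-capacity arc. Let $A \subseteq V^+_T$ denote the left vertices left on the source side of the cut and $B \subseteq V^-_T$ the right vertices left on the sink side. The sparing requirement forces $i > j$ for every $(i,j) \in A \times B$, equivalently $\min A > \max B$. To minimize the cut cost $\sum_{i \in V^+_T \setminus A} w^+(i) + \sum_{j \in V^-_T \setminus B} w^-(j)$ we enlarge $A$ and $B$ as much as the separation allows, which forces a threshold parametrization: $A = \{i \in V^+_T : i > \tau\}$ and $B = \{j \in V^-_T : j \le \tau\}$ for some $\tau \in T \cup \{0\}$, where $\tau = 0$ corresponds to $B = \emptyset$. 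A one-line rewrite converts the cost of the threshold-$\tau$ cut to
\[
\sum_{s \in T,\, s \le \tau} w^+(s) + \sum_{j \in T,\, j > \tau} w^-(j) \;=\; \sum_{j \in T} w^-(j) - \sum_{s \in T,\, s \le \tau}\bigl(w^-(s) - w^+(s)\bigr),
\]
and minimizing over $\tau$ yields exactly the expression claimed by the lemma.

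The only subtlety is the convention for the maximum in the statement: the case $\tau = 0$ (the empty prefix, contributing value $0$) must be permissible, which guarantees $\nu(G_T) \le \sum_{j \in T} w^-(j)$ as required and corresponds to the trivial cut that severs every right vertex from the sink. I do not foresee a genuine obstacle; the argument is a combinatorial repackaging of max-flow-min-cut together with the natural interval structure of $E_T$. If one prefers to avoid flow machinery, the same identity can instead be obtained by proving the upper bound directly (fix $\tau$; edges incident to right vertices $j\le\tau$ must use left capacity with $i\le\tau$, capped by $\sum_{s\le\tau}w^+(s)$) and the lower bound by a greedy left-to-right scan whose lost demand equals $\max_\tau \sum_{s\le\tau}(w^-(s)-w^+(s))$ by a standard work-conserving-queue induction.
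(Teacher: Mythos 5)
Your argument is correct and is essentially the paper's proof in different clothing: the paper applies the deficiency form of Hall's theorem and observes that the maximizing deficient set $S \subseteq V^-_T$ can be taken to be a prefix $\{s \in T : s \le t\}$ because $\Gamma_T(S)$ depends only on $\max S$, which is exactly your threshold-cut computation under max-flow--min-cut duality. Your remark about the empty prefix is a genuine (if minor) catch: as stated the right-hand side should really be $\sum_{j \in T} w^-(j) - \max\left(0,\ \max_{t \in T} \sum_{s \in T : s \le t}\left(w^-(s) - w^+(s)\right)\right)$, since the maximizing $S$ in Hall's deficiency formula may be empty (the paper's proof implicitly assumes it is not); this correction only decreases $\nu(G_T)$ relative to the stated formula, and the Hoeffding bound in \Sec{discrepancy} controls $\max(0,\max_{t}Z_t)$ just as well as $\max_t Z_t$, so the downstream use in \Lem{line_sampling} is unaffected.
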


\begin{proof} By Hall's theorem, the maximum $w$-matching in $G_T$ is given by the total weight on the $V^-_{T}$ side, that is, $\sum_{j\in T} w^-(j)$, 
minus the total {\em deficit} $\delta(T) := \max_{S \subseteq V^-_T} \left(\sum_{s \in S} w^-(s) - \sum_{s \in \Gamma_T(S)}w^+(s)\right)$ where for $S \subseteq V^-_T$, $\Gamma_T(S) \subseteq V^+_T$ is the neighborhood of $S$ in $G_T$. Consider such a maximizer $S$, and let $t$ be the largest index present in $S$.
Then note that $\sum_{s \in \Gamma_T(S)}w^+(s)$ is precisely $\sum_{s \in T : s\leq t} w^+(s)$. Furthermore note that adding any $s\leq t$ from $V^-_T$ won't increase
$|\Gamma_T(S)|$. Thus, given that the largest index present in $S$ is $t$, we get that $\delta(T)$ is precisely the summation in the second term of the RHS. $\delta(T)$ is maximized by choosing the $t$ which maximizes the summation. \end{proof}


Next, we bound the expectation of the RHS in \Lem{match}. Recall that $T := \{s_1,\ldots,s_k\}$ is a multiset where each $s_i$ is u.a.r. picked from $[n]$. For the first term, we have 

\begin{equation}\label{eq:easy}
\Expect_{T}\left[\sum_{j\in T} w^-(j)\right] = \sum_{i=1}^k \sum_{j=1}^n \Pr[s_i = j]\cdot w^-(j) = \frac{k}{n}\cdot \sum_{j=1}^n w^-(j) = \frac{k}{n}\cdot |M^{(\ell)}| \text{.}
\end{equation}
The second-last equality follows since $s_i$ is u.a.r. in $[n]$ and the last equality follows since $\sum_j w^-(j)$ increases by exactly one for each edge in $M^{(\ell)}$. Next we upper bound the expectation of the second term.
For a fixed $t$, define 
	\[
		Z_t := \sum_{s \in T : s\leq t} (w^-(s) - w^+(s)) = \sum_{i = 1}^k  X_{i,t}~~\textrm{where}~ X_{i,t} = \begin{cases} w^-(s_i) - w^+(s_i) & \textrm{if $s_i \leq t$} \\ 0 & \textrm{otherwise}\end{cases} \text{.}
	\]
 Note that the $X_{i,t}$'s are i.i.d. random variables with $X_{i,t} \in [-1,\lambda]$ with probability $1$. 
Thus, applying Hoeffding's inequality we get

\begin{align} \label{eq:hoeffding}
	\Pr\left[Z_t > \Exp[Z_t] + a\right]	 \leq 2 \exp\left(\frac{-a^2}{2k\lambda^2}\right) \text{.}
\end{align}
Now we use \Clm{easy}, part (3) to deduce that 
\[
\Exp[Z_t] = \sum_{i=1}^k \Exp[X_{i,t}] = \sum_{i=1}^k \sum_{s \le t} (w^-(s) - w^+(s))\cdot \Pr[s_i = s] \leq 0 
\] 
since $\Pr[s_i = s] = 1/n$. Therefore, the RHS of \Eqn{hoeffding} is an upper-bound on $\Pr[Z_t \geq a]$. 
In particular, invoking $a := 2 \lambda \sqrt{k\ln k}$ and applying a union bound, we get
\begin{equation*}
\Pr\left[\max_{t\in T} Z_t > 2\lambda\sqrt{k\ln k}\right] = \Pr\left[\exists t \in T:~~ Z_t > 2\lambda\sqrt{k\ln k}\right] \leq k \cdot e^{-2 \ln k} = 1/k
\end{equation*}

\noindent and since $\max_{t \in T} Z_t$ is trivially upper-bounded by $\lambda k$, this implies that 

\begin{align}
\label{eq:easy2}
\Expect_{T} \left[ \max_{t \in T} \sum_{s \in T : s\leq t} \left(w^-(s) - w^+(s) \right)\right] \leq \lambda k \cdot \Pr\left[\max_{t \in T} Z_t > a\right] + a \leq \lambda + a \leq 3 \lambda \sqrt{k\ln k} \text{.}
\end{align}

\noindent \Lem{line_sampling} follows from \Lem{apple}, \Lem{match}, \Eqn{easy}, and \Eqn{easy2}.

\section{The Continuous Domain} \label{sec:discrete}

We start with measure theory preliminaries.
We refer the reader to Nelson~\cite{Nelson} and Stein-Shakarchi~\cite{StSh-book} for more background.
Given two reals $a < b$, we use $(a,b)$ to denote the open interval, and $[a,b]$ to denote the closed interval.
Given $d$ closed intervals $[a_i,b_i]$ for $1\leq i\leq d$, we call their Cartesian product $\prod_{i \in [d]} [a_i,b_i]$ a {\em box}. Two intervals/boxes are \emph{almost disjoint}
if their interiors are disjoint (they can intersect only at their boundary).
An \emph{almost partition} of a set $S$ is a collection $\cP$ of sets that are pairwise almost disjoint and $\bigcup_{P \in \cP} P = S$.
A set $U$ is \emph{open} if for each point $x\in U$, there exists an $\eps > 0$ such that the sphere centered at $x$ of radius $\eps$ is contained in $U$. 

We let $\mu = \prod_{i \in [d]} \mu_i$ be an arbitrary \emph{product measure} over $\RN^d$. That is, each $\mu_i$ is described
by a non-negative Lebesgue integrable function over $\RR$, whose total integral is $1$ (this is the pdf). 
%
	Abusing notation, we use $\mu_i([a_i,b_i]) = \Pr_{x \sim \mu_i}[a_i \leq x \leq b_i]$ to denote the integral of $\mu_i$ over this interval. Indeed, this is the probability measure of the interval. 
	The volume of a box $B = \prod_{i \in [d]} [a_i,b_i]$ is denoted $\mu(B) = \prod_{i \in [d]} \mu_i([a_i,b_i]) = \Pr_{x \sim \mu}[x \in B]$. 

%
%

We use the definition of measurability of Chapter 1.1.3 of~\cite{StSh-book}. Technically, this is given with respect to
the standard notion of volume in $\RR^d$. Chapter 6, Lemma 1.4 and Chapter 6.3.1 show that the definition
is valid for the notion of volume with respect to $\mu$, as we've defined above. The \emph{exterior measure} $\mu_*$ of any set $E$ is the infimum of the sum of
volumes of a collection of closed boxes that contain $E$. 


\begin{definition} \label{def:leb-measure} Given a product measure $\mu = \prod_i \mu_i$ over $\RR^d$, we say $E \subseteq \RR^d$ is Lebesgue-measurable with respect to $\mu$ if for any $\eps > 0$, there exists an open set $U \supseteq E$ such that $\mu_*(U \setminus E) < \eps$. If this holds, then the $\mu$-measure of $E$ is defined as $\mu(E) := \mu_*(E)$. \end{definition} 

Given a function $f\colon \RN^d \to \{0,1\}$, we will often slightly abuse notation by letting $f$ denote the \emph{set} it indicates, i.e. the set in $\RR^d$ where $f$ evaluates to $1$. We say that $f$ is a \emph{measurable function w.r.t. $\mu$} if this set is measurable w.r.t. $\mu$. Similarly, we use $\overline{f}$ to denote the \emph{set} where $f$ evaluates to $0$. 

We are now ready to define the notion of distance between two functions. In \Sec{mono-measure}, we prove that all monotone Boolean functions are measurable (\Thm{mono-measure}) with respect to $\mu$. Also, measurability is closed under basic set operations and thus the following notion of distance to monotonicity is well-defined.

\begin{definition} [Distance to Monotonicity]
	Fix a product measure $\cD$ on $\RR^d$. 
	We define the distance between two measurable functions $f,g\colon \RN^d \to \{0,1\}$ with respect to $\mu$, as 
	\begin{align}
	\dist_{\cD}(f,g) := \mu\left(\left\{z \in \RN^d : f(z) \neq g(z)\right\}\right) = \mu\left(f \Delta g\right) \text{.}
	\end{align}
The distance to monotonicity of $f$ w.r.t. $\cD$ is defined as	
	\begin{align} 
	\eps_{f,\cD} := \inf_{g \in \cM} \dist_{\cD}(f,g) = \inf_{g \in \cM} \mu\left(f \Delta g\right)
	\end{align}
	
	\noindent where $\cM$ denotes the set of monotone Boolean functions over $\RN^d$.
\end{definition}

We are now equipped to state the formal version of \Thm{main-informal-cont}, for testing Boolean functions over $\RN^d$. 

\begin{theorem}\label{thm:main-cont}
	Let $\cD = \prod_{i=1}^d \cD_i$ be a product measure 
	for which we have the ability to take independent samples from each $\cD_i$.
	There is a randomized algorithm which, given a parameter $\eps > 0$ and a measurable function $f\colon \RR^d\to \{0,1\}$ that can be queried at any $x \in \RR^d$, makes $\tilde{O}(d^{5/6}\eps^{-4/3})$ non-adaptive queries to $f$, and (a) always accepts if $f$ is monotone, and (b) rejects with probability $> 2/3$ if $\eps_{f,\cD} > \eps$.
\end{theorem}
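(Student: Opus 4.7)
The plan is to reduce testing of measurable $f\colon\RR^d\to\{0,1\}$ to testing of a Boolean function on a finite hypergrid $[k]^d$ by invoking \Thm{cont_domain_reduction}, and then to apply the discrete hypergrid tester of \Thm{main-informal-grid}. Since the latter's query complexity is independent of the hypergrid side length (with the hidden polylogarithmic dependence in $\tilde{O}(\cdot)$ absorbing $\log k$ for any $k = \poly(d/\eps)$), this composition immediately aligns with the claimed $\tilde{O}(d^{5/6}\eps^{-4/3})$ bound.

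Concretely, I would set $k = \Theta((d/\eps)^7)$, large enough that \Thm{cont_domain_reduction} gives $\Expect_{\bT}[\eps_{f_{\bT}}] \geq \eps/2$ whenever $\eps_{f,\cD} \geq \eps$. The tester draws independent multisets $T_i \subset \RR$ of $k$ i.i.d.\ samples from $\cD_i$ for each $i \in [d]$, forms $\bT = T_1 \times \cdots \times T_d$, and identifies $\bT$ with $[k]^d$ via the coordinate-wise sorting order on each $T_i$ (duplicate samples are treated as distinct immediate neighbors, exactly as in the statement of \Thm{cont_domain_reduction}). It then runs the hypergrid tester of \Thm{main-informal-grid} on $f_{\bT}\colon[k]^d\to\{0,1\}$ with proximity parameter $\Theta(\eps)$ and returns its answer; each of the tester's queries $f_{\bT}(i_1,\ldots,i_d)$ is forwarded to an evaluation of $f$ at $(T_1[i_1],\ldots,T_d[i_d])\in\RR^d$. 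Non-adaptivity of the composed procedure follows since $\bT$ is drawn before any query is made, and the hypergrid tester itself chooses its query indices non-adaptively.

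For correctness, when $f$ is monotone on $\RR^d$, the sorting identification between $\bT$ and $[k]^d$ is order-preserving, so $f_{\bT}$ is monotone on $[k]^d$; the one-sided hypergrid tester then accepts with probability $1$ and so does our algorithm. When $\eps_{f,\cD} > \eps$, \Thm{cont_domain_reduction} provides $\Expect_{\bT}[\eps_{f_{\bT}}] \geq \eps/2$ (and arbitrarily close to $\eps_f$ by enlarging $k$ slightly), from which a reverse Markov inequality applied to $\eps_{f_{\bT}}\in[0,1]$ gives a lower bound on $\Pr[\eps_{f_{\bT}}\ge \eps/2]$; a standard constant-factor number of independent repetitions of the full sample-and-test procedure then boosts the overall rejection probability past $2/3$, with all hidden constants absorbed in the $\tilde{O}(\cdot)$ query count.

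The main obstacle I anticipate is the probability step: \Thm{cont_domain_reduction} is only an in-expectation statement, so converting it into a constant-probability lower bound on $\eps_{f_{\bT}}$ while preserving the $\tilde{O}(d^{5/6}\eps^{-4/3})$ query budget requires care. The saving grace is that the $k$ appearing in \Thm{cont_domain_reduction} enters the hypergrid tester only through $\poly\log k$ factors, so one can make $\Expect[\eps_{f_{\bT}}]$ as close to $\eps_f$ as desired essentially for free. Beyond this probability-analysis subtlety, the only substantive checks are measurability of $f_{\bT}$, which is immediate from the measurability of $f$ and the product structure of $\cD$ using the machinery of \Sec{discrete}, and verifying that the sampling model (and the duplicate-as-neighbor convention) matches the one already used in \Thm{cont_domain_reduction}.
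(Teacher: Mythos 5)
Your overall architecture (sample $\bT$, identify it with $[k]^d$, run the discrete tester, exploit the fact that $k$ enters only polylogarithmically) is the right one and matches the paper, but the amplification step has a genuine gap. From $\Expect_{\bT}[\eps_{f_{\bT}}] \geq \eps/2$ and $\eps_{f_{\bT}} \in [0,1]$, reverse Markov only gives $\Pr[\eps_{f_{\bT}} \geq \eps/4] \geq \Omega(\eps)$ — not $\Omega(1)$ — and this is tight: the distribution could put mass roughly $\eps/2$ on $\eps_{f_{\bT}} = 1$ and the rest on $\eps_{f_{\bT}} = 0$ (i.e., the restriction is monotone almost always). A constant number of independent repetitions then rejects with probability only $O(\eps)$, not $2/3$. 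Your proposed remedy of enlarging $k$ to push $\Expect[\eps_{f_{\bT}}]$ toward $\eps_f$ does not help, because it controls only the expectation, not the shape of the distribution; the same two-point distribution is consistent with expectation exactly $\eps_f$. The naive fix is $\Theta(1/\eps)$ repetitions, which the paper explicitly notes (in a remark) yields only $\tilde{O}(d^{5/6}\eps^{-7/3})$ queries, short of the claimed $\eps^{-4/3}$.

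What the paper actually does to recover $\eps^{-4/3}$ is Levin's work investment strategy: for each scale $\ell = 1,\ldots,L+1$ with $L = \Theta(\log(1/\eps))$, it repeats the sample-and-test step $Q_\ell = \lceil 32\ell^2/(2^\ell\eps)\rceil$ times, running the inner hypergrid tester with the coarser proximity parameter $\eps_\ell = 2^{-\ell}$. A pigeonhole-type argument (Claim~\ref{clm:good-ell}) shows that $\Expect[\eps_{f_{\bT}}] \geq \eps/2$ forces some scale $\ell^*$ to satisfy $\Pr[\eps_{f_{\bT}} \geq 2^{-\ell^*}] \geq 2^{\ell^*}\eps/(8(\ell^*)^2) \geq 4/Q_{\ell^*}$, so that level alone rejects with constant probability; meanwhile the total cost $\sum_\ell Q_\ell \cdot \tilde{O}(d^{5/6}\eps_\ell^{-4/3})$ telescopes to $\tilde{O}(d^{5/6}\eps^{-4/3})$ because cheap, coarse invocations are repeated many times and expensive, fine invocations few times. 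You would need to add this (or an equivalent trade-off between the number of restrictions drawn and the proximity parameter handed to the inner tester) to close the gap. The remaining ingredients of your write-up — order-preserving identification of $\bT$ with $[k]^d$, one-sidedness when $f$ is monotone, non-adaptivity, and the measurability bookkeeping — are all fine and agree with the paper.
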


We give a formal proof of \Thm{main-cont} in \Sec{tester}. The proof requires some tools to discretize measurable sets, which we provide in the next two sections. 

\subsection{Approximating measurable sets by grids} \label{sec:lemma}

	We first start with a lemma about probability measures over $\RR$. 
	\begin{lemma}\label{lem:cont-equal-int}
	Given any probability measure $\cD$ over $\RR$, and any $N\in \NN$, there exists an 
    almost partition of $\RR$ into
	$N$ intervals $\bI_N = \{\cI_1, \ldots, \cI_N\}$ of equal $\mu$-measure. That is, for each $j\in [N]$, $\Pr_{x \sim \cD}[x\in \cI_j] = \frac{1}{N}$. Furthermore, for any $k \in \NN$, 
    $\bI_{kN}$ is a refinement of $\bI_N$.
	\end{lemma}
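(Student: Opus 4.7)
The plan is to construct the partition using the CDF of $\cD$. Let $F \colon \RR \to [0,1]$ be the cumulative distribution function of $\cD$, defined by $F(x) := \Pr_{y \sim \cD}[y \leq x]$. Since the paper restricts attention to measures given by a non-negative Lebesgue integrable density, $F$ is \emph{continuous} and non-decreasing, with $\lim_{x \to -\infty} F(x) = 0$ and $\lim_{x \to +\infty} F(x) = 1$. This continuity is what makes the construction go through cleanly and is the only property of $\cD$ that I plan to use.

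For each $N \in \NN$ and each $j \in \{0, 1, \ldots, N\}$, I will define
\[
a_j^{(N)} := \inf\{x \in \RR : F(x) \geq j/N\},
\]
with the conventions $a_0^{(N)} := -\infty$ and $a_N^{(N)} := +\infty$. By continuity of $F$ and the intermediate value theorem, $F(a_j^{(N)}) = j/N$ for $0 < j < N$ (even if $F$ has flat plateaus, the infimum is attained and has the correct CDF value). I then take $\cI_j := [a_{j-1}^{(N)}, a_j^{(N)}]$ for $j \in [N]$, interpreting the unbounded endpoints in the obvious way. The intervals are pairwise almost disjoint (they meet only at common endpoints, which are $\cD$-null since $F$ is continuous) and their union is all of $\RR$, so $\bI_N := \{\cI_1, \ldots, \cI_N\}$ is an almost partition of $\RR$. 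A short computation then gives $\Pr_{x \sim \cD}[x \in \cI_j] = F(a_j^{(N)}) - F(a_{j-1}^{(N)}) = 1/N$ as required.

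For the refinement property, the key observation is that the construction is compatible across different values of $N$. For any $k \in \NN$, consider the analogous endpoints $a_i^{(kN)}$ for $i \in \{0, 1, \ldots, kN\}$. Whenever $i = jk$ for some $j \in \{0, 1, \ldots, N\}$, we have
\[
a_{jk}^{(kN)} = \inf\{x : F(x) \geq jk/(kN)\} = \inf\{x : F(x) \geq j/N\} = a_j^{(N)},
\]
so the endpoints defining $\bI_N$ are exactly the endpoints indexed by multiples of $k$ in the finer partition. Therefore each $\cI_j = [a_{j-1}^{(N)}, a_j^{(N)}]$ in $\bI_N$ decomposes as $\bigcup_{i=(j-1)k+1}^{jk} [a_{i-1}^{(kN)}, a_i^{(kN)}]$, a union of $k$ consecutive intervals of $\bI_{kN}$, which establishes that $\bI_{kN}$ refines $\bI_N$.

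The proof is essentially routine once one commits to using the CDF; the only subtlety I anticipate is keeping the boundary cases ($j = 0, N$) and the handling of flat regions of $F$ notationally clean, but neither is a real obstacle because continuity of $F$ prevents any loss of mass at interval endpoints and makes the ``$\inf$'' definition well-behaved.
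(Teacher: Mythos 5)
Your proposal is correct and follows essentially the same route as the paper: both invert the continuous CDF to place the interval endpoints at quantiles $j/N$, and both get the refinement property from the fact that the quantile at level $jk/(kN)$ coincides with the one at level $j/N$. The only (immaterial) difference is that you take the infimum of the super-level set $\{x : F(x)\geq j/N\}$ where the paper takes the supremum of the level set $\{t : F(t)=j/N\}$; either consistent convention handles the plateaus of $F$ correctly.
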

	\begin{proof}
		$\mu$ is a probability measure, and thus is described by a non-negative Lebesgue-integrable function (it's pdf). Chapter 2, Prop 1.12 (ii) of \cite{StSh-book} states that the Lebesgue integral is continuous and thus it's CDF, $F(t) := \mu(\{x\in \RR: x \leq t\})$, is continuous. Moreover $F$ is non-decreasing with range $[0,1]$. Therefore,  
		for every $\theta \in (0,1)$ there is at least one $t$ with $F(t) = \theta$. Thus, let's define $F^{-1}(\theta)$ to be the supremum over all $t$ satisfying $F(t) = \theta$. Let $F^{-1}(0) = -\infty$ and $F^{-1}(1) = +\infty$. 
		The lemma is proved by the intervals $\cI_j = [F^{-1}((j-1)/N), F^{-1}(j/N)]$ for $j\in\{1,\ldots,N\}$.
        The refinement is evident by the fact that any interval in $\bI_N$ can be
        expressed as an almost partition of intervals from $\bI_{kN}$ (for $k \in \NN$).
		%
	\end{proof}

Thus, given a product distribution $\cD = \prod_{i=1}^d \cD_i$ and any $N\in \NN$, we can apply the above lemma to each of the $d$ coordinates to obtain the set of $Nd$ intervals $\left\{\cI^{(i)}_j : i\in [d] : j\in [N]\right\}$ for which $\mu_i\left(\cI^{(i)}_j\right) = 1/N$ for every $i \in [d]$, $j \in [N]$. We define

\[\bG_N := \left\{\prod_{i=1}^d \cI^{(i)}_{z_i}: z \in [N]^d\right\}\]

\noindent and observe that (a) $\bG_N$ is an almost partition of $\RR^d$ and 
(b) $\bG_{kN}$ is a refinement of $\bG_{N}$ for any $k \in \NN$. (Since $d$ is fixed, we will not carry the dependence on $d$.)
We informally refer to $\bG_{N}$ as a \emph{grid}.
Since $\bG_{N}$ is an almost partition, we can define the function
$\bxx_N: \RR^d \to [N]^d$ as follows. For $x \in \RR^d$, we define
$\bxx_N(x)$ to be the lexicographically least $z \in [N]^d$ such that the box $\prod_{i=1}^d \cI^{(i)}_{z_i}$, of $\bG_N$, contains $x$.
(Note that for all but a measure zero set, points in $\RR^d$ are contained
in a unique box of $\bG_{N}$.)

In the following lemma, we show that any measurable set can be approximated by a sufficiently fine grid. In some sense, this \emph{is} the definition of measurability.

\def\an{a^{(r)}}
\def\bn{b^{(r)}}
\def\dn{\delta^{(r)}}

\begin{lemma} \label{lem:approx-grid} For any measurable set $E$
and any $\alpha > 0$, there exists $N = N(E,\alpha) \in \NN$ such that
there is a collection $\bB \subseteq \bG_{N}$ satisfying
$\mu(E \ \Delta \ \bigcup_{B \in \bB} B) \leq \alpha$.
\end{lemma}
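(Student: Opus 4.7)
The plan is to combine the exterior-measure characterization of measurability (Definition~\ref{def:leb-measure}) with a coordinate-wise approximation of a single closed box by grid cells.

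First I approximate $E$ by a finite union of closed boxes. Since $E$ is $\mu$-measurable, $\mu(E) = \mu_*(E)$, so by the definition of exterior measure there is a countable cover of $E$ by closed boxes $\{C_j\}_{j \geq 1}$ with $\sum_j \mu(C_j) \leq \mu(E) + \alpha/4$. Because $\mu$ is a probability measure the series converges, so I can truncate to a finite initial segment $C_1,\ldots,C_K$ whose tail has $\mu$-mass at most $\alpha/4$. Writing $C := \bigcup_{j \leq K} C_j$, countable subadditivity yields $\mu(C \setminus E) \leq \alpha/4$ (slack in the cover) and $\mu(E \setminus C) \leq \alpha/4$ (discarded tail), hence $\mu(C \,\Delta\, E) \leq \alpha/2$.

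Next I approximate each closed box by grid cells. For $C_j = \prod_i [a_i^{(j)}, b_i^{(j)}]$, let $\bB_j \subseteq \bG_N$ consist of the grid cells that intersect $C_j$, and let $A_j := \bigcup_{B \in \bB_j} B \supseteq C_j$. Because \Lem{cont-equal-int} gives every grid interval $\mu_i$-measure exactly $1/N$, the union of grid intervals in coordinate $i$ that meet $[a_i^{(j)}, b_i^{(j)}]$ overshoots $\mu_i([a_i^{(j)}, b_i^{(j)}])$ by at most $2/N$ (only the two boundary intervals contribute). Multiplying across the $d$ coordinates, $\mu(A_j) - \mu(C_j) \leq (1 + 2/N)^d - 1 = O(d/N)$ as soon as $N \gg d$, and summing over $j \leq K$ gives $\mu(A \setminus C) = O(dK/N)$ for $A := \bigcup_{j \leq K} A_j$.

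Finally, I choose $N = N(E,\alpha)$ large enough that $O(dK/N) \leq \alpha/4$ and set $\bB := \bigcup_{j \leq K} \bB_j \subseteq \bG_N$. The triangle inequality then gives $\mu\bigl(E \,\Delta\, \bigcup_{B \in \bB} B\bigr) \leq \mu(E \,\Delta\, C) + \mu(C \,\Delta\, A) < \alpha/2 + \alpha/4 < \alpha$. The argument is essentially routine measure-theoretic bookkeeping; the only estimate requiring real care is the $O(d/N)$ bound in the second step, which goes through cleanly precisely because the grid intervals from \Lem{cont-equal-int} have \emph{equal} $\mu_i$-measure, so the boundary error is independent of where $C_j$ sits inside $\RR^d$ and of the measure $\mu$ itself. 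There is no genuine obstacle.
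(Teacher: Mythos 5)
Your argument is correct and follows essentially the same route as the paper's: approximate $E$ by a finite union of closed boxes, then approximate that union by cells of $\bG_N$ using the fact that the grid intervals all have $\mu_i$-measure exactly $1/N$, incurring an $O(d/N)$ error per box. The only differences are cosmetic: the paper cites Stein--Shakarchi (Ch.~1, Thm.~3.4(iv)) for the finite-box approximation where you re-derive it from the exterior-measure definition, and the paper keeps the grid cells \emph{contained in} the boxes (inner approximation) where you keep those \emph{intersecting} them (outer approximation); both give the same bound.
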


\begin{proof} Chapter 1, Theorem 3.4 (iv) of~\cite{StSh-book} states
that for any measurable set $E$ and any $\epsilon > 0$, there exists a finite union 
$\bigcup_{r=1}^m B_r$ of closed boxes such that $\mu(E \Delta \bigcup_{r=1}^m B_r) \leq \epsilon$. We invoke this theorem with $\epsilon = \alpha/2$ to get the collection of boxes $B_1,\ldots,B_m$. Note that these boxes may intersect, and might not form a grid.
We build a grid by setting $N = \ceil{2md/\alpha}$ and  considering $\bG_{N}$. The desired collection $\bB \subseteq \bG_{N}$
is the set of boxes in $\bG_{N}$ contained in $\bigcup_{r=1}^m B_r$. Observe that

\begin{align} \label{eq:grid-approx-diff-bound}
	\mu\left(E \Delta \bigcup_{B \in \bB} B\right) \leq \mu\left(E \Delta \bigcup_{r=1}^m B_r\right) + \mu\left(\bigcup_{r=1}^m B_r \setminus \bigcup_{B \in \bB} B\right) \leq \alpha/2 + \sum_{r=1}^m \mu\left(B_r \setminus \bigcup_{B \in \bB} B\right)
\end{align}

\noindent by subadditivity of measure. We complete the proof by bounding $\mu(B_r \setminus \bigcup_{B \in \bB} B)$ for an arbitrary $r \in [m]$.

%

Let $B_r := \prod_{i=1}^d [a_i,b_i]$ denote an arbitrary box from $\{B_1,\ldots,B_m\}$ and let $\delta_i := \mu_i([a_i,b_i])$. 
%
Observe that the interval $[a_i,b_i]$ contains exactly $\lfloor \delta_i N \rfloor$ contiguous intervals from the almost partition $\{\cI^{(i)}_{j} : j \in [N]\}$ of $\RR$. Let $\bI_i$ denote the set of such intervals.
Thus, $\mu_i([a_i,b_i] \setminus \bigcup_{I \in \bI_i} I) \leq \delta_i - (1/N)\left(\lfloor \delta_i N \rfloor\right) \leq \delta_i - (1/N)\left(\delta_i N - 1\right) = 1/N$. Thus, the total measure of $B_r$ we discard is $\mu(B_r \setminus \bigcup_{B \in \bB} B) \leq \prod_{i} \delta_i - \prod_{i} (\delta_i - 1/N)$. This quantity is maximized when the $\delta_i$'s are maximized; since $\delta_i \leq 1$ (each $\mu_i$ is a probability measure), we get that $\mu(B_r \setminus \bigcup_{B \in \bB} B) \leq 1 - (1-1/N)^d \leq \frac{d}{N}$. 

Finally, plugging this into \Eqn{grid-approx-diff-bound}, we get $\mu(E \Delta \bigcup_{B \in \bB} B) \leq \alpha/2 + m \cdot \frac{d}{N} \leq \alpha$, since $N \geq 2md/\alpha$. \end{proof}



We are now ready to prove our main tool, the discretization lemma.

\begin{lemma} [Discretization Lemma] \label{lem:good-approx} Given a measurable function $f\colon \RN^d \to \{0,1\}$ and $\delta > 0$, there exists $N := N(f,\delta) \in \NN$, 
and a function $f^{\mathsf{disc}}:[N]^d \to \{0,1\}$, such that 
$\Pr_{x \sim \cD}[f(x) \neq f^{\mathsf{disc}}(\bxx_N(x))] \leq \delta$. \end{lemma}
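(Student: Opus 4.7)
The plan is to reduce the discretization lemma directly to \Lem{approx-grid} applied to the set $E := \{x \in \RR^d : f(x) = 1\}$, which is measurable by hypothesis. I would invoke \Lem{approx-grid} with parameter $\alpha = \delta$ to obtain some $N = N(f,\delta) \in \NN$ and a sub-collection $\bB \subseteq \bG_N$ of grid boxes satisfying $\mu\bigl(E \,\Delta\, \bigcup_{B \in \bB} B\bigr) \leq \delta$. This $N$ becomes the $N$ in the statement.

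Next I would define the discrete function $\fd \colon [N]^d \to \{0,1\}$ in the obvious way: for $z \in [N]^d$, let $B_z := \prod_{i=1}^d \cI^{(i)}_{z_i}$ be the corresponding box of $\bG_N$, and set $\fd(z) = 1$ if $B_z \in \bB$ and $\fd(z) = 0$ otherwise. By construction, the set of points $x \in \RR^d$ with $\fd(\bxx_N(x)) = 1$ is exactly $\bigcup_{B \in \bB} B$, modulo the measure-zero boundary set where $\bxx_N$ is ambiguous (and which contributes $0$ under the product measure $\cD$, since each $\mu_i$ is absolutely continuous with respect to Lebesgue measure and the boundary is a finite union of hyperplanes).

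It then suffices to observe that
\[
\Pr_{x \sim \cD}\bigl[f(x) \neq \fd(\bxx_N(x))\bigr] \;=\; \mu\Bigl(E \,\Delta\, \bigcup_{B \in \bB} B\Bigr) \;\leq\; \delta,
\]
where the equality uses that $f(x) = 1 \iff x \in E$ and $\fd(\bxx_N(x)) = 1 \iff x \in \bigcup_{B \in \bB} B$ (off a $\mu$-null set), and the inequality is exactly the conclusion of \Lem{approx-grid}.

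There is essentially no obstacle here, as all the real work is packaged into \Lem{approx-grid}. The only subtlety worth double-checking is the handling of the boundaries of the grid boxes, where $\bxx_N$ was defined by a tie-breaking rule: I would verify that these boundaries form a $\mu$-null set (they are a finite union of axis-aligned hyperplanes, each of which has $\mu_i$-measure zero in the relevant coordinate, hence $\mu$-measure zero in $\RR^d$ by Fubini applied to the product measure), so they can be safely ignored in the probability computation above.
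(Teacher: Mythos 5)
Your proof is correct, and it takes a genuinely shorter route than the paper's. The paper invokes \Lem{approx-grid} \emph{twice} --- once for the $1$-set $f$ (getting $N_1$, $\bZ_1$) and once for the $0$-set $\overline{f}$ (getting $N_0$, $\bZ_0$) --- then passes to the common refinement $N = N_0 N_1$ and defines the discretized function $h$ by a three-way case analysis on $\bZ_1 \setminus \bZ_0$, $\bZ_0 \setminus \bZ_1$, and the remaining boxes, finally bounding $\dist_\cD(f,h)$ by combining the two approximation errors with a bound on $\mu\bigl(\bigcup_{B \notin \bZ_0 \Delta \bZ_1} B\bigr)$. You observe, correctly, that this is unnecessary: since \Lem{approx-grid} already controls the full \emph{symmetric} difference $\mu\bigl(E \,\Delta\, \bigcup_{B \in \bB} B\bigr)$, a single application to $E = f$ with $\alpha = \delta$ bounds both the false positives ($\bigcup_{B\in\bB} B \setminus E$) and the false negatives ($E \setminus \bigcup_{B\in\bB} B$) of the indicator $\fd(z) = \mathbbm{1}[B_z \in \bB]$, and the disagreement set of $f$ and $\fd \circ \bxx_N$ is exactly that symmetric difference up to the grid-box boundaries. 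Your handling of the boundary tie-breaking is also right and worth making explicit as you do: the boundaries lie in finitely many axis-aligned hyperplanes, each of $\mu$-measure zero because every $\mu_i$ has a Lebesgue density, so single points have $\mu_i$-measure zero. The only thing the paper's two-sided construction buys is a slightly more symmetric treatment of $f$ and $\overline{f}$; it yields no stronger conclusion, and your one-sided argument with $\alpha = \delta$ even avoids the constant-chasing with $\delta/6$ and $2\delta/3$.
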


\begin{proof} By assumption, $f$ and $\overline{f}$ are measurable sets.
By \Lem{approx-grid}, there exists some $N_1$ and a collection of boxes
$\bZ_1 \subseteq \bG_{N_1}$ such that $\mu(f \Delta \bigcup_{B \in \bZ_1} B) \leq \delta/6$.
(An analogous statement holds for $\overline{f}$, with some $N_0$ and a collection $\bZ_0$.)
Since \Lem{approx-grid} also holds for any refinement of the relevant grid,
let us set $N = N_0N_1$. Abusing notation, we have two collections $\bZ_0, \bZ_1 \subseteq \bG_{N}$
such that $\mu(f \Delta \bigcup_{B \in \bZ_1} B) \leq \delta/6$
and $\mu(\overline{f} \Delta \bigcup_{B \in \bZ_0} B) \leq \delta/6$.

For convenience, let us treat the boxes in $\bZ_0 \cup \bZ_1$ as open, so
that all boxes in the collection are disjoint.
Define $h\colon \RN^d \to \{0,1\}$ as follows: 





\[
h(x) = 
\begin{cases}
1 & \text{ if } x \in \bigcup_{B \in \bZ_1 \setminus \bZ_0} B \\
0 & \text{ if } x \in \bigcup_{B \in \bZ_0 \setminus \bZ_1} B \\
0 & \text{ if } x \in \bigcup_{B \notin \bZ_0 \Delta \bZ_1} B\\
\end{cases} \text{.}
\]

Since $f$ and $\overline{f}$ partition $\RN^d$, $\mu(\bigcup_{B \in \bZ_0 \cap \bZ_1} B)$ and $\mu(\bigcup_{B \notin \bZ_0 \cup \bZ_1} B)$ are both at most $\mu(f \Delta \bigcup_{B \in \bZ_1} B) + \mu(\overline{f} \Delta \bigcup_{B \in \bZ_0} B) \leq \delta/3$. Combining these bounds, we have $\mu(\bigcup_{B \notin \bZ_0 \Delta \bZ_1} B) \leq 2\delta/3$. Thus


\noindent 
	
\begin{align*}
	\dist_{\cD}(f,h) = \Pr_{x \sim \cD}[f(x) \neq h(x)] 
    &\leq \mu\left(\bigcup_{B \in \bZ_1 \setminus \bZ_0} B \cap \overline{f}\right) + \mu\left(\bigcup_{B \in \bZ_0 \setminus \bZ_1} B\cap f\right) + \mu\left(\bigcup_{B \notin \bZ_0 \Delta \bZ_1} B\right) \\ 
    &\leq \delta/6 + \delta/6 + 2\delta/3 = \delta \text{. } \hspace{4mm}
\end{align*}


By construction, $h$ is constant in (the interior of) every grid box.
Any $z \in [N]^d$ indexes a (unique) box in $\bG_N$ (recall the map $\bxx_N\colon \RR^d \to [N]^d$).
Formally, we can define a function $f^{\mathsf{disc}}\colon [N]^d \to \{0,1\}$ so that 
$\forall x \in \RR^n, f^{\mathsf{disc}}(\bxx_N(x)) = h(x)$.
Thus, $\Pr_{x \sim \cD}[f(x) \neq f^{\mathsf{disc}}(\bxx_N(x))]  = \dist_{\cD}(f,h) \leq \delta$.
\end{proof}

\subsection{Proof of \Thm{cont_domain_reduction}} \label{sec:proof_cont_domain_reduction}

	\begin{proof}
		Recall that $\bT = T_1 \times \cdots \times T_d$ is a randomly chosen hypergrid, where for each $i \in [d]$, $T_i \subset \RN$ is formed by taking $k$ i.i.d. samples from $\cD_i$.
		We need to show that
			
			\[\Expect_{\bT}\left[\eps_{f_{\bT}}\right] \geq \eps_f - \frac{C' \cdot d}{k^{1/7}}\]
			
			\noindent for some universal constant $C' > 0$. 
	
%
	
	Set $\delta \leq k^{-d}\cdot \frac{C\cdot d}{k^{1/7}}$, where $C$ is the universal constant in \Thm{dir_domain_reduction}.
	Applying \Lem{good-approx} to $f$ with this $\delta$, we know there exists $N > 0$ and $f^{\mathsf{disc}}\colon [N]^d \to \{0,1\}$, such that $\Pr_{x \sim \cD}[f(x) \neq f^{\mathsf{disc}}(\map_N(x))] \leq \delta$. 
	
	Given a random $\bT$ sampled as described above, define 
	$\widehat{\bT} := \{\map_N(x) \in [N]^d : x \in \bT\}$. 
	Observe that (a) $\widehat{\bT}$ is a $[k]^d$ sub-hypergrid in $[N]^d$ which (b) can be equivalently defined as $\widehat{\bT} = \widehat{T}_1 \times \cdots \times \widehat{T}_d$ where each $\widehat{T}_i$ is formed by taking $k$ i.i.d. uniform samples from $[N]$. This is by construction of the partition $\{\bxx_z : z \in [N]^d\}$ and by definition of $\map_N(x)$. 
	\Thm{dir_domain_reduction} and the observations above imply 
	\begin{equation}\label{eq:0021}
	\Expect_{\widehat{\bT}}\left[\eps_{f^{\mathsf{disc}}_{\widehat{\bT}}}\right] \geq \eps_{\fd} - \frac{C\cdot d}{k^{1/7}}
	\end{equation}
	where $C$ is some universal constant.
	Next, we relate $\eps_{\fd}$ and $\eps_f$.
	Observe that there is a bijection between $\bT$ and $\widehat{\bT}$ (namely, $\map_N$ restricted to $\bT$). We say $f_{\bT} = f^{\mathsf{disc}}_{\widehat{\bT}}$ if for all $x\in \bT$, $f(x) = \fd(\map_N(x))$.	
	
	By a union bound over the $k^d$ samples, 
	\[\Pr_{\bT}\left[f_{\bT} \neq f^{\mathsf{disc}}_{\widehat{\bT}}\right] = \Pr_{\bT}\left[\exists x \in \bT :  f(x) \neq \fd(\map_N(x)) \right] \leq \delta \cdot k^d \leq \frac{C\cdot d}{k^{1/7}} =:\delta'
	\]
	since each $x \in \bT$ has the same distribution as $x\sim \cD$, and $\Pr_{x \sim \cD}[f(x) \neq f^{\mathsf{disc}}(\map_N(x))] \leq \delta$. 
	Thus, we get $\Expect_{\bT}\left[\eps_{f_{\bT}}\right] \geq (1-\delta') \Expect_{\widehat{\bT}}\left[\eps_{\fd_{\widehat{\bT}}}\right] - \delta'$, since in the case
	$f_{\bT} \neq \fd_{\widehat{\bT}}$, the difference in their distance to monotonicity is at most $1$. Substituting in \eqref{eq:0021}, we get
	\begin{equation}\label{eq:0022}
		\Expect_{\bT}\left[\eps_{f_{\bT}}\right] \geq (1-\delta')\cdot\left(\eps_{\fd} - \frac{C\cdot d}{k^{1/7}}\right) - \delta' \geq \eps_{\fd} - \frac{3C\cdot d}{k^{1/7}} 
	\end{equation}
	by definition of $\delta'$.
	
	Now, let $g\colon [N]^d \to \{0,1\}$ be any monotone function satisfying $d(f^{\mathsf{disc}},g) = \eps_{f^{\mathsf{disc}}}$.
	Define the monotone function $\hat{f}(x) = g(\map_N(x))$ for all $x \in \RN^d$. 
	Note that $\eps_f \leq \dist(f, \hat{f}) \leq \Pr_{x \sim \cD}[f(x)\neq \fd(\map_N(x))] + \dist(\fd, g) \leq \delta + \eps_{\fd}$.
	This, in turn, implies $\eps_{\fd} \geq \eps_f - \delta \geq \eps_f - \frac{C\cdot d}{k^{1/7}}$.
	Substituting in \eqref{eq:0022}, we get
	
	\[
	\Expect_{\bT}\left[\eps_{f_{\bT}}\right] \geq \eps_f - \frac{4C \cdot d}{k^{1/7}}
	\]
	which proves the theorem.
	\end{proof}

\subsection{Measurability of Monotone Functions} \label{sec:mono-measure}

\begin{theorem} \label{thm:mono-measure} 
	Monotone functions $f\colon \RR^d \rightarrow \{0,1\}$ are measurable w.r.t. product measures $\mu = \prod_{i=1}^d \mu_i$.
\end{theorem}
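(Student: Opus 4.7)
Let $E := f^{-1}(1) \subseteq \RR^d$; monotonicity of $f$ is equivalent to $E$ being upward-closed under the coordinate partial order. I plan to show that the topological boundary $\partial E = \bar E \cap \overline{E^c}$ has $\mu$-outer measure zero. Given this, measurability follows easily: since $E \subseteq E^\circ \cup \partial E$, for any $\eps > 0$ I may cover $\partial E$ by an open set $V$ of $\mu$-measure at most $\eps$ (starting from a near-optimal cover by closed boxes and slightly enlarging, using the fact that each $\mu_i$ is absolutely continuous w.r.t.\ Lebesgue measure, being defined by a Lebesgue-integrable density) and set $U := E^\circ \cup V$. Then $U$ is open, contains $E$, and satisfies $\mu_*(U \setminus E) \le \mu(V) < \eps$.

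To bound $\mu_*(\partial E)$ I will invoke the equal-measure grid $\bG_N$ from \Lem{cont-equal-int}; each of its $N^d$ boxes has $\mu$-measure $N^{-d}$, and the grid's axis-aligned hyperplanes have $\mu$-measure zero by absolute continuity. Call a box $B_z \in \bG_N$ \emph{straddling} if it meets both $E$ and $E^c$; then $\partial E$ is contained, up to a measure-zero set, in the union of straddling boxes. The core combinatorial claim is that there are at most $d \cdot N^{d-1}$ straddling boxes, whence $\mu_*(\partial E) \le d/N$ for every $N$, and so $\mu_*(\partial E) = 0$.

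To prove this claim I assign each straddling box a \emph{transition dimension} $i(B_z) \in [d]$. Write $a_z, b_z$ for the min and max corners of $B_z$ and build the monotone chain $a_z = p_0 \preceq p_1 \preceq \cdots \preceq p_d = b_z$ by promoting coordinates from min to max one at a time. Since $a_z \preceq x \preceq b_z$ for every $x \in B_z$ and $f$ is monotone, $B_z$ straddling forces $f(a_z) = 0$ and $f(b_z) = 1$, so some step satisfies $f(p_{i-1}) = 0$ and $f(p_i) = 1$; let $i(B_z)$ be the smallest such $i$. Now fix $z_1,\ldots,z_{i-1},z_{i+1},\ldots,z_d$ and vary $z_i$ along the resulting column. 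The point $p_{i-1}$ depends on $z_i$ only through its $i$-th coordinate $\inf \cI^{(i)}_{z_i}$, which is non-decreasing in $z_i$, and by construction $p_i$ evaluated at $z_i$ equals $p_{i-1}$ evaluated at $z_i + 1$, because consecutive intervals of the equal-measure partition share an endpoint. Hence $z_i \mapsto f(p_{i-1})$ is a $\{0,1\}$-valued non-decreasing sequence, so at most one $z_i$ witnesses the transition. Summing over the $d$ choices of dimension and $N^{d-1}$ columns per dimension yields the claimed bound.

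The main obstacle I anticipate is the bookkeeping for the outermost boxes, whose intervals $\cI^{(i)}_1$ or $\cI^{(i)}_N$ are half-lines and hence have ``corners at infinity,'' where $f$ is undefined. My plan to sidestep this cleanly is to first restrict attention to the bounded sub-box $K_\eta := \prod_i [F_i^{-1}(\eta), F_i^{-1}(1-\eta)]$, whose complement has $\mu$-mass at most $2d\eta$ by a union bound; inside $K_\eta$ all corners are finite and the chain argument goes through verbatim, bounding $\mu_*(\partial E \cap K_\eta) \le d/N$. Letting first $N \to \infty$ and then $\eta \to 0$ gives $\mu_*(\partial E) = 0$ and completes the proof.
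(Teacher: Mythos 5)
Your proposal is correct, but it takes a genuinely different route from the paper. The paper proves \Thm{mono-measure} by induction on the dimension: it slices $f$ along the first coordinate, uses the inductive hypothesis to get open covers of the slices $S_x$, observes that the slice-measure function $h(x) = \mu_{-1}(S_x)$ is monotone, and handles the at most $2/\eps$ ``jumpy'' intervals where $h$ increases by more than $\eps$ separately from the rest. You instead argue directly in $\RR^d$ that the topological boundary of the upward-closed set $E = f^{-1}(1)$ has $\mu$-exterior measure zero, via the combinatorial fact that at most $d\cdot N^{d-1}$ of the $N^d$ equal-measure cells of $\bG_N$ can straddle $E$; your chain/transition-dimension argument for this count is sound (the sequence $z_i \mapsto f(p_{i-1})$ along a column is a non-decreasing $\{0,1\}$-sequence precisely because consecutive intervals of the almost partition share an endpoint, so each column contributes at most one transition in each dimension). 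Your approach is non-inductive and in some ways buys more: it shows monotone sets are approximable by unions of grid boxes with error $O(d/N)$, which is exactly the form of approximation the paper separately establishes for general measurable sets in \Lem{approx-grid}, and it makes the $\mu$-nullity of the ``staircase'' boundary explicit. The paper's induction is the more standard measure-theoretic template and avoids any grid combinatorics. The one place your write-up is loose is the treatment of the unbounded cells; the cleanest fix is to discard the at most $2d$ boundary layers of $\bG_N$ (cells with some index in $\{1,N\}$), of total measure at most $2d/N$, rather than introducing the extra parameter $\eta$ --- but this is exactly the issue you flagged, and your plan for it works.
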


\begin{proof} The proof is by induction over the number of dimensions, $d$. For $d=1$,
	the set $f$ is either $[z,\infty)$ or $(z,\infty)$ for some $z \in \RR$, since $f$ is a monotone function. Any open or closed set
	is measurable.
	
	Now for the induction. Choose any $\eps > 0$. We will construct
	an open set $\cO$ such that $\mu_*(\cO \setminus f) \leq 8\eps$. 
	Consider the first dimension, and the corresponding measure $\mu_1$.
	We use $\mu_{-1}$ for the $(d-1)$-dimensional product measure in the remaining dimensions.
	(We use $\mu_{-1,*}$ for the $(d-1)$-dimensional exterior measure.)
	As shown in \Lem{cont-equal-int}, there is an almost partition
	of $\RR$ into $N = \lceil 1/\eps^2\rceil$ closed intervals such that each interval has $\mu_1$-measure at most $\eps^2$. Let these intervals
	be $I_1, I_2, I_3, \ldots, I_N$. We will consider the set of intervals $\bI = \{I_1 \cup I_2, I_2 \cup I_3, \ldots, I_{N-1} \cup I_{N}\}$ (let
	us treat these as open intervals). Observe that $\cup_{I \in \bI} I = \RR$,
	and $\mu_1(I) \leq 2\eps^2$ for all $I \in \bI$. 
	
	For any $x \in \RR$, let $S_x$ be the subset of $f$ with first
	coordinate $x$. We will treat $S_x$ as a subset of $\RR^{d-1}$ and use $\{x\} \times S_x$ to denote the corresponding subset of $\RR^d$.
	By monotonicity, $\forall x < y$, $S_{x} \subseteq S_{y}$.
	By induction, each set $S_{x}$ is measurable in $\RR^{d-1}$ 
	and thus there exists an open set $\cO_x \subseteq \RR^{d-1}$
	such that $\mu_{-1,*}(\cO_x \setminus S_{x}) \leq \eps$. Define
	the function $h\colon \RR \to [0,1]$ such that $h(x)$ is the measure of $S_{x}$ (in $\RR^{d-1}$).
	Crucially, $h$ is monotone because $f$ is monotone.
	
	Call an interval $(x,y)$ \emph{jumpy} if $h(y) > h(x) + \eps$
	and let $\bJ \subseteq \bI$ be the set of jumpy intervals in $\bI$. 
	For a non-jumpy interval $I = (x,y) \in \bI \setminus \bJ$, define $\cO_I := I \times \cO_y$. Note that $\cO_I$ is open and by monotonicity, $\cO_I \supseteq \bigcup_{z \in I} (\{z\} \times S_z) = \{z \in f : z_1 \in I\}$.
	
	The open set $\cO := (\bigcup_{J \in \bJ} J \times \RR^{d-1}) \cup (\bigcup_{I \in \bI \setminus \bJ}
	\cO_I)$ contains (the set) $f$. It remains to bound
	
	\begin{align}
	\mu_*(\cO \setminus f) &\leq \mu_*\left(\bigcup_{J \in \bJ} J \times \RR^{d-1}\right) + \mu_*\left(\bigcup_{I \in \bI \setminus \bJ} \cO_I \setminus f\right) \nonumber \\
	&\leq \sum_{J \in \bJ} \mu_1(J) + \sum_{I \in \bI \setminus \bJ} \mu_*(\cO_I \setminus f) \leq 2\eps^2|\bJ| + \sum_{I \in \bI \setminus \bJ} \mu_*(\cO_I \setminus f) \text{.} \label{eq:diff}
	\end{align}

	To handle the first term, note that there are at least
	$|\bJ|/2$ disjoint intervals in $\bJ$ and each such interval represents
	a jump of at least $\eps$ in the value of $h$. Thus, $|\bJ|/2 \leq 1/\eps$ and so $|\bJ| \leq 2/\eps$.
	
	Now, consider $I = (x,y) \in \bI \setminus \bJ$. We have $\cO_I = I \times \cO_y$.
	By monotonicity $\cO_I \setminus f \subseteq \cO_I \setminus (I \times S_{x}) = (I \times \cO_y) \setminus (I \times S_{x})
	= I \times (\cO_y \setminus S_{x})$. Since $S_{y} \supseteq S_{x}$, $\cO_y \setminus S_x
	= (\cO_y \setminus S_y) \cup (S_y \setminus S_x)$. By sub-additivity
	of exterior measure, $\mu_{-1,*}(\cO_y \setminus S_x) \leq \mu_{-1,*}(\cO_y \setminus S_y)
	+ \mu_{-1,*} (S_y \setminus S_x)$. The former term is at most $\eps$, by the choice of $\cO_y$.
	Because $I$ is not jumpy, the latter term is $h(y) - h(x) \leq \eps$. Thus, 
	
	\begin{align*}
	\sum_{I \in \bI \setminus \bJ} \mu_*(\cO_I \setminus f) \leq \sum_{I \in \bI \setminus \bJ} \mu_1(I) \cdot (\mu_{-1,*}(\cO_y \setminus S_y) + \mu_{-1,*} (S_y \setminus S_x)) \leq 2\eps \sum_{I \in \bI \setminus \bJ} \mu_1(I) \leq 4\eps \text{.}
	\end{align*}
	
	All in all, we can upper bound the expression in \Eqn{diff} by $2\eps^2(2/\eps) + 4\eps = 8\eps$. \end{proof}
\section{The Monotonicity Tester} \label{sec:tester}

In this section we prove our main monotonicity testing results, \Thm{main-informal-grid} and \Thm{main-informal-cont} (recall the formal statement, \Thm{main-cont}). 
We use the following theorem of \cite{BlackCS18} on monotonicity testing for Boolean functions over $[n]^d$. 


\begin{theorem}[Theorem 1.1 of \cite{BlackCS18}]\label{thm:soda_result}
	There is a randomized algorithm which, given a parameter $\eps \in (0,1)$ and a function $f\colon [n]^d \to \{0,1\}$, makes $O(d^{5/6}\cdot \log^{3/2}d\cdot (\log n + \log d)^{4/3} \cdot \eps^{-4/3})$ non-adaptive queries to $f$ and (a) always accepts if $f$ is monotone, and (b) rejects with probability $>2/3$ if $\eps_f > \eps$.
\end{theorem}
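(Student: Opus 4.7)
The plan is to follow the Chakrabarty--Seshadhri and Chen--Servedio--Tan template for the hypercube, but carried out on the \emph{augmented hypergrid} $\AH$ in which $x$ and $y$ are connected by an edge whenever they differ in a single coordinate by a power of $2$. The augmentation is critical: the ordinary hypergrid does not admit a useful directed Margulis-type isoperimetric inequality, while $\AH$ behaves much like a ``hypercube of dimension $d\log n$'' and loses only logarithmic factors. A negative (``violating'') edge is a pair $(x,y)$ with $x \prec y$, $f(x)=1$, $f(y)=0$; let $S_f^-$ be the set of negative edges of $\AH$, $I_f^- := |S_f^-|/n^d$, and $\Gamma_f^- := |M^*|/n^d$ where $M^*$ is a maximum matching in $S_f^-$.

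The core technical step would be a directed Margulis inequality for $\AH$:
\[
 I_f^- \cdot \Gamma_f^- \;\gtrsim\; \eps_f^2 / \polylog(n,d).
\]
I would prove this by starting from a maximum violation matching $M_f$ between $1$- and $0$-points witnessing $\eps_f$, then ``routing'' each matched pair along coordinate axes using greedy monotone shifts. Each routing step traverses edges of $\AH$, and by choosing power-of-$2$ jumps one can charge the routing to few augmented edges per coordinate. A double-counting argument (each augmented edge appears in the decomposition of few routes) yields that either there are many negative edges (large $I_f^-$) or a large matching survives (large $\Gamma_f^-$). This parallels the hypercube proofs but has to be carried out scale by scale over the $\log n$ powers of $2$ per coordinate.

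Given the inequality, the tester samples pairs $(x, x + r\cdot e_i)$ with $i \in [d]$ uniform, $r$ uniform among powers of $2$ in $[1,n]$, and $x$ uniform in $[n]^d$; accept iff no sampled pair is negative. The analysis splits on a threshold $\tau$ (to be tuned around $d^{1/3}$): if $I_f^- \geq \tau$ is large, a single random augmented edge is negative with probability $\tilde\Omega(\eps_f^2/(d\tau))$, so $\tilde O(d\tau/\eps_f^2)$ edge queries suffice; if instead $\Gamma_f^- \geq \eps_f^2/(\tau\cdot\polylog)$, a random vertex has many scales and coordinates at which it is a matched endpoint, and a ``pair tester'' that samples $\tilde O(d/\Gamma_f^-)^{1/2}$ vertices with a few random scales each finds a negative pair with constant probability. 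Balancing the two regimes at $\tau \approx d^{1/3}$, and paying $(\log n + \log d)^{4/3}$ for enumerating scales, yields the claimed $\tilde O(d^{5/6}(\log n+\log d)^{4/3}\eps^{-4/3})$ query bound. One-sidedness is automatic since the tester only rejects on witnessed violations.

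The main obstacle will be the directed Margulis inequality on $\AH$. Unlike the hypercube, each vertex is incident to $\Theta(\log n)$ edges per coordinate (one per scale), so naively unioning over scales loses $\log^2 n$ and destroys the target bound. The fix is a per-scale amortized potential argument, combined with a ``robust'' matching construction (à la Chakrabarty--Seshadhri) that distributes violations across scales roughly uniformly, so that the final matching guarantee depends only polylogarithmically on $n$. Everything downstream of the inequality, the case split and the pair-tester analysis, is then a direct adaptation of Chen--Servedio--Tan.
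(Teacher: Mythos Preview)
This statement is not proved in the present paper at all: it is quoted as Theorem~1.1 of \cite{BlackCS18} and invoked purely as a black box (the \pathtester\ subroutine in Algorithm~\ref{alg:tester}). There is therefore no ``paper's own proof'' to compare your proposal against.

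That said, your sketch does track the high-level architecture of \cite{BlackCS18} as described in \Sec{prev} of this paper: work on the augmented hypergrid $\AH$, establish a directed Margulis-type inequality relating $I_f^-$ and $\Gamma_f^-$ to $\eps_f$, and then run a Chen--Servedio--Tan style two-regime analysis. One factual correction worth noting from what this paper records about \cite{BlackCS18}: the isoperimetric inequality there is $I_f^-\cdot\Gamma_f^- = \Omega(\eps_f^2)$ with no polylogarithmic loss in the inequality itself; the $\log n$ factors in the final query bound arise only because each vertex of $\AH$ has $\Theta(d\log n)$ incident edges, which enters the tester analysis, not the isoperimetry. So the ``main obstacle'' you flag (losing $\log^2 n$ by unioning over scales in the inequality) is not where the difficulty actually lies. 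If your goal is to reconstruct the \cite{BlackCS18} proof, you should consult that paper directly; for the purposes of the present paper, \Thm{soda_result} is simply imported.
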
 

We refer to the tester of \Thm{soda_result} as the \pathtester. Using this result along with our domain reduction theorems
\Thm{dir_domain_reduction} and \Thm{cont_domain_reduction}, we design testers for Boolean-valued functions over $[n]^d$ and $\RR^d$ (refer to \Alg{tester}). We restrict our attention to the $\RN^d$ case and prove \Thm{main-informal-cont} (that is, \Thm{main-cont}); the proof of \Thm{main-informal-grid} is analogous (and the corresponding tester is analogous to \Alg{tester})
. In what follows we let $C$ denote the universal constant from \Thm{cont_domain_reduction} and we define $L := \lceil{\log(2/\eps)}\rceil$.

\begin{remark} Our tester (\Alg{tester}) uses Levin's work investment strategy (see \cite{Go-book}, Section 8.2.4) to optimize the dependence on $\eps$. We remark that if one only cares about achieving a dependence of $\poly(1/\eps)$, then the following simpler tester suffices: invoke \Step{domain_reduce} and \Step{soda_tester} (with $\eps_\ell$ replaced by $\eps/4$) of \Alg{tester} $16/\eps$ times. By Markov's inequality and the fact that $\Expect_{\bT}[\eps_{f_{\bT}}] \geq \eps/2$, with high probability at least one of the calls to \Step{domain_reduce} will yield a reduced hypergrid $\bT$ satisfying $\eps_{f_{\bT}} \geq \eps/4$. \Step{soda_tester} will then reject the restriction $f_{\bT}$, and thus reject $f$, with high probability. This leads to an $\eps^{-7/3}$ dependence on $\eps$, as opposed to the $\eps^{-4/3}$ achieved by \Alg{tester}. \end{remark}

\begin{algorithm}
	\caption{Monotonicity Tester for $f\colon \RN^d \to \{0,1\}$. Inputs: $f$ and $\eps \in (0,1)$.} \label{alg:tester}
	\begin{algorithmic}[1]
		\State \textbf{for all} $\ell \in [L+1]$:
			\State \indent \textbf{set} $Q_\ell := \lceil\frac{32\ell^2}{2^\ell \eps}\rceil$ and $\eps_\ell := 1/2^\ell$.
			\State \indent \textbf{repeat} $Q_\ell$ times:
				\State \label{step:domain_reduce} \indent \indent Sample $\bT = T_1 \times \cdots \times T_d$ as in \Thm{cont_domain_reduction} with $k = (2C \cdot \frac{d}{\eps})^7$. 
				\State \label{step:soda_tester} \indent \indent \textbf{if} \pathtester($f_{\bT},\eps_\ell,k$) returns REJECT, \textbf{then} \Return REJECT.
		\State \Return ACCEPT.
	\end{algorithmic}
\end{algorithm}

\paragraph{Proof of \Thm{main-cont}:} In \Step{domain_reduce} of \Alg{tester} we set $k := (2 C \cdot \frac{d}{\eps})^{7}$ and sample a hypergrid $\bT = \prod_{i=1}^d T_i$, where each $T_i$ is formed by $k$ i.i.d. draws from $\cD_i$. By \Thm{cont_domain_reduction}, $\Expect_{\bT}[\eps_{f_{\bT}}] \geq \eps_f - \frac{C \cdot d}{k^{1/7}}$. Thus, if $\eps_f > \eps$, then $\Expect_{\bT}[\eps_{f_{\bT}}] \geq \eps / 2$. By \Clm{good-ell} there exists $\ell^\ast \in [L+1]$ such that $\Pr_{\bT}\left[\eps_{f_{\bT}} \geq \eps_{\ell^\ast}\right] \geq \frac{2^{\ell^{\ast}}\eps}{8 (\ell^{\ast})^2} \geq 4/Q_{\ell^{\ast}}$. Thus when $\ell$ is set to $\ell^{\ast}$ in \Alg{tester} \emph{at least one} of the $Q_{\ell^{\ast}}$ iterations of \Step{domain_reduce} returns $\bT$ satisfying $\eps_{f_{\bT}} \geq \eps_{\ell^{\ast}}$ with probability $\geq 1 - (1 - 4/Q_{\ell^{\ast}})^{Q_{\ell^{\ast}}} \geq 1 - (1/e)^4 \geq 15/16$. Thus, if $\eps_f > \eps$, then \Alg{tester} rejects with probability $> \frac{15}{16} \cdot \frac{2}{3} = 5/8$. On the other hand, if $f$ is monotone, then $f_{\bT}$ is always monotone and so \Alg{tester} accepts with probability $1$.

We now analyze the query complexity. Let $q(\eps,n,d)$ denote the query complexity of \pathtester~ with parameters $\eps, n$ and $d$. In particular, $q(\eps,k,d) \leq \tilde{O}(d^{5/6} \eps^{-4/3})$. 
Thus, the query complexity of \Alg{tester} is 

\begin{align*}
	\sum_{\ell=1}^{L+1} Q_\ell \cdot  q(\eps_{\ell},k,d) &= \sum_{\ell=1}^{L+1} \left\lceil\frac{32\ell^2}{2^{\ell}\eps}\right\rceil \cdot \tilde{O}\left({\frac{d^{5/6}}{2^{-4\ell/3}}}\right) = \tilde{O}\left(d^{5/6}\eps^{-1}\right)\sum_{\ell=1}^{L+1} \ell^2 \cdot \tilde{O}\left(2^{\ell/3}\right) \\
	&\leq \tilde{O}\left(d^{5/6}\eps^{-1}\right) L^3 \tilde{O}\left(2^{L/3}\right) \leq \tilde{O}\left(d^{5/6}\eps^{-4/3}\right)
\end{align*}

\noindent where in the last step we used the fact that $L = \Theta(\log (1/\eps))$. \pushQED{\qed} \qedhere \popQED 



\begin{claim} \label{clm:good-ell} If $\Expect_{\bT}[\eps_{f_{\bT}}] \geq \eps / 2$, then there exists $\ell^{\ast} \in [L+1]$ such that $\Pr\left[\eps_{f_{\bT}} \geq 2^{-\ell^{\ast}}\right] \geq \frac{2^{\ell^{\ast}}\eps}{8 (\ell^{\ast})^2}$. \end{claim}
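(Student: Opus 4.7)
The claim is a standard application of Levin's work investment strategy, so the plan is to prove it by contradiction using the dyadic tail-sum representation of the expectation. Concretely, assume toward contradiction that for every $\ell \in [L+1]$ we have $\Pr_{\bT}[\eps_{f_{\bT}} \geq 2^{-\ell}] < \frac{2^{\ell}\eps}{8\ell^2}$, and derive that $\Expect_{\bT}[\eps_{f_{\bT}}] < \eps/2$.

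To this end, use the layer-cake identity $\Expect_{\bT}[\eps_{f_{\bT}}] = \int_0^{1} \Pr_{\bT}[\eps_{f_{\bT}} \geq t]\,dt$ (valid since $\eps_{f_{\bT}} \in [0,1]$), and partition $[0,1]$ into the dyadic intervals $[2^{-(m+1)},2^{-m})$ for $m \geq 0$. On each such interval, monotonicity of $t \mapsto \Pr[\eps_{f_{\bT}} \geq t]$ gives the bound
\[
\int_{2^{-(m+1)}}^{2^{-m}} \Pr_{\bT}[\eps_{f_{\bT}} \geq t]\,dt \;\leq\; 2^{-(m+1)} \cdot \Pr_{\bT}\!\left[\eps_{f_{\bT}} \geq 2^{-(m+1)}\right].
\]
After reindexing $m' = m+1$, this yields the standard inequality $\Expect_{\bT}[\eps_{f_{\bT}}] \leq \sum_{m=1}^{\infty} 2^{-m}\Pr_{\bT}[\eps_{f_{\bT}} \geq 2^{-m}]$.

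Next, split this sum at $m = L+1$. For the tail $m \geq L+2$, trivially bound the probabilities by $1$ to get $\sum_{m \geq L+2} 2^{-m} = 2^{-(L+1)} \leq \eps/4$, where we use $L = \lceil \log(2/\eps)\rceil$ so that $2^{-L} \leq \eps/2$. For the main part $1 \leq m \leq L+1$, plug in the contradiction hypothesis to get
\[
\sum_{m=1}^{L+1} 2^{-m}\Pr_{\bT}[\eps_{f_{\bT}} \geq 2^{-m}] \;<\; \sum_{m=1}^{L+1} 2^{-m} \cdot \frac{2^{m}\eps}{8m^2} \;=\; \frac{\eps}{8}\sum_{m=1}^{L+1}\frac{1}{m^2} \;<\; \frac{\eps}{8}\cdot \frac{\pi^2}{6} \;<\; \frac{\eps}{4}.
\]
Adding the two pieces gives $\Expect_{\bT}[\eps_{f_{\bT}}] < \eps/2$, contradicting the hypothesis $\Expect_{\bT}[\eps_{f_{\bT}}] \geq \eps/2$. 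Hence some $\ell^{\ast} \in [L+1]$ must satisfy the desired lower bound.

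There is no real obstacle here; the only care needed is in the choice of the dyadic split point, which is why $L = \lceil \log(2/\eps)\rceil$ was chosen (so that the truncated tail contributes at most $\eps/4$) and in the factor $1/(8\ell^2)$, which is tuned precisely so that $\sum 1/\ell^2$ converges to a constant making the main sum at most $\eps/4$. Both choices are routine and are exactly the hallmark of Levin's work investment strategy.
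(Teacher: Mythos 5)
Your proof is correct and follows essentially the same route as the paper: a contradiction argument via the layer-cake identity, a dyadic decomposition of the integral into intervals $[2^{-(\ell+1)},2^{-\ell})$, and the bound $\sum_\ell 1/\ell^2 < \pi^2/6$ to show the total is below $\eps/2$. The only (cosmetic) difference is bookkeeping of the $\eps/4$ slack: the paper discards the integral over $[0,\eps/4]$ up front, while you retain the full dyadic sum and bound its tail $m \geq L+2$ by $2^{-(L+1)} \leq \eps/4$.
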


\begin{proof} We have $\int_{0}^1 \Pr\left[\eps_{f_{\bT}} \geq t\right] dt = \Expect[\eps_{f_{\bT}}] \geq \eps/2$ and so $\int_{\eps/4}^1 \Pr\left[\eps_{f_{\bT}} \geq t\right] dt \geq \eps/4$. Thus,

\begin{align} \label{eq:levins}
	\frac{\eps}{4} \leq \int_{\eps/4}^1 \Pr\left[\eps_{f_{\bT}} \geq t\right] \leq \sum_{\ell=0}^L \int_{1/2^{\ell+1}}^{1/2^{\ell}} \Pr\left[\eps_{f_{\bT}} \geq t \right] dt \leq \sum_{\ell=0}^L \frac{1}{2^{\ell+1}} \Pr\left[\eps_{f_{\bT}} \geq 1/2^{\ell+1}\right]
    = \sum_{\ell=1}^{L+1} \frac{1}{2^{\ell}} \Pr\left[\eps_{f_{\bT}} \geq 1/2^{\ell}\right] \text{.}
\end{align}

\noindent For the sake of contradiction, assume $\Pr\left[\eps_{f_{\bT}} \geq 1/2^{\ell}\right] < \frac{2^{\ell}\eps}{8\ell^2}$ for all $\ell \in [L+1]$. Using \Eqn{levins}, we have

\begin{align*}
	\eps \leq 4\sum_{\ell=1}^{L+1} \frac{1}{2^{\ell}} \Pr\left[\eps_{f_{\bT}} \geq 1/2^{\ell}\right]
      < \frac{\eps}{2}\sum_{\ell=1}^{L+1} \frac{1}{\ell^2} < \frac{\eps}{2} \cdot \frac{\pi^2}{6} < \varepsilon \text{.}
\end{align*}

\noindent This is a contradiction. \end{proof}



\section{Lower Bound for Domain Reduction} \label{sec:lower}

In this section we prove the following lower bound for the number of uniform samples needed for a domain reduction result to hold for distance to monotonicity. Recall the domain reduction experiment for the hypergrid: given $f\colon [n]^d \to \{0,1\}$ and an integer $k \in \mathbb{Z}^+$, we choose $\bT := T_1 \times \cdots \times T_d$ where each $T_i$ is formed by taking $k$ i.i.d. uniform draws from $[n]$ with replacement. We then consider the restriction $f_{\bT}$.

\begin{theorem} [Lower Bound for Domain Reduction]\label{thm:lower_bound} There exists a function $f\colon [n]^d \to \{0,1\}$ 
with distance to monotonicity $\eps_f = \Omega(1)$, for which $\Expect_{\bT}[\eps_{f_{\bT}}] \leq O(k^2/d)$. In particular, $k = \Omega(\sqrt{d})$ samples in each dimension is necessary to preserve distance to monotonicity. 
\end{theorem}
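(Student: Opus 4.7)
The plan is to exhibit a function $f \colon [n]^d \to \{0,1\}$ whose monotonicity violations are localized into $d$ coordinate-specific ``bad slots,'' each unlikely to survive the random subgrid sampling $\bT$. Since $\eps_{f_{\bT}} \le 1$ always, it suffices to upper bound $\Pr_{\bT}[f_{\bT}$ is non-monotone$]$ by $O(k^2/d)$. The target structure is this: for each coordinate $i \in [d]$, designate a ``bad pair'' $a_i < b_i \in [n]$, and build $f$ so that every monotonicity violation in any restriction $f_{\bT}$ is witnessed by a coordinate-$i$-aligned edge with $x_i = a_i, y_i = b_i$. A violation of $f_{\bT}$ then requires, for some $i \in [d]$, both $a_i, b_i \in T_i$---an event of probability exactly $k(k-1)/n^2 = O(k^2/n^2)$ per coordinate under $k$ i.i.d. uniform samples. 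A union bound over $d$ coordinates gives $O(dk^2/n^2)$; choosing $n = \Theta(d)$ then yields $O(k^2/d)$, as required.

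Concretely, I would take $n = \Theta(d)$ and build $f$ starting from a symmetric ``threshold'' base function, then selectively flip $f$'s values on carefully chosen antichain-like slabs, each associated with one coordinate's bad pair. The flips should create an explicit violation matching of size $\Omega(n^d)$ inside $f$, certifying $\eps_f = \Omega(1)$, while leaving no other edge-violations of monotonicity. The proof then proceeds in four steps: (1) define $f$ and its bad pairs precisely; (2) verify $\eps_f = \Omega(1)$ by exhibiting the matching; (3) show that any violation of $f_{\bT}$ forces $\{a_i, b_i\} \subseteq T_i$ for some $i$, which is the key structural consequence of $f$'s design; (4) apply the union bound from the first paragraph to get $\Pr[f_{\bT}$ has a violation$] = O(k^2/d)$. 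The corollary that $k = \Omega(\sqrt{d})$ is necessary follows by noting that for $\Expect_{\bT}[\eps_{f_{\bT}}] \ge \eps_f/2$ we need $k^2/d = \Omega(1)$.

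The main obstacle is steps (1)--(2): finding the right $f$ that is simultaneously $\Omega(1)$-far from monotone yet has its violations so sharply concentrated along the $d$ designated coordinate edges. Standard candidates such as anti-majority, anti-dictators, parity-of-thresholds, or anti-diagonals all produce violations that persist robustly under sub-sampling, because their violations involve generic coordinate comparisons rather than specific value-pairs. The desired construction must instead resemble an ``address function'' whose violating direction is uniquely determined by the coordinates, so that avoiding any single coordinate's bad pair in the sample suffices to eliminate the associated violations; designing this so as to still accumulate an $\Omega(n^d)$-sized violation matching is the delicate part. Once the construction is in hand, the remaining probabilistic calculations are routine.
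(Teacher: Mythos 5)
Your high-level strategy is the same one the paper uses: localize all potential violations of the restriction so that, for each coordinate $i$, a violation can only survive if two of the $k$ samples in $T_i$ land in a designated region of measure $O(1/d)$, then union bound over the $d$ coordinates to get $O(k^2/d)$. But you explicitly leave the construction of $f$ --- your steps (1)--(3) --- as an unresolved obstacle, and that construction is the entire content of the theorem; without it there is no proof. There is also a small inaccuracy in the probabilistic step you do carry out: the probability that both designated values lie in $T_i$ is at most $k(k-1)/n^2$ by a union bound over ordered pairs of samples, not ``exactly'' that quantity.

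For comparison, the paper's construction (the {\sf Centrist} function, \Sec{lower}) is \emph{not} an address-type function; it is a symmetric function of per-coordinate ``categories.'' Each coordinate's range is split into three intervals of measure $1-2/d$, $1/d$, $1/d$ (skeptic, supporter, fanatic), and $f(x)=1$ iff some coordinate is a supporter. The only way the function can decrease along a coordinate is the transition supporter $\to$ fanatic, so $f_{\bT}$ is monotone unless some $T_i$ contains two non-skeptic samples --- an event of probability $O(k^2/d^2)$ per coordinate, giving $O(k^2/d)$ overall, exactly your union-bound template but with two small \emph{intervals} per coordinate rather than two specific grid values (this softening is what makes the construction work at all). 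The distance bound $\eps_f=\Omega(1)$ comes from matching, for each $i$, the event ``coordinate $i$ is a supporter and all others are skeptics'' to ``coordinate $i$ is a fanatic and all others are skeptics''; each of these $d$ disjoint pieces has measure $\Omega(1/d)$, so they accumulate to $\Omega(1)$ even though each individual coordinate's violating region is tiny. This is precisely the ``delicate part'' you flag and do not supply, so the proposal as written has a genuine gap.
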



\subsection{Proof of \Thm{lower_bound}}

We define the function $\nofan\colon [0,1]^d \to \{0,1\}$. The continuous domain is just a matter of convenience;
any $n$ that is a multiple of $d$ would suffice. It is easiest to think of $d$ individuals voting
for an outcome, where the $i$th vote $x_i$ is the "strength" of the vote. Based on their vote, an individual is labeled
as follows.

\begin{itemize} [noitemsep]
    \item $x_i \in [0,1-2/d]$: skeptic
    \item $x_i \in (1-2/d,1-1/d]$: supporter
    \item $x_i \in (1-1/d,1]$: fanatic
\end{itemize}

\medskip

$\nofan(x) = 1$ \emph{iff} there exists some individual who is a supporter.
The non-monotonicity is created by fanaticism.
If a unique supporter increases her vote to become a fanatic,
the function value can decrease.

\begin{claim} \label{clm:dist} The distance to monotonicity of $\nofan$ is $\Omega(1)$.
\end{claim}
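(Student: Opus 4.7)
The plan is to lower bound $\eps_\nofan$ by exhibiting a large family of ``canonical'' violating pairs for $\nofan$, whose combined measure is bounded below by a constant independent of $d$. Any monotone approximation $g$ must disagree with $\nofan$ on at least one endpoint of each such pair, and integrating this over the whole family will yield the bound. The intuition is that the non-monotonicity of $\nofan$ is witnessed precisely by lifting a unique supporter coordinate up into the fanatic range: this strictly increases the point coordinate-wise but changes the label from $1$ to $0$.

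Concretely, for each coordinate $i \in [d]$, each ``skeptic background'' $z \in [0, 1-2/d]^{d-1}$, and each ``supporter value'' $s \in (1-2/d, 1-1/d]$, I define $x_{i,z,s} \in [0,1]^d$ to be the point with $s$ in coordinate $i$ and $z$ filling the remaining $d-1$ coordinates, and $y_{i,z,s}$ to agree with $x_{i,z,s}$ except that its $i$th coordinate equals $s + 1/d \in (1-1/d, 1]$. Then $x_{i,z,s} \prec y_{i,z,s}$ coordinate-wise; $\nofan(x_{i,z,s}) = 1$ since coordinate $i$ is a supporter; and $\nofan(y_{i,z,s}) = 0$ since coordinate $i$ is now a fanatic while the other coordinates are skeptics, so no supporter exists.

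Let $X := \{x_{i,z,s}\}$ and $Y := \{y_{i,z,s}\}$. These sets are disjoint, since every point of $X$ contains exactly one supporter while every point of $Y$ contains none. By Fubini, for each fixed $i$ the map $(z,s) \mapsto x_{i,z,s}$ is measure-preserving onto its image, and different $i$ give disjoint images (since the supporter coordinate of $x \in X$ is unique); hence $\mu(X) = d \cdot (1/d) \cdot (1-2/d)^{d-1} = (1-2/d)^{d-1}$, and identically $\mu(Y) = (1-2/d)^{d-1}$. For any monotone $g$, the comparability $x_{i,z,s} \prec y_{i,z,s}$ combined with $\nofan(x_{i,z,s}) = 1 > 0 = \nofan(y_{i,z,s})$ forces $g$ to disagree with $\nofan$ on at least one of the two endpoints. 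Integrating the indicator inequality $\mathbf{1}[g(x_{i,z,s}) \neq \nofan(x_{i,z,s})] + \mathbf{1}[g(y_{i,z,s}) \neq \nofan(y_{i,z,s})] \geq 1$ over $(i,z,s)$ and using that the two parametrizations are measure-preserving onto the disjoint sets $X, Y$, I obtain
\[
\dist(\nofan, g) \;\geq\; \mu\bigl((X \cup Y) \cap \{g \neq \nofan\}\bigr) \;\geq\; (1-2/d)^{d-1},
\]
which is at least $1/9$ for all $d \geq 3$ (and tends to $e^{-2}$ as $d \to \infty$). Taking the infimum over monotone $g$ gives $\eps_\nofan = \Omega(1)$.

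No significant obstacle is expected; the only technical care needed is to verify that the reparametrizations $(i,z,s) \mapsto x_{i,z,s}$ and $(i,z,s) \mapsto y_{i,z,s}$ are measure-preserving (immediate from Fubini) and that $X$ and $Y$ are disjoint, so that the per-pair indicator inequality integrates to a bound on $\mu(\{g \neq \nofan\})$ without double-counting.
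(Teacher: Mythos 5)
Your proof is correct and is essentially the paper's argument: your sets $X$ and $Y$ are exactly the unions $\bigcup_i \cS_i$ and $\bigcup_i \cF_i$ of the paper's ``unique supporter'' and ``unique fanatic'' events, and your map $x_{i,z,s} \mapsto y_{i,z,s} = x_{i,z,s} + e_i/d$ is the paper's violation matching $M$. The only difference is that you spell out explicitly the standard fact that a violation matching of measure $\mu(X)$ forces any monotone $g$ to disagree on a set of at least that measure, which the paper invokes implicitly.
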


\begin{proof} It is convenient to talk in terms of probability over the uniform distribution in $[0,1]^d$.
Define the following events, for $i \in [d]$.
\begin{itemize} [noitemsep]
    \item $\cS_i$: The $i$th individual is a supporter, and all others are skeptics.
    \item $\cF_i$: The $i$th individual is a fanatic, and all others are skeptics.
\end{itemize}

\medskip

Observe that all these events are disjoint. Also, $\Pr[\cS_i] = \Pr[\cF_i] = (1/d)(1-2/d)^{d-1} = \Omega(1/d)$.
Note that $\forall x \in \cS_i$, $\nofan(x) = 1$ and $\forall x \in \cF_i$, $\nofan(x) = 0$.

We construct a violation matching $M\colon \bigcup_i \cS_i \to \bigcup_i \cF_i$.
For $x \in \cS_i$, $M(x) = x + e_i/d$, where $e_i$ is the unit vector in dimension $i$.
For $x \in \cS_i$, $x_i \in (1-2/d,1-1/d]$, so $M(x)_i \in (1-1/d,1]$, and $M(x) \in \cF_i$. $M$ is a  bijection between $\cS_i$ and $\cF_i$, and all the $\cS_i, \cF_i$
sets are disjoint. Thus, $M$ is a violation matching. Since $\Pr\left[\bigcup_i \cS_i\right] = \Omega(d\cdot1/d)$, 
the distance to monotonicity is $\Omega(1)$.
\end{proof}


\begin{lemma} \label{lem:lb} Let $k \in \mathbb{Z}^+$ be any positive integer. If $\bT := T_1 \times \cdots \times T_d$ is a randomly chosen hypergrid, where for each $i \in [d]$, $T_i$ is a set formed by taking $k$ i.i.d. samples from the uniform distribution on $[0,1]$, then with probability $> 1-4k^2/d$, $\nofan_{\bT}$ is a monotone function.
\end{lemma}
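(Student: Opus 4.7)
The plan is to show that any violation to monotonicity in $\nofan_{\bT}$ forces some single coordinate $T_j$ to simultaneously contain a \emph{supporter} sample (a value in $(1-2/d, 1-1/d]$) and a \emph{fanatic} sample (a value in $(1-1/d, 1]$), and then to bound the probability of this "bad event" by a simple union bound over coordinates.

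First I would analyze the structure of any violation. Suppose $x \prec y$ are both in $\bT$ with $\nofan(x) = 1$ and $\nofan(y) = 0$. Since $\nofan(x) = 1$, there is some $j \in [d]$ with $x_j \in (1-2/d, 1-1/d]$. By $x \le y$, we have $y_j \ge x_j > 1-2/d$. However $\nofan(y) = 0$ means $y_j$ is not a supporter, so the only possibility is $y_j > 1-1/d$, i.e.\ $y_j$ is a fanatic. In particular $x_j \neq y_j$, so $T_j$ contains (at least) two distinct sample points: one supporter and one fanatic.

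Next I would bound the probability of this event coordinate by coordinate. Fix $j \in [d]$ and let $\cB_j$ be the event that $T_j$ contains a supporter and a fanatic. Each of the $k$ samples in $T_j$ is uniform on $[0,1]$, so the probability that a given sample is a supporter is $1/d$, and the probability it is a fanatic is $1/d$. By a union bound over ordered pairs of distinct samples,
\begin{align*}
\Pr[\cB_j] \;\le\; \sum_{s \neq t \in [k]} \Pr[\text{sample $s$ is a supporter and sample $t$ is a fanatic}] \;=\; k(k-1)\cdot \tfrac{1}{d^2} \;\le\; \tfrac{k^2}{d^2}.
\end{align*}

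Finally, by the structural step, $\nofan_{\bT}$ is non-monotone only if $\cB_j$ holds for some $j \in [d]$. A union bound over $j$ gives
\begin{align*}
\Pr[\nofan_{\bT}\text{ is non-monotone}] \;\le\; \sum_{j=1}^d \Pr[\cB_j] \;\le\; d \cdot \tfrac{k^2}{d^2} \;=\; \tfrac{k^2}{d},
\end{align*}
which is well within the claimed $4k^2/d$ slack. There is no hard step here; the only content is the structural observation that a violation of $\nofan$ must ``live'' entirely in one coordinate (a supporter/fanatic pair in the same dimension), which turns the problem from a $d$-dimensional geometric statement into a one-dimensional sample-complexity estimate that composes trivially over coordinates.
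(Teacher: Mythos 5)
Your proof is correct and follows essentially the same strategy as the paper: a union bound over the $d$ coordinates on a per-coordinate bad event, plus the structural observation that a violation of $\nofan$ must be witnessed by a supporter value and a fanatic value in the same $T_j$. The paper instead bounds the probability that some $T_i$ contains two or more non-skeptic votes and argues the contrapositive (the good event implies monotonicity); your more refined bad event (specifically one supporter and one fanatic) even yields the slightly better bound $k^2/d$ in place of $4k^2/d$.
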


\begin{proof} Each $T_i$ consists of $k$ u.a.r. elements in $[0,1]$. We can think
of each as a sampling of the $i$th individual's vote. For a fixed $i$,
let us upper bound the probability that $T_i$ contains strictly more than
one non-skeptic vote. 
This probability is
\begin{eqnarray*}
1- (1-2/d)^k - k(1-2/d)^{k-1}(2/d) & = & 1 - (1-2/d)^{k-1}(1-2/d + 2k/d) \\
& \leq & 1 - \left(1-\frac{2(k-1)}{d}\right)\left(1+\frac{2(k-1)}{d}\right) \leq 4k^2/d^2
\end{eqnarray*}

\noindent where we have used the bound $(1-x)^r \geq 1-xr$, for any $x \in [0,1]$ and $r \geq 1$. By the union bound over all dimensions, with probability $>1-4k^2/d$,
all $T_i$'s contain at most one non-skeptic vote. Consider $\nofan_{\bT}$,
some $x \in \bT$, and a dimension $i \in [d]$.
If the $i$th individual increases her vote (from $x$), there are three possibilities.
\begin{itemize} [noitemsep]
    \item The vote does not change. Then the function value does not change.
    \item The vote goes from a skeptic to a supporter. The function value can possibly increase, but not decrease.
    \item The vote goes from a skeptic to a fanatic. If $\nofan_{\bT}(x) = 1$, there must exist
    some $j \neq i$ that is a supporter. Thus, the function value remains $1$ regardless of $i$'s vote.
\end{itemize}

\medskip

\noindent In no case does the function value decrease.  Thus, $\nofan_{\bT}$ is monotone. \end{proof}

\Thm{lower_bound} follows from \Clm{dist} and \Lem{lb}.

\section{Domain Reduction for Variance}\label{sec:undirected}

In this section, we prove that, given $f\colon [n]^d \to \{0,1\}$, restricting $f$ to a random hypercube (domain reduction with $k=2$) suffices to preserve the \emph{variance} of $f$. 
Recall that the variance is defined $\var(f) := \Expect[f^2] - \Expect[f]^2$. In the proof, we will consider $f\colon [n]^d \to \{-1,1\}$ and so $\var(f) = 1 - \Expect[f]^2 = 1 - \widehat{f}(\emptyset)^2$. 


\begin{theorem} [Domain Reduction for Variance] \label{thm:undir_domain_reduction} Let $f\colon [n]^d \to \{0,1\}$ be any function. If $\bT := T_1 \times \cdots \times T_d$ is a randomly chosen sub-hypercube, where for each $i \in [d]$, $T_i$ is a (multi)-set formed by taking $2$ i.i.d. samples from the uniform distribution on $[n]$, then $\EX_{\bT}[\var(f_{\bT})] \geq \var(f)/2$. \end{theorem}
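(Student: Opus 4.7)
The plan is to reduce to $f\colon [n]^d \to \{-1,1\}$ (since variance scales by a constant under affine rescaling of the range, the ratio $\Exp[\var(f_{\bT})]/\var(f)$ is unchanged), and then use an orthogonal (ANOVA / Hoeffding) decomposition of $f$ on the product domain $[n]^d$. Write
\[
f = \sum_{S \subseteq [d]} g_S, \qquad g_S(x) \text{ depends only on } x_S, \qquad \Exp_{x_i}[g_S] = 0 \text{ for every } i \in S.
\]
Such $g_S$ exist and are unique (e.g.\ via inclusion-exclusion), and they are pairwise orthogonal in $L^2$. In particular $g_\emptyset = \mu := \Exp[f]$ and $\var(f) = \sum_{S \neq \emptyset} \Exp[g_S^2] = 1 - \mu^2$.

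Next I would directly expand the expected variance of the restriction. Writing $\bT = T_1 \times \cdots \times T_d$ with $T_i = \{a_i, b_i\}$, and letting $x, x'$ denote two independent uniform draws from $\bT$, we have $\Exp[\var(f_{\bT})] = 1 - \Exp_{\bT}\bigl[\Exp_{x \in \bT}[f(x)]^2\bigr] = 1 - \Exp[f(x)f(x')]$, where the outer expectation is over the joint distribution of $(x,x')$ induced by first sampling $\bT$ and then sampling $x, x'$ u.a.r.\ and independently from $\bT$. I would now identify this joint distribution: for each coordinate $i$, independently,
\begin{itemize}[noitemsep]
\item with probability $1/2$, the two samples pick the same element of $T_i$, so $x_i = x'_i$ is uniform on $[n]$;
\item with probability $1/2$, they pick distinct elements, so $x_i, x'_i$ are independent uniforms on $[n]$.
\end{itemize}
Let $A \subseteq [d]$ be the (random) set of coordinates where $x_i = x'_i$; then $A$ is a uniformly random subset of $[d]$ and, conditionally on $A$, the pair $(x, x')$ has the distribution described above.

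Now I would apply the orthogonality properties of the decomposition. Conditioned on $A$, expanding $f(x)f(x') = \sum_{S,T} g_S(x) g_T(x')$ and using $\Exp_{x_i}[g_S] = 0$ for $i \in S$, every term with $S \neq T$ or with $S \not\subseteq A$ (equivalently, some $i \in S$ satisfies $x_i$ independent of $x'$) vanishes. The surviving terms are $S = T \subseteq A$, giving $\Exp[g_S(x)^2]$. Averaging over the random $A$, the coefficient of $\Exp[g_S^2]$ becomes $\Pr[S \subseteq A] = 2^{-|S|}$, so
\[
\Exp[f(x)f(x')] = \mu^2 + \sum_{\emptyset \neq S \subseteq [d]} 2^{-|S|} \Exp[g_S^2].
\]
Hence
\[
\Exp[\var(f_{\bT})] = 1 - \mu^2 - \sum_{\emptyset \neq S} 2^{-|S|}\Exp[g_S^2] = \var(f) - \sum_{\emptyset \neq S} 2^{-|S|}\Exp[g_S^2].
\]
Since $2^{-|S|} \leq 1/2$ for every $S \neq \emptyset$, the subtracted sum is at most $\tfrac12 \sum_{S \neq \emptyset}\Exp[g_S^2] = \var(f)/2$, yielding $\Exp[\var(f_{\bT})] \geq \var(f)/2$ as required.

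The only step that requires care is the joint-distribution analysis in the middle: one must verify that the distribution of $(x,x')$ obtained by first drawing $\bT$ and then sampling two independent uniforms from $\bT$ factorizes coordinate-wise in the manner claimed, and that the $\Exp_{x_i}[g_S]=0$ property is strong enough to kill every cross term with $S \neq T$ or $S \not\subseteq A$. Everything else is bookkeeping, and the bound is tight (equality holds iff all non-trivial mass of the decomposition lies on singletons $|S| = 1$).
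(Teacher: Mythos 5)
Your proof is correct, and it reaches the same quantitative conclusion by a structurally parallel but technically different route. The paper encodes $[n]^d$ as $\{-1,1\}^{d\log n}$ (implicitly taking $n$ a power of two), models each $T_i$ via a random fixed set $R_i$ and random bits $B_i$, and runs the computation through the hypercube Fourier expansion, showing that $\Expect_{\bT}\bigl[\Expect_y[\chi_S(x)]^2\bigr]=2^{-k}$ when $S$ meets $k$ blocks and that cross terms vanish; your ANOVA components $g_S$ are exactly the sums of Fourier characters touching a given set of blocks, and your $\Pr[S\subseteq A]=2^{-|S|}$ is the same $2^{-k}$. What your version buys: it works natively on $[n]^d$ for arbitrary $n$ with no binary encoding, it replaces the $R_i,B_i,y$ bookkeeping with the clean coupling statement that $(x,x')$ is, coordinate-wise independently, either a shared uniform (probability $1/2$) or two independent uniforms, and it makes the tightness condition transparent. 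The argument that every term with $S\neq T$ or $S\not\subseteq A$ vanishes is sound: in each case one finds a coordinate $i$ that appears in exactly one of the two factors and is (conditionally) an independent uniform, so $\Expect_{x_i}[g_S]=0$ kills the term. One cosmetic slip: $A$ should be defined as the set of coordinates where the two draws from $\bT$ select the \emph{same element} of the multiset $T_i$, not literally $\{i: x_i=x'_i\}$ (these can differ when $T_i$ has a repeated value); with that reading $A$ is uniform over subsets and your computation is exactly right.
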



\begin{proof}  We will interpret $f$ as a Boolean function with $d\log n$ (Boolean) inputs,
	so $f\colon \{-1,1\}^{d\log n} \to \{-1,1\}$. We will index the inputs in $[d\log n]$, where
	the interval $I_i := [(i-1)\log n+1, i\log n]$ (the $i$th block) corresponds to the $i$th input
	in the original representation. Henceforth, $i$ will always index a block (and thereby, an input
	in the original representation). We use $x_j$ to denote the $j$th input bit.
	
	Let us think of the restriction in Boolean terms. Note that $f_{\bT}\colon \{-1,1\}^d \to \{-1,1\}$,
	and we use $y$ to denote an input to the restriction.
	In Boolean terms, $T_i$ picks two u.a.r. $\log n$ bit strings, 
	and forces the $i$th block of inputs, $I_i$, to be one of these. The choice between these
	is decided by $y_i$. Let us think of $T_i$ as follows. For every $j \in I_i$, it adds
	it to a set $R_i$ with probability $1/2$. All the inputs in $R_i$ will be fixed, while
	the inputs in $I_i\setminus R_i$ are alive (but correlated by $y_i$). Then, for every $j \in I_i$,
	it picks a u.a.r. bit $b_j$. (Call this string $B_i$.) 
	This is interpreted as follows. For every $j \in R_i$, $x_j$ is fixed to $b_j$.
	For every $j \in I_i\setminus R_i$, $x_j$ is set to $y_i b_j$.  
	The randomness of $T_i$ can therefore be represented
	as independently choosing $R_i$ and $B_i$.
	
	Consider some non-empty $S \subseteq I_i$. We have
	
	\begin{equation*}
	\prod_{j \in S} x_j = \prod_{j \in S \cap R_i} b_j \prod_{j \in S \setminus R_i} b_j y_i
	= y_i^{|S\setminus R_i|} \prod_{j \in S} b_j \text{.}
	\end{equation*}
	The expected value of the Fourier basis function is (as expected) zero. Recall that $S$ is non-empty and so
	\begin{equation} \label{eq:coeff}
	\EX_{T_i}\left[\EX_y\left[\prod_{j \in S} x_j\right]\right] = \EX_{R_i, B_i}\left[\EX_y\left[y_i^{|S\setminus R_i|} \prod_{j \in S} b_j\right]\right] 
	= \EX_{R_i}\left[\EX_y\left[y_i^{|S\setminus R_i|}\right]\right] \cdot \EX_{B_i}\left[\prod_{j \in S} b_j\right] = 0 \text{.}
	\end{equation}
	If $|S\setminus R_i|$ is even, then $\prod_{j \in S} x_j$
	is \emph{independent} of $y$. Then, $\EX_y\left[\prod_{j \in S} x_j\right]^2 = 1$.
	If $|S\setminus R_i|$ is odd, then $\prod_{j \in S} x_j$ is linear in $y_i$
	and $\EX_y\left[\prod_{j \in S} x_j\right] = 0$. Thus,
	\begin{equation} \label{eq:square}
	\EX_{T_i}\left[\EX_y\left[\prod_{j \in S} x_j\right]^2\right] = \Pr_{R_i}\left[\textrm{$|S\setminus R_i|$ is even}\right] = 1/2 \text{.}
	\end{equation}
	%
	
	\noindent Let us write out the Fourier expansion of $f$:
	$$    f(x)  = \sum_{S \subseteq [d\log n]} \widehat{f}(S) \cdot \chi_S(x) =  \sum_{\substack{S = S_1 \cup \ldots \cup S_d \\ \forall i, S_i \subseteq I_i}} \widehat{f}(S) \prod_{i \in [d]} \prod_{x_j \in S_i} x_j  \text{.}$$
	
	\noindent Let us write an expression for the square of the zeroth Fourier coefficient of the restriction:
	\begin{align} \label{eq:fourier_square}
		\EX_{\bT}\left[\widehat{f_{\bT}}(\emptyset)^2\right]  =  \EX_{\bT}\left[\left(\sum_{S \subseteq [d\log n]} \widehat{f}(S) \EX_y[\chi_S(x)]\right)^2\right] \text{.}
	\end{align}
	\noindent We stress that the choice of $x$ inside the expectations depend on $y$ (or $y'$) in the manner described
	before \Eqn{coeff}. Expanding the squared sum in \Eqn{fourier_square} and applying linearity of expectation, we get
	\begin{eqnarray} \label{eq:var_main}
	\EX_{\bT}\left[\widehat{f_{\bT}}(\emptyset)^2\right] & = & \EX_{\bT}\left[\sum_{S} \widehat{f}(S)^2 \EX_y\left[\chi_S(x)\right]^2 
	+ \sum_{S,T: S \neq T} \widehat{f}(S) \widehat{f}(T) \EX_y\left[\chi_S(x)\right] \EX_y\left[\chi_T(x)\right] \right] \nonumber \\
	& = & \sum_{S} \widehat{f}(S)^2 \EX_{\bT}\left[\EX_y[\chi_S(x)]^2\right]
	+ \sum_{S,T: S \neq T} \widehat{f}(S) \widehat{f}(T) \EX_{\bT}\left[\EX_y[\chi_S(x)] \EX_y[\chi_T(x)]\right] \text{.}
	\end{eqnarray}
	
	\noindent We will write $S = S_{i_1} \cup S_{i_2} \cdots \cup S_{i_k}$, where all $S_{i_r}$'s are non-empty.
	We deal with the first term of \Eqn{var_main}, using \Eqn{square} as follows:
	
	\begin{align} \label{eq:var_first_term}
		\EX_{\bT}\left[\EX_y[\chi_S(x)]^2\right] = \EX_{\bT}\left[\EX_y\left[\prod_{\ell \leq k} \prod_{j \in S_{i_\ell}} x_j\right]^2\right]
		= \prod_{\ell \leq k}\EX_{\bT}\left[\EX_y\left[\prod_{j \in S_{i_\ell}} x_j\right]^2\right] = 1/2^k \text{.}
	\end{align}
	The cross terms will be zero, using calculations analogous for \Eqn{coeff} (which is not directly used).
	We write $S = S_1 \cup \cdots \cup S_d$, where some of these may be empty. We deal with the second term of \Eqn{var_main} as follows:
	\begin{eqnarray} \label{eq:var_second_term}
	\EX_{\bT}[\EX_y[\chi_S(x)] \EX_y[\chi_T(x)]] & = & \EX_{\bT}\left[ \EX_y\left[\prod_{i \in [d]} \prod_{j \in S_i} x_j\right]
	\EX_y\left[\prod_{i \in [d]} \prod_{j \in T_i} x_j\right] \right] \nonumber \\
	& = & \prod_{i \in [d]} \EX_{R_i, B_i}\left[ \EX_{y_i}\left[ y^{|S_i\setminus R_i|}_i \prod_{j \in S_i} b_j\right]
	\EX_{y_i}\left[ y^{|T_i\setminus R_i|}_i \prod_{j \in T_i} b_j\right]\right] \nonumber \\
	& = & \prod_{i \in [d]} \EX_{R_i}\left[ \EX_{y_i}\left[ y^{|S_i\setminus R_i|}_i \right]
	\EX_{y_i}\left[ y^{|T_i\setminus R_i|}_i\right] \EX_{B_i} \left[\prod_{j \in S_i \Delta T_i} b_j\right]\right] \text{.}
	\end{eqnarray}
	
	
	
	There must exist some $i$ such that $S_i \Delta T_i \neq \emptyset$. For that $i$,
	$\EX_{B_i} \left[\prod_{j \in S_i \Delta T_i} b_j\right] = 0$, and thus for $S \neq T$,
	$\EX_{\bT}[\EX_y[\chi_S(x)] \EX_y[\chi_T(x)]] = 0$. Finally, plugging \Eqn{var_first_term} and \Eqn{var_second_term} into \Eqn{var_main} yields
	$$ \EX_{\bT}\left[\widehat{f_{\bT}}(\emptyset)^2\right] \leq \widehat{f}(\emptyset)^2 + \sum_{S \neq \emptyset} \widehat{f}(S)^2/2 
	= 1 - \var(f) + \var(f)/2 = 1-\var(f)/2 \text{.}$$
	Recall $\var(f_{\bT}) = 1-\widehat{f_{\bT}}(\emptyset)^2$. Thus, we rearrange to get $\EX_{\bT}[\var(f_{\bT})] = \EX_{\bT}\left[1-\widehat{f_{\bT}}(\emptyset)^2\right] \geq \var(f)/2$.
\end{proof} 

\subsection*{Acknowledgments}
We would like to thank the anonymous reviewers who have given constructive comments and pointed us to relevant material. In particular we would like to thank an anonymous reviewer who suggested the use of Levin's work investment strategy in \Sec{tester}.

\bibliography{derivative-testing}
\bibliographystyle{alpha}



\appendix

\end{document}